\documentclass[12pt]{article}
\usepackage[sectionbib]{natbib}
\usepackage{fancyheadings,rotating}
\usepackage{sectsty,secdot}
\usepackage{amsmath,amssymb,amsfonts,amsthm,bm,booktabs,array,graphicx,multirow,longtable}
\usepackage{algorithm,algpseudocode}

\usepackage[hidelinks]{hyperref}
\hypersetup{pdftitle={Byzantine-Robust Sparse Regression with Missing Covariates},pdfauthor={Xiaofei Wu, Rongmei Liang and Jian Qing Shi}}
\sectionfont{\fontsize{12}{14pt plus.8pt minus .6pt}\selectfont}
\subsectionfont{\fontsize{12}{14pt plus.8pt minus .6pt}\selectfont}
\numberwithin{equation}{section}
\newtheorem{theorem}{Theorem}
\newtheorem{lemma}{Lemma}
\newtheorem{proposition}{Proposition}
\newtheorem{corollary}{Corollary}
\theoremstyle{remark}

\newcommand{\E}{\mathbb E}
\newcommand{\Pp}{\mathbb P}
\newcommand{\norm}[1]{\left\|#1\right\|}
\newcommand{\supp}{\operatorname{supp}}
\newcommand{\diag}{\operatorname{diag}}
\newcommand{\soft}{\operatorname{soft}}
\newcommand{\thstar}{\theta^\star}
\newcommand{\ghat}{\widehat g}
\newcommand{\cH}{\mathcal H}
\newcommand{\cS}{\mathcal S}
\newcommand{\cD}{\mathcal D}
\newcommand{\kktr}{\mathcal K}
\newcommand{\ind}{\mathbf 1}

\AddToHook{env/figure/begin}{\renewcommand{\baselinestretch}{1}\small\selectfont}

\begin{document}

\newcommand{\maindisplays}{\setlength{\abovedisplayskip}{6pt plus2pt minus1pt}\setlength{\belowdisplayskip}{6pt plus2pt minus1pt}\setlength{\abovedisplayshortskip}{0pt plus2pt}\setlength{\belowdisplayshortskip}{6pt plus2pt minus1pt}}
\AddToHook{cmd/normalsize/after}{\maindisplays}
\maindisplays
\renewcommand{\baselinestretch}{1.15}
\fontsize{12}{14pt plus.8pt minus .6pt}\selectfont
\markright{\hbox{\footnotesize\rm Preprint}\hfill}
\markboth{\hfill{\footnotesize\rm WU, LIANG AND SHI} \hfill}
{\hfill {\footnotesize\rm ROBUST SPARSE REGRESSION WITH MISSING DATA} \hfill}
\renewcommand{\thefootnote}{}
\setcounter{page}{1}
\vspace{0.8pc}
\begin{center}
{\large\bf Byzantine-Robust Sparse Regression\\
with Missing Covariates\par}
\vspace{.4cm}
{Xiaofei Wu$^{1}$, Rongmei Liang$^{2}$ and Jian Qing Shi$^{3*}$\par}
\vspace{.25cm}
{\it $^{1}$Yunnan University\par}
{\it $^{2}$Southern University of Science and Technology\par}
{\it $^{3}$Beijing Normal-Hong Kong Baptist University\par}
\vspace{.3cm}
\end{center}
\fontsize{9}{11.5pt plus.8pt minus .6pt}\selectfont
\begin{quotation}
\noindent {\it Abstract:}
We study distributed sparse linear regression with missing covariates and Byzantine workers. Correcting for missing observations can make the regression loss nonconvex, while Byzantine reports can distort gradient updates and impair estimation. A communication-efficient method is proposed that combines robust gradient aggregation with folded-concave regularization. Under appropriate regularity conditions, after $t$ communication rounds the estimation error is bounded by a transient component determined by the initial error and decaying geometrically with $t$, plus a residual component reflecting finite-sample variability and Byzantine contamination. Simulations and two real-data examples evaluate estimation accuracy, variable selection, and communication costs.

\vspace{9pt}
\noindent {\it Key words and phrases:}
Byzantine robustness, distributed regression, missing covariates, nonconvex optimization, sparse learning.
\par
\end{quotation}
\fontsize{12}{14pt plus.8pt minus .6pt}\selectfont

\def\thefigure{\arabic{figure}}
\def\thetable{\arabic{table}}
\renewcommand{\theequation}{\thesection.\arabic{equation}}

\section{Introduction}

Privacy and data-governance constraints, limited bandwidth, and large data volumes can make centralizing observations from medical institutions, sensor nodes, or mobile devices impractical~\citep{predd2006,mcmahan2017,rieke2020}. Distributed sparse regression addresses this setting by allowing each worker to retain its local observations while a trusted master receives gradient summaries and estimates a sparse coefficient vector~\citep*{jordan2019}. In these systems, covariate values may be missing because of sensor dropout, intermittent communication, skipped survey or clinical measurements, or incomplete administrative records~\citep{loh2012}. Participating nodes may also be faulty or compromised because of sensor malfunction, federated-client compromise, or corrupted institutional data pipelines. Such a Byzantine node may return arbitrary, coordinated, or adaptive messages instead of a valid local gradient~\citep{yin2018,tu2021,tu2023}. Missingness and Byzantine behavior may occur simultaneously. We study this combined setting in which a trusted master coordinates the computation without collecting the workers' raw observations.

Missing covariates and Byzantine workers create two interacting sources of error in distributed sparse regression. Missing covariates make the local products needed for least-squares estimation incomplete. Under missing completely at random (MCAR), dividing each observed product by its observation probability restores the corresponding complete-data moment in expectation~\citep{loh2012}. The correction, however, can give the quadratic loss negative curvature and make it unbounded below~\citep{loh2012}. Byzantine workers add a separate reliability problem: they may send arbitrary gradient reports~\citep{yin2018,tu2023}, so the master cannot regard every update as a valid estimate of the common gradient. Robust aggregation can reduce the influence of such reports~\citep{yin2018,tu2021}, but it does not remove the curvature induced by missingness. These two difficulties must be handled jointly: local moments must be corrected while limiting the influence of arbitrary worker reports on the master's updates.

For distributed learning with nonconvex losses, \citet*{liu2025} establish sparse optimization paths and geometric convergence under honest communication, including in missing-covariate regression. \citet*{tu2023} allow Byzantine workers in sparse M-estimation, but their method uses convex least-squares subproblems. Addressing both difficulties requires control of sparse iterates and estimation error when the loss can have negative curvature and the reported gradients can be corrupted.

We extend the communication-efficient proximal-homotopy framework of \citet*{liu2025} by combining distributed gradient correction~\citep*{jordan2019} with coordinatewise trimming~\citep{yin2018}. The master solves weighted $\ell_1$ subproblems along a decreasing penalty path, using the smoothly clipped absolute deviation (SCAD) penalty of \citet{fan2001} or the minimax concave penalty (MCP) of \citet{zhang2010}. Each round exchanges coefficient and gradient vectors; all inner updates remain at the master (Figure~\ref{fig:framework}). A uniform bound on the surrogate gradient accounts for arbitrary worker reports and differences between local curvature matrices. This bound links robust aggregation to the sparse-path argument for an indefinite loss and yields a geometrically decreasing initial-error term plus a statistical residual under the stated conditions. Section~\ref{sec:missingness} quantifies how observation probabilities and local sample sizes enter these bounds.

The remainder of the paper is organized as follows. Section~\ref{sec:related} presents related work and preliminaries. Sections~\ref{sec:algorithm} and~\ref{sec:missingness} develop the method and theory. Section~\ref{sec:experiments} reports experiments, followed by discussion in Section~\ref{sec:conclusion}. Complete proofs are given in Sections~\ref{supp-sec:mainproofs}--\ref{supp-sec:additional} of the Supplementary Material, which also contains implementation details and additional experimental results.

\begin{figure}[H]
\centering
\includegraphics[width=\textwidth,height=.72\textheight,keepaspectratio]{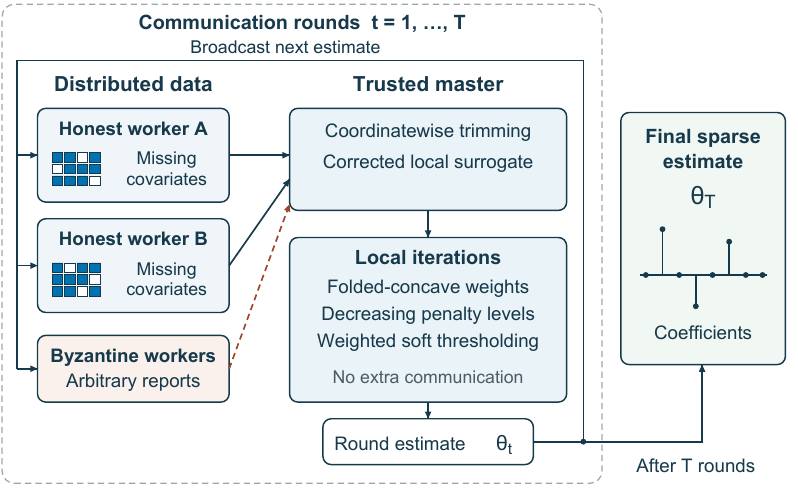}
\caption{Distributed sparse regression with missing covariates and Byzantine workers. Empty grid cells denote missing values. The master trims gradient reports and solves weighted $\ell_1$ subproblems locally. Intermediate estimates start the next round; after $T$ rounds, the procedure returns $\theta_T$. Inner iterations require no worker communication.}
\label{fig:framework}
\end{figure}

\section{Related Work and Preliminaries}\label{sec:related}
\subsection{Related Work}
\citet{loh2012} develop corrected regression for noisy or missing covariates, with statistical and optimization guarantees for constrained estimators. Their work shows that adjusting sample moments for missing observations can produce a nonconvex quadratic loss. For sparse estimation with folded-concave penalties such as SCAD and MCP, local linear approximation replaces the penalty by a weighted $\ell_1$ penalty~\citep{fan2014}, while path-following methods solve successive regularized problems as the penalty level decreases~\citep*{wang2014}. To reduce communication in distributed estimation, \citet*{jordan2019} construct local surrogate losses using gradient information from other machines, allowing local optimization steps between communication rounds. When worker reports may be arbitrary, robust aggregation limits their influence on these updates~\citep{yin2018,tu2021,zhou2021}. \citet{tu2021} also discuss adding Lasso, SCAD, or MCP penalties to a robust surrogate loss.

\citet*{liu2025} analyze weighted $\ell_1$ subproblems with decreasing penalty levels and establish sparse, geometrically convergent inner iterates under local curvature conditions along sparse directions. Their framework includes missing-covariate regression under honest communication. For convex-loss sparse M-estimation, \citet*{tu2023} use convex least-squares subproblems and establish estimation and support-recovery guarantees. \citet*{wang2026} combine local $\ell_1$-regularized robust estimation with robust server aggregation. The present setting requires sparse optimization guarantees when corrected losses may have negative curvature and worker reports may be arbitrary.

For data without attacks, \citet*{yang2025} derive worst-case lower bounds relating Byzantine tolerance to accuracy. This motivates our paired comparison of trimming levels in Section~\ref{sec:experiments}. Supplementary Table~\ref{supp-tab:position} summarizes the connections to prior work.

\subsection{Model and Attacks}\label{sec:model}\label{sec:dataattacks}
The system consists of one trusted master (node 1) and $m-1$ workers. Honest node $j$ holds $n_j$ observations; the sample sizes are fixed positive integers and may differ across nodes. Let $\cH$ and $\mathcal B$ denote the fixed sets of honest and Byzantine nodes, respectively, with $1\in\cH$. Write $m_{\mathrm H}=|\cH|$ and $m_{\mathrm B}=|\mathcal B|$, so that $m=m_{\mathrm H}+m_{\mathrm B}$ and $N_{\mathrm H}=\sum_{j\in\cH}n_j$. Thus $m_{\mathrm H}$ counts honest nodes and $N_{\mathrm H}$ counts their pooled observations. The honest set is fixed independently of the observations.

Honest observations are IID from the linear model $Y=X^\top\thstar+\varepsilon$, with $\E(X\varepsilon)=0$. To focus on slope estimation, we use a centered population model, $\E X=0$ and $\E Y=0$, so no intercept appears. Let $\cS=\supp(\thstar)$ and let $s\ge1$ be an integer upper bound on $|\cS|$. The response is fully observed. For the covariates, define the mask $M_X\in\{0,1\}^p$ by $M_{X,l}=1$ when $X_l$ is observed and $0$ otherwise. Each record contains $(Y,M_X,Z)$, where $Z=M_X\odot X$ and $\odot$ denotes coordinatewise multiplication. Thus missing entries are stored as zeros in $Z$ and identified by $M_X$. Mask coordinates are independent Bernoulli$(q)$, independent of $(X,\varepsilon)$, with known $q\in(0,1]$.

In round $t$, the master broadcasts $u_t=\theta_{t-1}$ and receives one vector from each worker. Including its own gradient, the messages are
\begin{equation*}
 v_j^{(t)}=
 \begin{cases}
  g_j(u_t), & j\in\cH,\\
  a_j^{(t)}\in\mathbb R^p, & j\in\mathcal B,
 \end{cases}
\end{equation*}
where $g_j$ is the corrected gradient defined below. Byzantine workers may coordinate and choose $a_j^{(t)}$ after observing the honest training data, current iterate and preceding messages. Their vectors have no magnitude or distribution restriction. The attack changes worker messages; the trusted master's data and computation remain intact. The master trims $b$ values from each tail of every coordinate, using a budget satisfying $m_{\mathrm B}\le b<m/2$. The main experiments use Gaussian replacement and reports chosen within the central range of honest gradients to evade trimming (Section~\ref{sec:experiments}). The Supplementary Material also examines amplified sign reversal of the honest mean gradient.

\subsection{Missingness Correction}\label{sec:correction}
At node $j$, let $Z_j\in\mathbb R^{n_j\times p}$ contain the observed covariate vectors as rows and let $Y_j\in\mathbb R^{n_j}$ collect the responses. Following the MCAR moment correction of \citet[Example~2]{loh2012}, define
\begin{equation*}
A_j=\frac{Z_j^\top Z_j}{n_j},\qquad \gamma_j=\frac{Z_j^\top Y_j}{n_jq},\qquad
\Gamma_j=\frac{A_j-(1-q)\diag(A_j)}{q^2}.
\end{equation*}
Here $\diag(A_j)$ is the diagonal matrix with the same diagonal as $A_j$. Then $\E\Gamma_j=\Sigma=\E XX^\top$ and $\E\gamma_j=\Sigma\thstar$. The corrected loss and gradient are
\begin{equation*}
 L_j(x)=\tfrac12x^\top\Gamma_jx-\gamma_j^\top x,\qquad g_j(x)=\Gamma_jx-\gamma_j.
\end{equation*}
At $q=1$, $\Gamma_j=A_j$ is positive semidefinite and $L_j$ is convex. For $q<1$, the correction can introduce negative curvature. In particular, when $p>n_j$ and $(A_j)_{ll}>0$ for every $l$, any nonzero $v\in\ker Z_j$ satisfies
\begin{equation*}
v^\top\Gamma_jv=-\frac{1-q}{q^2}\sum_l(A_j)_{ll}v_l^2<0.
\end{equation*}
Hence $L_j$ is nonconvex and $L_j(tv)\to-\infty$ as $|t|\to\infty$. Adding a weighted $\ell_1$ penalty does not bound it below, because the penalty grows only linearly along this direction. Positive curvature may still hold on small coordinate sets; the sparse-path analysis in Section~\ref{sec:algorithm} uses this property to control the optimization iterates.

\section{Algorithm and Theory}\label{sec:algorithm}
\subsection{Robust Surrogate}\label{sec:surrogate}
Fix a communication round and suppress its index, writing $u$ for the broadcast estimate. Coordinatewise trimming removes the largest and smallest $b$ messages and averages the rest, giving $\ghat(u)$ for that round's reports. Following distributed gradient correction~\citep*{jordan2019} and its robust counterpart~\citep{tu2021}, form
\begin{equation*}
\ell_u(x)=L_1(x)-\{g_1(u)-\ghat(u)\}^\top x.
\end{equation*}
This construction gives $\nabla\ell_u(u)=\ghat(u)$ while retaining $\nabla^2\ell_u(x)=\Gamma_1$. Thus workers transmit gradient vectors, and the master reuses its local quadratic curvature for all inner updates. Substituting $g_1(u)=\Gamma_1u-\gamma_1$ cancels the two linear terms involving $\gamma_1$, yielding
\begin{equation*}
 c(u)=\Gamma_1u-\ghat(u),\qquad \ell_u(x)=\tfrac12x^\top\Gamma_1x-c(u)^\top x.
\end{equation*}
Response moments enter through the reports supplied to the aggregator: each honest input is $g_j(u)=\Gamma_ju-\gamma_j$, including the master's input. Because the linear correction leaves $\Gamma_1$ unchanged, the surrogate can retain the negative curvature described in Section~\ref{sec:correction}.

To connect this surrogate to estimation error, write $g_\cH=m_{\mathrm H}^{-1}\sum_{j\in\cH}g_j$, $\Gamma_\cH=m_{\mathrm H}^{-1}\sum_{j\in\cH}\Gamma_j$, and $E(u)=\ghat(u)-g_\cH(u)$. Its gradient at the true coefficient satisfies
\begin{equation}
 \nabla\ell_u(\thstar)=g_\cH(\thstar)+E(u)+(\Gamma_\cH-\Gamma_1)(u-\thstar).
\label{eq:score}
\end{equation}
The first term is the sampling error of the honest-node mean gradient; the second is the discrepancy introduced by trimming and arbitrary reports. The third couples the current estimation error with the difference between honest-average and master curvature. The conditions in Section~\ref{sec:recursion} give \mbox{$\norm{\nabla\ell_u(\thstar)}_\infty\le A+B\norm{u-\thstar}_2$} uniformly over the relevant sparse region. This bound determines the penalty level and leads to the error recursion in Theorem~\ref{thm:outer}. Here $A$ bounds the score at $u=\thstar$, and $B$ controls propagation of the current coefficient error.

Identity~\eqref{eq:score} holds for every realized dataset and message collection, including data-dependent iterates $u=\theta_{t-1}$. The common sparse-region bound in Section~\ref{sec:recursion} therefore covers reused local data and adaptive Byzantine reports at every communication round.

Since $\cH$ is unknown, $g_\cH$ serves only as an analytical reference, with equal node weights matching the trimmed mean. Under the common distribution, $\E g_j(x)=\Sigma(x-\thstar)$ for every fixed $x$, regardless of $n_j$. Equal and sample-size weights therefore share a population target but differ in variance, quantified through the effective sample size in Section~\ref{sec:concentration}.

\subsection{Iterative Updates}\label{sec:boundpolicy}
Each communication round constructs one robust surrogate. The master updates the coefficients along a decreasing penalty path before broadcasting the new estimate. We use $t$ for communication rounds, $r$ for penalty levels within a round, and $k$ for local updates at a fixed level. The bound-based policy sets penalties and a fixed step from supplied error and curvature bounds; Theorem~\ref{thm:outer} analyzes this policy. The main performance comparisons use the data-tuned policy, which starts from a local Lasso estimate and uses validation-selected penalties and backtracking. Its schedule and stopping rules are specified in Supplementary Section~\ref{supp-sec:implementation}; Supplementary Section~\ref{supp-sec:endpoint} gives a conditional bound for its returned estimates.

At round $t$, set $u=\theta_{t-1}$ and sort coordinate $l$ of the received vectors, including the master's gradient, as $v_{(1),l}^{(t)}\le\cdots\le v_{(m),l}^{(t)}$. The aggregate is
\begin{equation*}
 [\ghat(u)]_l=\frac{1}{m-2b}\sum_{i=b+1}^{m-b}v_{(i),l}^{(t)},\qquad l=1,\ldots,p.
\end{equation*}
The resulting surrogate has gradient $\nabla\ell_u(x)=\Gamma_1(x-u)+\ghat(u)$. Thus all local updates reuse the same reports; workers send new gradients only in the next round.

We first specify the bound-based policy. Initialize $\theta_0=0$ with $\norm{\thstar}_2\le R_0$. Given a sparsity bound $s$, score bounds $A>0$, $B\ge0$, and sparse curvature bounds $0<\kappa\le M$, let $R=R_{t-1}$ be the current error bound. The target and starting penalty levels are
\begin{equation}
 \lambda_t=8(A+BR),\quad
 \Lambda_t=\max\{\lambda_t,R/(c_h\sqrt s),\kappa R/(2\sqrt s)\}.
 \label{eq:levels}
\end{equation}
Here $a$ is the penalty's shape parameter: $c_h=(a+1)/2$ for SCAD with $a>2$, and $c_h=a/2$ for MCP with $a>1$. The target dominates the surrogate score bound.

Let $p_\Lambda$ be the chosen SCAD or MCP penalty. Freeze $h_l=p'_{\Lambda_t}(|u_l|)/\Lambda_t\in[0,1]$ for this round. A coordinate is weakly penalized when $h_l<1/2$. The starting level limits such coordinates outside $\cS$ to at most $s$ and ensures $R\le2\sqrt s\Lambda_t/\kappa$, providing a sufficiently accurate start for the penalty path. Using these fixed weights, set
\begin{equation*}
 \lambda_{t,0}=\Lambda_t,\qquad
 \lambda_{t,r+1}=\max\{\lambda_t,\eta\lambda_{t,r}\},\qquad
 w_{t,r}=\lambda_{t,r}h,
\end{equation*}
where $1/2\le\eta<1$ ($12/13$ in code). Let $J_t$ be the first level reaching $\lambda_{t,J_t}=\lambda_t$. At level $r$, write $w=w_{t,r}$ and use the weighted objective $F_{u,w}(x)=\ell_u(x)+\sum_lw_l|x_l|$. These weights linearize $(\lambda_{t,r}/\Lambda_t)p_{\Lambda_t}$ at $u$; their relative values stay fixed while their common scale decreases.

Let $\widehat x_r$ denote the returned endpoint at level $r$. Start the first level at $u$ and each later level at the previous endpoint:
\begin{equation*}
 x_0^{(0)}=u,\qquad x_r^{(0)}=\widehat x_{r-1}\quad(r\ge1).
\end{equation*}
At a fixed level, suppress $t,r$ and apply
\begin{equation}
 \begin{aligned}
 z^{(k)}&=x^{(k)}-\{\Gamma_1(x^{(k)}-u)+\ghat(u)\}/M,\\
 x^{(k+1)}&=T_w(x^{(k)})=\soft_{w/M}(z^{(k)}).
 \end{aligned}\label{eq:update}
\end{equation}
The first line is a surrogate gradient step of size $1/M$. The second applies $[\soft_{w/M}(z)]_l=\operatorname{sgn}(z_l)\max\{|z_l|-w_l/M,0\}$, setting small coordinates to zero and shrinking the others. Both operations are performed at the master without communication.

The Karush--Kuhn--Tucker (KKT) residual measures the largest coordinatewise violation of first-order stationarity:
\begin{equation*}
 \kktr_w(x)=\min_{\xi\in\partial\norm{x}_1}
 \norm{\nabla\ell_u(x)+w\odot\xi}_\infty.
\end{equation*}
A nonzero coordinate requires $[\nabla\ell_u(x)]_l+w_l\operatorname{sgn}(x_l)=0$; a zero coordinate requires $|[\nabla\ell_u(x)]_l|\le w_l$. Thus the check includes inactive coordinates. Define the support bound
\begin{equation}
 K=2s+\left\lceil (32/3)^2(M/\kappa-1)^2s\right\rceil.\label{eq:K}
\end{equation}
Theorem~\ref{thm:outer} guarantees this bound without explicitly truncating the iterates. Stop the current level, with $\lambda=\lambda_{t,r}$, when
\begin{equation}
 \kktr_w(x)\le\epsilon_\lambda,\qquad 0<\epsilon_\lambda\le\min\{\lambda/26,(\lambda/4)\sqrt{s/(2K)}\}.
\label{eq:tol}
\end{equation}
Supplementary Lemma~\ref{supp-lem:path} identifies an invariant neighborhood defined by $\norm{x}_0\le K$ and $\norm{x-\thstar}_2\le4\sqrt s\lambda/\kappa$, on which $T_w$ has a unique fixed point $x_\lambda$. Within this neighborhood, the stopping rule gives $\norm{x-x_\lambda}_2\le\sqrt{2K}\epsilon_\lambda/\kappa$. The tolerance therefore limits numerical error to $\sqrt s\lambda/(4\kappa)$. At the final penalty level, set
\begin{equation*}
 \theta_t=\widehat x_{J_t},\qquad R_t=\frac{2\sqrt s\lambda_t}{\kappa}.
\end{equation*}
The final endpoint is broadcast in the next communication round. Algorithm~\ref{alg:robust} summarizes these steps.

\begin{algorithm}[!htb]
\renewcommand{\baselinestretch}{1.15}\selectfont
\caption{Robust sparse iteration with bound-based tuning}\label{alg:robust}
\begin{algorithmic}[1]
\Require Bounds $A,B,R_0,\kappa,M,s$; rounds $T$; budget $b$; ratio $\eta\in[1/2,1)$
\State Set $\theta_0=0$; compute $K$ from~\eqref{eq:K} \Comment{Master}
\State Set $\rho=16\sqrt sB/\kappa$; stop if $\rho\ge1$ \Comment{Master}
\For{$t=1,\ldots,T$}
  \State Broadcast $u=\theta_{t-1}$ \Comment{Master}
  \State Return $g_j(u)=\Gamma_ju-\gamma_j$ \Comment{Honest worker $j\ge2$}
  \State Include $g_1(u)$, trim reports, and form $c(u)$ \Comment{Master}
  \State Set $\lambda_t,\Lambda_t$ by~\eqref{eq:levels}; construct $\{\lambda_{t,r}\}_{r=0}^{J_t}$ \Comment{Master}
  \State Freeze $h_l=p'_{\Lambda_t}(|u_l|)/\Lambda_t$; set $x=u$ \Comment{Master}
  \For{$r=0,\ldots,J_t$} \Comment{Master; penalty levels}
    \State Set $\lambda=\lambda_{t,r}$, $w=\lambda h$, and start at $x$
    \State Apply~\eqref{eq:update} until~\eqref{eq:tol} holds; store the endpoint as $x$
  \EndFor
  \State Set $\theta_t=x$ and $R_t=2\sqrt{s}\lambda_t/\kappa$ \Comment{Master}
\EndFor
\State \Return $\theta_T$ \Comment{Master}
\end{algorithmic}
\end{algorithm}

The check $\rho<1$ requires small error propagation through $B$ relative to sparse curvature $\kappa$ (Section~\ref{sec:recursion}). Algorithm~\ref{alg:robust} stops if it fails; more rounds or tighter KKT tolerances do not change $\rho$.

The local matrix-vector product can be evaluated as
\begin{equation*}
 \Gamma_jx=\frac{Z_j^\top(Z_jx)}{n_jq^2}
 -\frac{1-q}{q^2}\diag(A_j)x.
\end{equation*}
This costs $O(n_jp)$ operations and $O(n_jp+p)$ storage, avoiding a $p\times p$ Gram matrix. Each round sends a $p$-vector of coefficients to each worker and receives a $p$-vector of gradients. At 8 bytes per entry (64-bit floating point), these vectors total $16(m-1)p$ bytes per round. Inner updates stay at the master. Coordinatewise sorting costs $O(pm\log m)$. 

\subsection{Sparse Convergence}\label{sec:sparsepath}\label{sec:recursion}
The corrected loss may be indefinite, but positive curvature on sparse coordinate sets can keep the updates stable. The next result combines this path control with an error bound across communication rounds. Let $\mathcal E$ be the honest-data event on which fixed nonnegative bounds $a_H,\tau,d$ satisfy
\begin{gather*}
 \norm{g_\cH(\thstar)}_\infty\le a_H,\qquad
 \max_{j\in\cH}\norm{g_j(\thstar)-g_\cH(\thstar)}_\infty\le\tau,\\
 \max_{j\in\cH}\norm{(\Gamma_j-\Gamma_\cH)e}_\infty\le d\norm{e}_2
 \quad\text{whenever }\norm{e}_0\le K+s.
\end{gather*}
These bounds control the average honest score, its variation across nodes, and curvature differences along sparse errors. Set $\omega=3b/(m-2b)$, $A=a_H+\omega\tau$, and $B=(1+\omega)d$. The trimmed-mean bound and~\eqref{eq:score} then give
\begin{equation*}
 \norm{\nabla\ell_u(\thstar)}_\infty\le A+B\norm{u-\thstar}_2
\end{equation*}
for every $K$-sparse $u$ and every permitted collection of reports. The event depends only on honest data, so the same bound applies as the iterates and attack reports change. Section~\ref{sec:concentration} gives choices with probability at least $1-\delta$.

\begin{theorem}[Sparse convergence]\label{thm:outer}
On $\mathcal E$, suppose $m_{\mathrm B}\le b<m/2$, $A>0$, and $\rho=16\sqrt s B/\kappa<1$. For $K$ in~\eqref{eq:K}, assume every principal submatrix of $\Gamma_1$ of order at most $\min(4K,p)$ has eigenvalues in $[\kappa,M]$, with $0<\kappa\le M$. Run the bound-based policy of Section~\ref{sec:boundpolicy} from $\norm{\theta_0}_0\le K$ and $\norm{\theta_0-\thstar}_2\le R_0$, using its frozen weights, penalty schedule, step $1/M$, and stopping rule~\eqref{eq:tol}. Then:
\begin{enumerate}
\setlength{\itemsep}{0pt}\setlength{\parsep}{0pt}
\item[(i)] Every inner iterate has at most $K$ nonzero coordinates. Each communication round has finitely many penalty levels, and every level reaches its prescribed KKT tolerance in finitely many updates.
\begin{samepage}
\item[(ii)] With \mbox{$R_t=16\sqrt s(A+BR_{t-1})/\kappa$}, each returned estimate satisfies $\norm{\theta_t-\thstar}_2\le R_t$ and
\begin{equation}
 \norm{\theta_t-\thstar}_2\le\rho^tR_0+
 \frac{16\sqrt s A}{\kappa(1-\rho)}(1-\rho^t).
\label{eq:recursion}
\end{equation}
\end{samepage}
\end{enumerate}
Both conclusions hold on the same event for all requested rounds and arbitrary adaptive reports from the fixed Byzantine set.
\end{theorem}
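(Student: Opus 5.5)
The plan is an induction over communication rounds, using the sparse-path analysis (Supplementary Lemma~\ref{supp-lem:path}) inside each round.

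\textbf{Induction hypothesis.} At the start of round $t$, the broadcast point $u=\theta_{t-1}$ satisfies $\norm{u}_0\le K$ and $\norm{u-\thstar}_2\le R=R_{t-1}$.

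\textbf{Score bound.} On $\mathcal E$, apply identity~\eqref{eq:score}, the trimmed-mean bound $\norm{E(u)}_\infty\le\omega\tau+\omega d\norm{u-\thstar}_2$ and the curvature-difference bound. The error $u-\thstar$ has at most $K+s$ nonzeros, so these give $\norm{\nabla\ell_u(\thstar)}_\infty\le A+BR\le\lambda_t/8$. Since this bound holds for every realized report vector, the adaptivity of the Byzantine messages is irrelevant once we are on $\mathcal E$.

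\textbf{Good start for the path.} The choice of $\Lambda_t$ in~\eqref{eq:levels} gives two facts.
\begin{itemize}
\item $R\le2\sqrt s\Lambda_t/\kappa$, so $u$ lies in the neighborhood $\norm{x-\thstar}_2\le4\sqrt s\Lambda_t/\kappa$ required at level $r=0$.
\item Any coordinate $l\notin\cS$ with $h_l<1/2$ must satisfy $|u_l|\gtrsim c_h\Lambda_t$ by the SCAD/MCP derivative shape. Then $\norm{u-\thstar}_2\le R\le c_h\sqrt s\Lambda_t$ caps the number of such coordinates at $s$. This is the weighting condition the path lemma needs.
\end{itemize}

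\textbf{Within each level.} The level ratio satisfies $\eta\ge1/2$, and the KKT tolerance~\eqref{eq:tol} gives numerical error at most $\sqrt s\lambda/(4\kappa)$. Together these ensure the endpoint $\widehat x_{r-1}$ lies in the invariant neighborhood for the next level $\lambda_{t,r}$.

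At a fixed level, the restricted eigenvalue bounds on principal submatrices of order up to $4K$ make $T_w$ a contraction on the $K$-sparse neighborhood. The reason is that $x^{(k)}$, $x^{(k+1)}$ and $\thstar$ jointly have support size at most $2K+s\le4K$. The step $1/M$ gives contraction factor $1-\kappa/M$ there.

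Invariance of sparsity is the heart of the argument, and I expect it to be the main obstacle. The plan is the usual counting argument:
\begin{itemize}
\item A coordinate $l\notin\cS$ can be active only if $|z_l|>w_l/M$.
\item Among the strongly penalized coordinates ($h_l\ge1/2$), this forces $|[\Gamma_1(x-\thstar)]_l|$ to exceed roughly $\lambda/4$, after subtracting the score term $\lambda/8$.
\item Summing squares over such coordinates bounds their number by $(M-\kappa)^2\norm{x-\thstar}_2^2/(\kappa\lambda/\ldots)^2$, which produces the $(32/3)^2(M/\kappa-1)^2s$ term in $K$.
\item The remaining $2s$ covers $\cS$ plus the at most $s$ weakly penalized coordinates outside $\cS$.
\end{itemize}
The constants must be checked carefully so that the neighborhood radius $4\sqrt s\lambda/\kappa$ closes.

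\textbf{Finite termination, part (i).} Geometric convergence to the fixed point $x_\lambda$ makes the KKT residual tend to zero. Hence each level stops in finitely many updates. The number of levels $J_t$ is finite because $\eta<1$ and $\lambda_t>0$, using $A>0$.

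\textbf{Part (ii).} At the final level the fixed point satisfies the standard weighted-$\ell_1$ error bound $\norm{x_{\lambda_t}-\thstar}_2\le(7/4)\sqrt s\lambda_t/\kappa$, from restricted strong convexity with the score below $\lambda_t/8$. Adding the numerical error $\sqrt s\lambda_t/(4\kappa)$ gives $\norm{\theta_t-\thstar}_2\le2\sqrt s\lambda_t/\kappa=16\sqrt s(A+BR_{t-1})/\kappa=R_t$. Also $\norm{\theta_t}_0\le K$, which closes the induction.

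The recursion is $R_t=\rho R_{t-1}+16\sqrt sA/\kappa$. Unrolling it gives~\eqref{eq:recursion}.

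All steps use only the event $\mathcal E$ and deterministic properties of trimming. The conclusions therefore hold simultaneously for all rounds and for all adaptive attacks.
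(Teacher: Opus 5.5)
Your proposal is correct in outline and follows the paper's proof essentially step for step: the trimmed-mean score bound on $\mathcal E$, the exceptional set of size at most $2s$ and start radius $2\sqrt s\Lambda_t/\kappa$ from the choice of $\Lambda_t$, chaining of levels through $\eta\ge1/2$ and the tolerance, and the single-stage sparse-path lemma, while your direct restricted-strong-convexity bound $\tfrac74\sqrt s\lambda/\kappa$ on the fixed point is a valid substitute for the paper's constant $9\sqrt2/8$. When writing it out, tighten four details:
\begin{enumerate}
\item Activating $l\notin E$ forces $|[(MI-\Gamma_1)(x-\thstar)]_l|>3\lambda/8$; this is not a statement about $\Gamma_1(x-\thstar)$ with threshold $\lambda/4$.
\item The count must be run on a subset $J$ of exactly $K-2s+1$ such coordinates, so that $|J\cup\supp(x)\cup\cS|\le4K$ and the sparse eigenvalue bound applies.
\item The radius $4\sqrt s\lambda/\kappa$ closes via $\norm{T_w(\thstar)-\thstar}_2\le(9\sqrt2/8)\sqrt s\lambda/M$, which is needed before the fixed point is known to exist.
\item Finite stopping follows from the proximal identity $\kktr_w(x^{(k+1)})\le\norm{(\Gamma_1-MI)(x^{(k+1)}-x^{(k)})}_\infty$, not from convergence to $x_\lambda$ alone, because $\kktr_w$ is discontinuous where coordinates vanish.
\end{enumerate}
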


The single-stage invariance and contraction result is Supplementary Lemma~\ref{supp-lem:path}; Supplementary Section~\ref{supp-sec:outerproof} connects successive penalty levels and rounds. To interpret~\eqref{eq:recursion}, write
\begin{equation*}
 R_\infty=\frac{16\sqrt s A}{\kappa(1-\rho)},\qquad
 R_t=R_\infty+\rho^t(R_0-R_\infty).
\end{equation*}
When $R_0>R_\infty$, the excess of the error bound above its residual level contracts by $\rho$ each round. Sampling variability and Byzantine contamination determine that level. If $R_\infty\ge R_0$, the bound does not guarantee improvement over $R_0$.

To relate $\rho<1$ to the local data, let $\Delta$ bound $\max_{j\in\cH}\max_{l,k}|(\Gamma_j-\Sigma)_{lk}|$, the largest entrywise covariance error across honest nodes. One may then take $d=2\sqrt{K+s}\Delta$ and $B=(1+\omega)d$. Hence it suffices that
\begin{equation*}
 \Delta<\frac{\kappa}{32\sqrt{s(K+s)}(1+\omega)}.
\end{equation*}
Larger local samples reduce $\Delta$; Section~\ref{sec:concentration} quantifies the effects of sample sizes, observation probability, and trimming budget.

If $R_t>0$ and the minimum signal exceeds $2R_t$, thresholding gives $\{l:|\theta_{t,l}|>R_t\}=\cS$ (Supplementary Corollary~\ref{supp-cor:support}). For unthresholded support recovery, see Supplementary Proposition~\ref{supp-prop:oracle}.

\section{Statistical Accuracy}\label{sec:missingness}
Under the MCAR model, the bounds in Theorem~\ref{thm:outer} can be expressed in terms of observation probabilities and sample sizes. We first identify the variability introduced by missing covariates and then state the resulting estimation bound.

\subsection{Score Variance}
Let $\phi_l$ be coordinate $l$ of the corrected gradient at $\thstar$ for one observation. Averaging over the observation masks gives
\begin{equation*}
 \nu_l^2=\E\phi_l^2=\frac{\E(X_l^2\varepsilon^2)}q
 +\frac{1-q}{q^2}\sum_{k\ne l}(\thstar_k)^2\E(X_l^2X_k^2).
\end{equation*}
Since $\E(X\varepsilon)=0$, this is the score variance, including for correlated covariates. The first term comes from response noise; the second is the additional variability from missing predictors. Supplementary Section~\ref{supp-sec:scoreidentities} gives the calculation.

For Gaussian $X$ and independent noise of variance $\sigma^2$, the fourth moment is $\Sigma_{ll}\Sigma_{kk}+2\Sigma_{lk}^2$. With independent standardized features, the formula reduces to $\nu_l^2=\sigma^2/q+(1-q)\norm{\thstar_{-l}}_2^2/q^2$, and hence $\nu_l\le\sigma/\sqrt q+\sqrt{1-q}\norm{\thstar}_2/q$. The missing-predictor term vanishes at $q=1$ and carries the factor $(1-q)/q^2$, compared with $1/q$ for response noise. Supplementary Section~\ref{supp-sec:theoryvariance} reports a numerical check of this identity.

\subsection{Statistical Error}\label{sec:concentration}
Assume jointly sub-Gaussian $X$ with directional scale $K_X$ and sub-Gaussian noise with scale $K_\varepsilon$. Let $n_{\min}=\min_{j\in\cH}n_j$. For equally weighted honest nodes, the effective sample size is $N_{\mathrm{eff}}=m_{\mathrm H}^2/\sum_{j\in\cH}n_j^{-1}$. It equals $N_{\mathrm H}=m_{\mathrm H}n$ when all $n_j=n$, and is no greater than $N_{\mathrm H}$ otherwise.

Let $C>0$ denote an absolute constant that may vary between occurrences. For failure level $\delta\in(0,1)$, write $\nu_V=\max_l\nu_l$, $\ell_H=\log(Cp/\delta)$ and $\ell_m=\log(Cmp^2/\delta)$. The leading score contribution is
\begin{equation*}
 V_\delta=\nu_V\left\{\sqrt{\frac{\ell_H}{N_{\mathrm{eff}}}}
 +\omega\sqrt{\frac{\ell_m}{n_{\min}}}\right\}.
\end{equation*}
The two terms describe honest-node averaging and the effect of trimming. Let $Q_\delta$ collect the finite-sample tail and truncation-bias terms. Its full expression and simultaneous choices of $a_H,\tau,d,\Delta$ are given in Supplementary Section~\ref{supp-sec:statbounds}. Here $\Delta$ bounds the largest entrywise error of the honest local covariance matrices around $\Sigma$. These choices give $A\le C(V_\delta+Q_\delta)$, which supplies the statistical error term in Theorem~\ref{thm:outer}. In the corollary below, the first condition transfers population eigenvalue bounds from $\Sigma$ to sparse blocks of $\Gamma_1$, and the second ensures $\rho<1$ in the error recursion.

\begin{corollary}[Statistical error]\label{cor:statistical}
Under the IID sub-Gaussian MCAR model, use these bounds with failure level $\delta$ and $A>0$, and suppose $m_{\mathrm B}\le b<m/2$. Choose $0<\kappa<M$ and compute $K$ from~\eqref{eq:K}. Suppose every principal submatrix of $\Sigma$ of order at most $\min(4K,p)$ has eigenvalues in $[\mu,\nu_{\rm pop}]$, where $\kappa<\mu\le\nu_{\rm pop}<M$, and
\begin{equation*}
 4K\Delta\le\min\{\mu-\kappa,M-\nu_{\rm pop}\},\qquad
 \rho=\frac{32\sqrt{s(K+s)}(1+\omega)\Delta}{\kappa}<1.
\end{equation*}
For the bound-based policy initialized as in Theorem~\ref{thm:outer}, with probability at least $1-\delta$, every inner iterate is $K$-sparse and
\begin{equation*}
 \|\theta_t-\theta^\star\|_2\le\rho^tR_0+
 \frac{C\sqrt{s}(V_\delta+Q_\delta)}{\kappa(1-\rho)}(1-\rho^t).
\end{equation*}
\end{corollary}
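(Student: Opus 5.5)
The plan is to reduce the corollary to Theorem~\ref{thm:outer} on a single high-probability event. Let $\mathcal G_\delta$ be the event from Supplementary Section~\ref{supp-sec:statbounds} on which the simultaneous choices of $a_H$, $\tau$, $d$ and $\Delta$ hold. We need $\mathcal G_\delta\subseteq\mathcal E$, with $A=a_H+\omega\tau\le C(V_\delta+Q_\delta)$ and $\Pp(\mathcal G_\delta)\ge1-\delta$. These facts are taken from the cited supplementary choices, after a union bound that splits $\delta$ among the score, score-variation, and covariance-deviation events. Everything else is deterministic on $\mathcal G_\delta$. This is important because the iterates and the Byzantine reports are data dependent. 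Theorem~\ref{thm:outer} already holds uniformly over adaptive reports on $\mathcal E$, so no further union over rounds is needed.

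\textbf{Step 1: sparse eigenvalues of $\Gamma_1$.} Fix $S$ with $|S|\le\min(4K,p)$ and a unit vector $v$ supported on $S$. Since $1\in\cH$,
\[
|v^\top(\Gamma_1-\Sigma)v|\le\Delta\norm{v}_1^2\le|S|\Delta\le4K\Delta .
\]
Then $\lambda_{\min}(\Gamma_{1,SS})\ge\mu-4K\Delta\ge\kappa$ and $\lambda_{\max}(\Gamma_{1,SS})\le\nu_{\rm pop}+4K\Delta\le M$. This is exactly the curvature hypothesis of Theorem~\ref{thm:outer}.

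\textbf{Step 2: the constant $d$ and the contraction factor.} For $\norm{e}_0\le K+s$ and $j\in\cH$, write $\Gamma_j-\Gamma_\cH=(\Gamma_j-\Sigma)-(\Gamma_\cH-\Sigma)$. Each entry of this difference is at most $2\Delta$ in absolute value, because $\Gamma_\cH-\Sigma$ is an average of matrices with entries bounded by $\Delta$. Hence
\[
\norm{(\Gamma_j-\Gamma_\cH)e}_\infty\le2\Delta\norm{e}_1\le2\sqrt{K+s}\,\Delta\norm{e}_2 .
\]
So $d=2\sqrt{K+s}\,\Delta$ is admissible. This gives $B=(1+\omega)d$ and $16\sqrt sB/\kappa=32\sqrt{s(K+s)}(1+\omega)\Delta/\kappa$, which is the $\rho$ of the corollary; by assumption $\rho<1$. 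The remaining hypotheses of the theorem, namely $m_{\mathrm B}\le b<m/2$, $A>0$, and the initialization, are assumed directly.

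\textbf{Step 3: conclude.} Applying Theorem~\ref{thm:outer} on $\mathcal G_\delta$ gives $K$-sparsity of every inner iterate and the bound~\eqref{eq:recursion}. Substituting $16\sqrt sA\le C\sqrt s(V_\delta+Q_\delta)$, with $C$ absorbing the factor 16, yields the stated inequality.

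The main obstacle is the probabilistic input: showing that one event of probability at least $1-\delta$ controls $a_H$, $\tau$ and $\Delta$ simultaneously, with $a_H+\omega\tau$ of order $V_\delta+Q_\delta$. The difficulty is that the corrected score involves products of sub-Gaussian variables with Bernoulli masks, and $\Gamma_j$ involves the $1/q^2$ scaling. I would rely on the supplementary Bernstein-type bounds for sub-exponential products, with truncation producing the bias part of $Q_\delta$. The node-wise maxima would be handled by a union bound over $m$ nodes and $p^2$ entries, which accounts for the $\ell_m$ term. The $N_{\mathrm{eff}}$ rate comes from the variance $m_{\mathrm H}^{-2}\sum_{j\in\cH}\nu_l^2/n_j$ of the equally weighted honest average.
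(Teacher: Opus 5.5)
Your proposal is correct and follows the paper's own argument. On the common concentration event of Supplementary Section~\ref{supp-sec:statbounds}, you move the sparse eigenvalue bounds from $\Sigma$ to $\Gamma_1$ with the $|S|\Delta$ perturbation bound; the paper states this as Weyl's inequality with $\norm{(\Gamma_1-\Sigma)_{UU}}_{\mathrm{op}}\le r\Delta$. You then take $d=2\sqrt{K+s}\,\Delta$ from the entrywise $2\Delta$ bound on $\Gamma_j-\Gamma_\cH$ and substitute $A\le C(V_\delta+Q_\delta)$ into Theorem~\ref{thm:outer}, and the probabilistic input you defer (truncation, Bernstein, union bounds, and the $N_{\mathrm{eff}}$ rate from the weighted honest average) is exactly what the supplement supplies.
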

The proof is in Supplementary Section~\ref{supp-sec:concentration}. The residual retains $Q_\delta$ and separates the benefit of pooled information through $N_{\mathrm{eff}}$ from \mbox{contamination} controlled by $n_{\min}$. The sparse curvature requirement depends on the variances of the corrected single-observation covariance products. With bounded population fourth moments, these variances are at most of order $q^{-2}$. Retaining this variance dependence gives a leading sufficient local-sample-size requirement of order $q^{-2}$ for sparse curvature, rather than the $q^{-4}$ obtained from a coarse tail bound. The full sufficient conditions involve $K^2$, logarithmic and remainder terms, and $\rho<1$.

\section{Experiments}\label{sec:experiments}
The experiments assess estimation and variable selection under joint missingness and Byzantine attacks, examine convergence conditions, and measure prediction accuracy and computational cost.

\subsection{Setup}
The main comparison uses $m=12$, $n_j=512$, $p=256$, Gaussian AR(1) covariates with correlation $0.6$, and standard Gaussian noise. Six active coefficients form two separated blocks of three adjacent predictors, with coefficient $1$ in the first block and $-1$ in the second. For each $q\in\{0.9,0.7,0.5\}$, we use one no-attack setting and four settings crossing one or two Byzantine nodes with Gaussian or in-range attacks. Additional settings vary the support layout, correlation and dimension, or use $q=0.3$ or complete observations. All 23 settings use 30 paired repetitions across methods.

We compare six corrected methods combining mean, median, trimmed or VRMOM aggregation with Lasso, SCAD or MCP penalties. VRMOM adapts \citet{tu2021} to corrected scores. Uncorrected SCAD uses the same trimming and SCAD steps with zero-filled moments. M-SLARD-ZF and T-SLARD-ZF apply the median and trimmed-mean squared-loss updates of \citet*{tu2023} to zero-filled data. All methods use eight rounds, the same solver tolerances, and a seven-point validation grid. A shared independent complete validation sample of size 400 selects among successful candidates; a separate complete test sample has size 2000. The SLARD updates use this common tuning procedure. Supplementary Section~\ref{supp-sec:competitive} gives every setting, penalty and failure rule.

Gaussian attacks replace reports by $N(0,20^2I)$ vectors; in-range attacks choose honest 30th or 70th coordinate percentiles. The budget is $b=m_{\mathrm B}$ in this comparison. Estimation errors are conditional on convergence, with success counts reported separately. Exact support uses threshold $10^{-6}$ and counts failed fits as failures; ten common thresholds assess sensitivity to this definition. Paired coefficient and test-error differences use the same simulated datasets. A sensitivity benchmark, including undertrimming, is reported in Supplementary Section~\ref{supp-sec:earlierbenchmark}.

\subsection{Synthetic Results}
At $q=0.7$ and $m_{\mathrm B}=b=2$, Trim-SCAD has mean coefficient errors $0.190$ under Gaussian attacks and $0.139$ under in-range attacks, compared with $0.253$ and $0.242$ for T-SLARD-ZF (Table~\ref{tab:competitive}). Both Trim-SCAD and Trim-MCP converge in all 30 repetitions under each attack. Their errors are close to VRMOM-SCAD and below Trim-Lasso. The corresponding paired test-MSE differences, Trim-SCAD minus T-SLARD-ZF, are $-0.0079$ and $-0.0114$, with Monte Carlo standard errors $0.0027$ and $0.0019$. Against Trim-Lasso, the reductions are $0.0437$ and $0.0329$. Thus the coefficient gains are accompanied by improved prediction on independent test observations.

\begin{table}[!htb]\centering
\caption{Coefficient error and support recovery with 30\% missingness and two Byzantine nodes among twelve. Errors are means $\pm$ Monte Carlo SE among successful fits. Conv. and Exact count converged fits and exact recoveries, both out of 30.}
\label{tab:competitive}
\par\begingroup\renewcommand{\baselinestretch}{1.2}\small\selectfont
\setlength{\tabcolsep}{3pt}\renewcommand{\arraystretch}{1.05}
\begin{tabular}{llrrlrr}\toprule
 & \multicolumn{3}{c}{Gaussian} & \multicolumn{3}{c}{In-range}\\
\cmidrule(lr){2-4}\cmidrule(lr){5-7}
Method & $\ell_2$ error & Conv. & Exact & $\ell_2$ error & Conv. & Exact\\\midrule
Trim-SCAD & $0.190\pm0.014$ & 30 & 28 & $0.139\pm0.009$ & 30 & 19 \\
Trim-MCP & $0.192\pm0.014$ & 30 & 27 & $0.139\pm0.009$ & 30 & 22 \\
Trim-Lasso & $0.266\pm0.013$ & 30 & 9 & $0.200\pm0.010$ & 30 & 5 \\
VRMOM-SCAD & $0.193\pm0.014$ & 30 & 29 & $0.139\pm0.009$ & 30 & 20 \\
Median-SCAD & $0.195\pm0.015$ & 30 & 28 & $0.156\pm0.011$ & 30 & 23 \\
Mean-SCAD & -- & 0 & 0 & $0.130\pm0.008$ & 30 & 22 \\
Uncorrected SCAD & $0.709\pm0.054$ & 30 & 30 & $0.648\pm0.052$ & 30 & 30 \\
M-SLARD-ZF & $0.258\pm0.010$ & 30 & 0 & $0.267\pm0.011$ & 30 & 0 \\
T-SLARD-ZF & $0.253\pm0.009$ & 30 & 0 & $0.242\pm0.009$ & 30 & 0 \\
\bottomrule\end{tabular}\endgroup\end{table}

The zero-filled SCAD variant recovers the support in all reference repetitions but has mean coefficient errors $0.709$ and $0.648$. The corrected method thus improves coefficient accuracy in this correlated design even when variable selection is already accurate. Supplementary Figure~\ref{supp-fig:competitivepaired} shows paired reference differences and support recovery.

\begin{figure}[htbp]
\centering\includegraphics[width=\textwidth,height=.72\textheight,keepaspectratio]{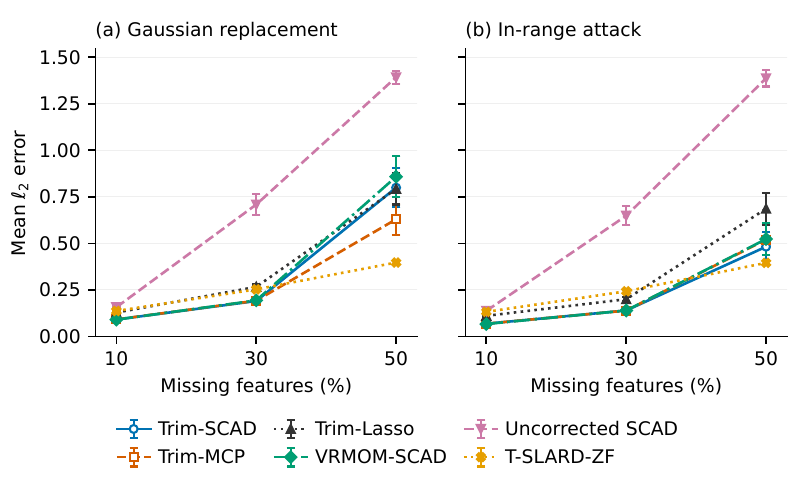}
\caption{Coefficient error with missing covariates and two Byzantine nodes among twelve. Points show means and bars one Monte Carlo standard error over 30 repetitions at each missingness level. Panels use Gaussian and in-range attacks; the complete nine-method comparison and success counts are in the supplement.}
\label{fig:competitive}
\end{figure}

Figure~\ref{fig:competitive} compares observation probabilities at fixed local sample size. At 50\% missingness, T-SLARD-ZF has lower mean coefficient and test errors than Trim-SCAD. The support comparison also depends on the threshold: at $q=0.7$, Trim-SCAD recovers 28 and 19 exact supports under the original threshold, versus none for T-SLARD-ZF, whereas both methods recover all 30 at threshold $0.2$. Supplementary Figure~\ref{supp-fig:thresholds38} reports all ten thresholds. The coefficient and prediction comparisons evaluate the original estimates and do not depend on this support definition.

A follow-up planned after the fixed-sample study uses new seeds and increases $n_j$ to $512$, $1004$ and $2788$ at $q=0.7,0.5,0.3$, respectively. Figure~\ref{fig:samples38} shows all six combinations of missingness and attack. Trim-SCAD and Trim-MCP have lower mean coefficient and prediction errors than T-SLARD-ZF throughout, with accuracy close to VRMOM-SCAD. At 50\% missingness, Trim-SCAD's coefficient errors are $0.214$ and $0.174$ under Gaussian and in-range attacks, compared with $0.310$ and $0.303$ for T-SLARD-ZF. The paired test-MSE reductions are $0.0272$ and $0.0373$, with Monte Carlo standard errors $0.0040$ and $0.0052$. Every method receives the same observations within each setting; the local sample sizes at 50\% and 70\% missingness are about $1.96$ and $5.45$ times the original size. Supplementary Section~\ref{supp-sec:competitive_sample_size} gives the rule and complete results.

\begin{figure}[H]
\centering\includegraphics[width=\textwidth,height=.72\textheight,keepaspectratio]{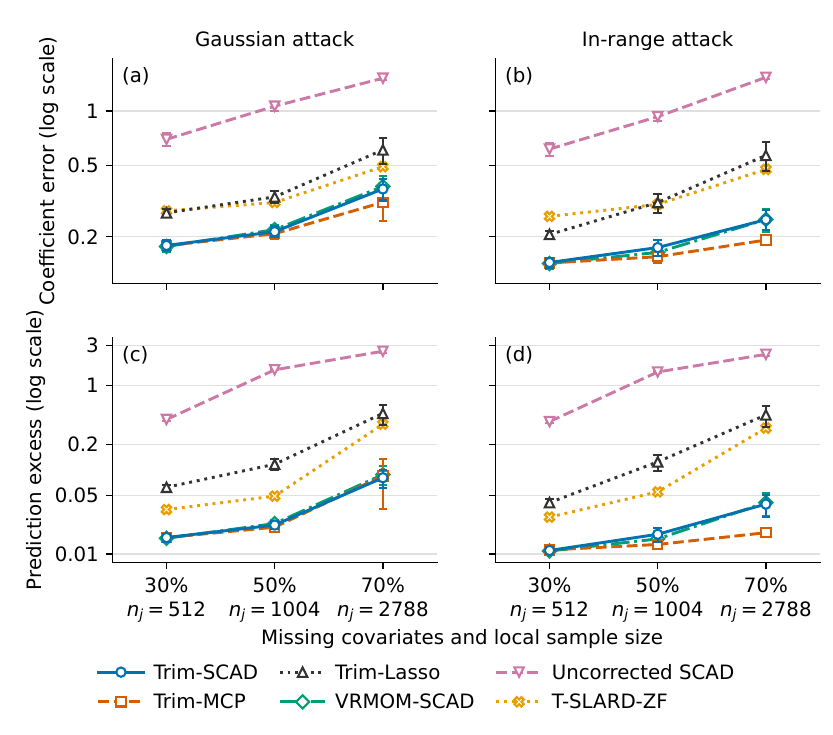}
\caption{Coefficient and prediction errors when local sample size increases with missingness. Each setting has two Byzantine nodes among twelve and 30 paired repetitions. Points show means and bars one Monte Carlo standard error. Vertical axes use log scales. Prediction excess is the squared prediction error averaged over the population covariate distribution, excluding response noise. All six displayed methods succeed in every repetition. The horizontal labels give missingness and observations per node.}
\label{fig:samples38}
\end{figure}

Sensitivity checks vary dimension and the trimming budget. At $q=0.7$ and $m_{\mathrm B}=b=2$, Trim-SCAD remains competitive at $p=768>n_j=512$, with errors $0.186$ and $0.137$ under Gaussian and in-range attacks, respectively. Undertrimming ($m_{\mathrm B}=4,b=2$) makes every trimmed candidate fail in the sensitivity experiment, while median and VRMOM methods remain stable. On 30 paired no-attack datasets, budgets $b=0,2,4$ give mean errors $0.0948$, $0.1007$ and $0.1011$, with all fits converging (Supplementary Table~\ref{supp-tab:cleanbudget}).

A separate allocation experiment distributes 2400 observations across eight nodes and varies the ratio of the largest to smallest assigned sample size. When the two smallest remaining workers are Byzantine, Trim-SCAD's mean errors are $0.083$ and $0.177$ for the largest and smallest master blocks, respectively, at ratio 4, and $0.107$ and $0.988$ at ratio 16. Honest sample totals also change with allocation and attack placement. Supplementary Figure~\ref{supp-fig:mainimbalance} reports all placements.

\subsection{Theory Checks}
We examine Theorem~\ref{thm:outer} in a separate Gaussian MCAR study with $p=24,s=2,m=8$ and noise standard deviation $0.5$. Simulation truth, honest-node identities, and the master's full spectrum provide valid inputs for the bound-based policy. At $q=0.7$ and $m_{\mathrm B}=b=1$, $\rho<1$ holds in $0/20$, $4/20$, and $20/20$ samples for $n_j=20000,80000,320000$. At $n_j=320000$, all inner supports have size at most two ($K=5$), all roundwise error bounds hold, and all 20 fits satisfy the support-recovery condition and recover the true support. Mean error is $0.00108$, while the median final bound is $0.247$ (Figure~\ref{fig:theoryvalidation}).

A separate $p=4000$ quadratic construction has heterogeneous honest gradients and a negative curvature direction, but all principal blocks of order at most $20$ have eigenvalues in $[0.990,1.004]$. With $K=5$, all 140 runs satisfy the sparse-path, inner-contraction, and outer-error checks under clean, Gaussian, and adaptive in-range reports. Supplementary Section~\ref{supp-sec:theoryvalidation} gives the constructions, all prespecified samples, and variations in missingness and attack budget.

\begin{figure}[!htb]
\centering\includegraphics[width=\textwidth,height=.72\textheight,keepaspectratio]{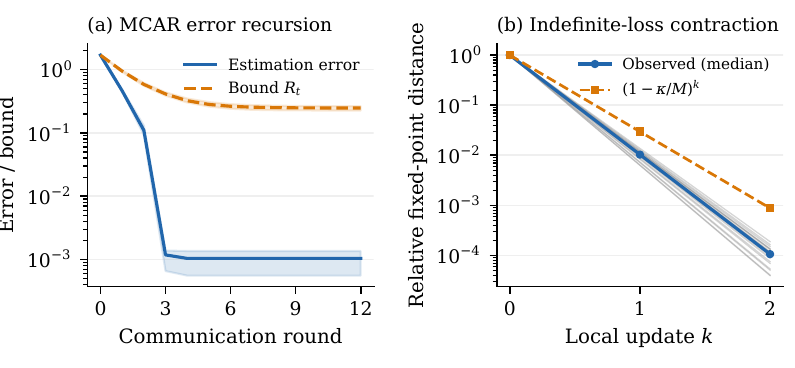}
\caption{Checks of the conditional theory, with 20 repetitions per panel. (a) MCAR estimation errors and bounds at $n_j=320000,q=0.7,b=m_{\mathrm B}=1$: medians and interquartile ranges. (b) Relative distance to the exact fixed point at the final penalty level of the first round in the indefinite construction: all paths (gray), their median, and the contraction envelope. Both panels use adaptive in-range attacks.}
\label{fig:theoryvalidation}
\end{figure}

\subsection{Paths and Ablations}\label{sec:paths}
For the data-tuned comparisons, we evaluate the conditional endpoint bound after fitting, using saved estimates and reconstructed moments (Supplementary Theorem~\ref{supp-thm:endpoint}). For each returned estimate, endpoint curvature is the smallest eigenvalue of $\Gamma_1$ restricted to the union of its support and the true support. Across the 1380 selected Trim-SCAD/MCP paths in the fixed-sample study, 1329 have positive curvature at every endpoint and 97 satisfy the bound's contraction condition throughout. At the $q=0.7,m_{\rm B}=2$ reference, all 120 method--attack paths have positive endpoint curvature, but none satisfies this contraction condition. The shared 30 master samples also fail the uniform sparse-curvature requirement. Thus neither the uniform sparse-path guarantee nor the sufficient endpoint contraction condition covers this reference comparison. Supplementary Section~\ref{supp-sec:competitive_endpoints} reports every setting, residual checks, and both satisfied and unsatisfied conditions.

A separate $p=768$ path study examines sample size and the role of intermediate penalty levels. Endpoint contraction holds throughout eight rounds in all ten fits at $n_j=512$, versus five at $n_j=256$. The direct-target variant skips intermediate penalty levels while retaining the roundwise surrogate and frozen-weight rule. Its selected coefficient error differs from continuation by less than $3\times10^{-5}$ across all 60 paired fits. Supplementary Section~\ref{supp-sec:sampledpaths} gives the designs and component comparisons. A separate bounded-design experiment uses independent trusted observations to calibrate error radii for fitted candidates (Supplementary Section~\ref{supp-sec:trustedcomparison}).

\subsection{Real Data}
The Superconductivity data~\citep{uci} contain 21263 observations and 81 features. Ten splits use 14000 training, 3000 validation and 4263 test rows. Training data determine centering and scaling before MCAR masking ($q=0.7$); validation and test covariates remain complete. Training is split across 20 nodes, with a common four-point tuning grid.

Under 10\% Gaussian attacks, Trim-SCAD, Trim-MCP and VRMOM-SCAD have mean RMSEs $21.166$, $21.032$ and $21.003$, below the training-mean baseline $34.275$. Mean-SCAD fails under Gaussian attacks; on clean data, it and Trim-SCAD attain about $21.125$. Supplementary Section~\ref{supp-sec:earlierreal} reports all methods and success counts.

We also predict Abalone ring count from seven measurements~\citep{abalone} and split superconductivity by element-composition groups. Across ten shared splits, Trim-SCAD, Trim-Lasso and VRMOM-SCAD succeed in 8, 10 and 9 Abalone fits and 10, 9 and 10 grouped superconductivity fits. On jointly successful splits, Trim-SCAD minus VRMOM-SCAD has mean RMSE differences $-0.0006$ and $-0.0065$. The supplement gives all conditional RMSEs and paired variability.

\subsection{Runtime and Communication}
A new process study measures complete validation tuning for Trim-SCAD, Trim-MCP, Trim-Lasso, VRMOM-SCAD and T-SLARD-ZF. It uses the primary design at $q=0.7,0.5$, two Gaussian or in-range attackers, and 30 new paired seeds per setting. Timing includes method-specific moment construction and all seven candidates, including failures, through validation and selection; shared data preparation and worker startup are recorded separately. At $q=0.7$, Trim-SCAD's mean coefficient errors are $0.166$ and $0.143$ under Gaussian and in-range attacks, close to VRMOM-SCAD's $0.168$ and $0.137$ and below T-SLARD-ZF's $0.255$ and $0.247$. Median tuning times are $0.138$ and $0.133$ seconds for Trim-SCAD, versus $0.236$ and $0.235$ for VRMOM-SCAD; both have a median vector payload of $1.802$ MB. Trim-SCAD is faster than VRMOM-SCAD in every paired run across all four settings. Supplementary Section~\ref{supp-sec:fulltuning38} summarizes all 600 runs and gives the timing protocol.

A separate dimension study uses coordinator and worker processes on the same Windows computer, communicating through pipes. With $m=8,n=1024$, $p\in\{1000,4000,10000\}$ and $q\in\{0.5,0.7,0.9\}$, Trim-SCAD, Trim-Lasso and VRMOM-SCAD give 270 runs across ten seeds and 25\% Gaussian attacks. Methods share a penalty multiplier and run in randomized order. Timing includes initialization, communication, aggregation and inner optimization; setup and memory are reported separately in the supplement.

We check coefficient-error targets $0.1$ and $0.2$ at initialization and at each successful penalty-stage endpoint. For each target, time and bytes are measured until the first such point reaching it, and cost medians condition on attainment. Targets not reached within eight rounds are censored; later solver failures are recorded separately. Truth is used only for this evaluation. At $q=0.7,p=10000$, Trim-SCAD and VRMOM-SCAD reach error $0.1$ in 10/10 runs, with median times $0.503$ and $0.754$ seconds and payload $1.12$ MB. At $q=0.5,p=10000$, neither reaches $0.1$, and final solver successes fall to 1/10 and 0/10. The supplement reports all configurations and both targets.

\section{Discussion}\label{sec:conclusion}
Uniform control of Byzantine gradient perturbations supports a sparse-path guarantee for corrected regression losses. The admitted MCAR runs and the indefinite-quadratic experiments satisfy the conditional guarantees, whereas the reference comparison does not meet their sufficient conditions. In correlated designs with missing covariates and Byzantine nodes, Trim-SCAD and Trim-MCP improve on corrected Lasso and zero-filled SLARD at moderate missingness, with accuracy close to VRMOM-SCAD. Larger local samples improve this comparison at greater missingness. Full validation-tuning measurements show lower full-tuning time than VRMOM-SCAD with the same median vector payload in the reference settings. Correction need not improve finite-sample accuracy with independent predictors. Prediction on two public datasets is comparable to robust alternatives.

The method uses a corrected quadratic loss, with complete-data regression recovered at $q=1$. The observation probability is treated as known; estimating it from masks requires additional concentration arguments. A conservative Byzantine budget increases the bound's residual level; the no-attack comparison measures its empirical cost. Coordinate-specific MCAR probabilities and additive measurement error admit corresponding moment corrections, detailed in the supplement. Generalized linear models require separate loss corrections; missing-at-random and missing-not-at-random mechanisms require corresponding observation models~\citep{little2019}. Sharper path conditions, calibration with fewer trusted observations, and naturally missing data remain directions for further work.

\section{Supplementary Material}
The Supplementary Material contains proofs, statistical calculations, implementation details, and additional results. The accompanying reproducibility archive provides analysis code, frozen results, seeds, and data provenance. The public datasets~\citep{abalone,uci} are redistributed with attribution under their licenses.

\section*{Acknowledgments}
This work was supported in part by the National Key R\&D Program of China (Grant No. 2023YFA1011400), the Guangdong Higher Education Upgrading Plan (Grant No. 0400008-26), and the BNBU Research Grant (Grant No. R0700165-27).

OpenAI Codex assisted with text, proof drafts, experimental code, and figure-generation code. All authors have reviewed and approved the final manuscript.

\setlength{\bibsep}{2pt}

\vskip .65cm
\noindent\begin{minipage}{\textwidth}
\noindent Xiaofei Wu. Yunnan Key Laboratory of Statistical Modeling and Data Analysis, School of Mathematics and Statistics, Yunnan University, Kunming, China.
\par\noindent E-mail: \href{mailto:xfwu1016@ynu.edu.cn}{xfwu1016@ynu.edu.cn}
\par\medskip
\noindent Rongmei Liang. Department of Statistics and Data Science, Southern University of Science and Technology, Shenzhen 518055, China.
\par\noindent E-mail: \href{mailto:liang_r_m@163.com}{liang\_r\_m@163.com}
\par\medskip
\noindent Jian Qing Shi (corresponding author). Guangdong Provincial/Zhuhai Key Laboratory of Interdisciplinary Research and Application for Data Science, Beijing Normal-Hong Kong Baptist University, Zhuhai, 519087, China.
\par\noindent E-mail: \href{mailto:jianqingshi@bnbu.edu.cn}{jianqingshi@bnbu.edu.cn}
\end{minipage}

\clearpage
\setcounter{secnumdepth}{2}
\setcounter{section}{0}
\setcounter{subsection}{0}
\setcounter{equation}{0}
\setcounter{figure}{0}
\setcounter{table}{0}
\setcounter{theorem}{0}
\setcounter{lemma}{0}
\setcounter{proposition}{0}
\setcounter{corollary}{0}
\setcounter{remark}{0}
\setcounter{algorithm}{0}
\providecommand{\theHalgorithm}{}
\renewcommand{\theHsection}{supp.\arabic{section}}
\renewcommand{\theHfigure}{supp.\arabic{figure}}
\renewcommand{\theHtable}{supp.\arabic{table}}
\renewcommand{\theHtheorem}{supp.\arabic{theorem}}
\renewcommand{\theHlemma}{supp.\arabic{lemma}}
\renewcommand{\theHproposition}{supp.\arabic{proposition}}
\renewcommand{\theHcorollary}{supp.\arabic{corollary}}
\renewcommand{\theHremark}{supp.\arabic{remark}}
\renewcommand{\theHalgorithm}{supp.\arabic{algorithm}}
\renewcommand{\theHequation}{supp.\arabic{section}.\arabic{equation}}

\makeatletter
\newcommand{\compacttable}{\renewcommand{\baselinestretch}{1}\ifdim\f@size pt>11pt\small\fi\selectfont\renewcommand{\arraystretch}{1.08}}
\makeatother
\makeatletter
\renewcommand\section{\@startsection{section}{1}{\z@}{-2.3ex plus-.5ex minus-.2ex}{1.2ex plus.2ex}{\normalfont\fontsize{12}{14pt}\selectfont\bfseries}}
\renewcommand\subsection{\@startsection{subsection}{2}{\z@}{-1.8ex plus-.4ex minus-.2ex}{.8ex plus.2ex}{\normalfont\fontsize{12}{14pt}\selectfont\bfseries}}
\makeatother
\AddToHook{env/table/before}{\begingroup\compacttable}
\AddToHook{env/table/after}{\endgroup}
\AddToHook{env/figure/before}{\begingroup\renewcommand{\baselinestretch}{1}\small\selectfont}
\AddToHook{env/figure/after}{\endgroup}
\AddToHook{env/tabular/before}{\begingroup\compacttable}
\AddToHook{env/tabular/after}{\endgroup}
\AddToHook{env/longtable/before}{\begingroup\compacttable}
\AddToHook{env/longtable/after}{\endgroup}
\subsubsectionfont{\fontsize{12}{14pt}\selectfont}

\raggedbottom
\setlength{\intextsep}{6pt plus2pt minus2pt}
\setlength{\textfloatsep}{10pt plus2pt minus2pt}
\setlength{\floatsep}{8pt plus2pt minus2pt}
\newcommand{\suppdisplays}{\setlength{\abovedisplayskip}{6pt plus2pt minus2pt}\setlength{\belowdisplayskip}{6pt plus2pt minus2pt}\setlength{\abovedisplayshortskip}{3pt plus1pt}\setlength{\belowdisplayshortskip}{6pt plus2pt minus2pt}}
\AddToHook{cmd/normalsize/after}{\suppdisplays}
\suppdisplays
\renewcommand{\baselinestretch}{1.15}
\fontsize{12}{14pt plus.8pt minus .6pt}\selectfont
\markright{\hbox{\footnotesize\rm Supplementary Material}\hfill}
\markboth{\hfill{\footnotesize\rm WU, LIANG AND SHI} \hfill}
{\hfill {\footnotesize\rm ROBUST SPARSE REGRESSION WITH MISSING DATA} \hfill}
\renewcommand{\thefootnote}{}

\vspace{0.8pc}
\begingroup\renewcommand{\baselinestretch}{1}\selectfont
\begin{center}
{\large\bf Byzantine-Robust Sparse Regression\\
with Missing Covariates\par}
\vspace{.4cm}
{Xiaofei Wu$^{1}$, Rongmei Liang$^{2}$ and Jian Qing Shi$^{3*}$\par}
\vspace{.25cm}
{\it $^{1}$Yunnan University\par}
{\it $^{2}$Southern University of Science and Technology\par}
{\it $^{3}$Beijing Normal-Hong Kong Baptist University\par}
\vspace{.3cm}
{\bf Supplementary Material\par}
\end{center}\endgroup
\noindent This supplement gives complete proofs, implementation details, and experimental designs and results. Ancillary tables retain the exhaustive tuning, paired-comparison and cost listings.
\par
\setcounter{section}{0}
\setcounter{equation}{0}
\renewcommand{\theequation}{S\arabic{section}.\arabic{equation}}
\renewcommand{\thesection}{S\arabic{section}}
\renewcommand{\thetheorem}{S\arabic{theorem}}
\renewcommand{\thelemma}{S\arabic{lemma}}
\renewcommand{\theproposition}{S\arabic{proposition}}
\renewcommand{\thecorollary}{S\arabic{corollary}}
\fontsize{12}{14pt plus.8pt minus .6pt}\selectfont

\section*{Contents}
\begingroup\renewcommand{\baselinestretch}{1}\small\selectfont
\noindent\hyperref[supp-contents-1]{S1.\quad Guide and Notation}\dotfill\pageref{supp-contents-1}\par
\noindent\hyperref[supp-contents-2]{S2.\quad Proofs of Main Results}\dotfill\pageref{supp-contents-2}\par
\noindent\hyperref[supp-contents-3]{S3.\quad Additional Results}\dotfill\pageref{supp-contents-3}\par
\noindent\hyperref[supp-contents-4]{S4.\quad Implementation}\dotfill\pageref{supp-contents-4}\par
\noindent\hyperref[supp-contents-5]{S5.\quad Joint Missingness and Attacks}\dotfill\pageref{supp-contents-5}\par
\noindent\hyperref[supp-contents-6]{S6.\quad Sensitivity Benchmarks}\dotfill\pageref{supp-contents-6}\par
\noindent\hyperref[supp-contents-7]{S7.\quad Theory Validation}\dotfill\pageref{supp-contents-7}\par
\noindent\hyperref[supp-contents-8]{S8.\quad Additional Checks}\dotfill\pageref{supp-contents-8}\par
\noindent\hyperref[supp-contents-9]{S9.\quad Additional Real Data}\dotfill\pageref{supp-contents-9}\par
\noindent\hyperref[supp-contents-10]{S10.\quad Computational Cost}\dotfill\pageref{supp-contents-10}\par
\noindent\hyperref[supp-contents-11]{S11.\quad Path and Component Experiments}\dotfill\pageref{supp-contents-11}\par
\noindent\hyperref[supp-contents-12]{S12.\quad Node-Size Imbalance}\dotfill\pageref{supp-contents-12}\par
\noindent\hyperref[supp-contents-13]{S13.\quad Trusted-Sample Comparison}\dotfill\pageref{supp-contents-13}\par
\noindent\hyperref[supp-contents-14]{S14.\quad Model Extensions}\dotfill\pageref{supp-contents-14}\par
\noindent\hyperref[supp-contents-15]{S15.\quad Reproducibility}\dotfill\pageref{supp-contents-15}\par
\endgroup\clearpage

\section{Guide and Notation}\label{supp-contents-1}\label{supp-sec:guide}
Tables~\ref{supp-tab:position}--\ref{supp-tab:guarantees} locate the contributions and guarantees. Sections~\ref{supp-sec:mainproofs}--\ref{supp-sec:additional} give proofs; implementation and empirical studies follow.

\begin{table}[H]
\caption{Related results and the extension studied here.}
\label{supp-tab:position}\centering\renewcommand{\baselinestretch}{1}\footnotesize\selectfont\setlength{\tabcolsep}{3pt}
\begin{tabular}{>{\raggedright\arraybackslash}p{.28\textwidth}>{\raggedright\arraybackslash}p{.29\textwidth}>{\raggedright\arraybackslash}p{.35\textwidth}}\toprule
Source & Contribution & Present extension\\\midrule
\citet{suppbib-loh2012} & Corrected regression with missing covariates & Byzantine-robust sparse path for corrected losses.\\
\citet{suppbib-liu2025} & Weighted homotopy; sparse iterates and geometric convergence & Uniform score control under adaptive Byzantine reports.\\
Tu et al.~\citeyearpar{suppbib-tu2021} & Lasso/SCAD/MCP surrogate regularization suggested; theory left open & Sparse-path guarantees for corrected quadratic losses.\\
\citet{suppbib-tu2023} & Byzantine sparse M-estimation with convex subproblems & Indefinite MCAR-corrected loss and missingness-dependent error.\\\bottomrule
\end{tabular}
\end{table}

\begin{table}[H]
\caption{Guarantees, required information and corresponding evidence.}
\label{supp-tab:guarantees}\centering\renewcommand{\baselinestretch}{1}\footnotesize\selectfont\setlength{\tabcolsep}{3pt}
\begin{tabular}{>{\raggedright\arraybackslash}p{.16\textwidth}>{\raggedright\arraybackslash}p{.28\textwidth}>{\raggedright\arraybackslash}p{.25\textwidth}>{\raggedright\arraybackslash}p{.17\textwidth}}\toprule
Result & Conditions or supplied information & What is controlled & Evidence\\\midrule
Uniform path & Valid $A,B,R_0,\kappa,M,s$; $4K$-sparse curvature; $\rho<1$ & Every inner support and outer error & Constructed quadratic and MCAR bound-based audits\\
Endpoint recursion & Positive curvature on $\cS\cup\supp(x_t)$; full KKT residual & Returned errors; geometric bound if $\alpha_t\le\bar\alpha<1$ and $\zeta_t\le\bar\zeta$ & All selected competitive paths; sampled $p>n_1$ paths\\
Local check & Observed active block, weights and residual & Continuation of one fixed subproblem & Endpoint checks\\
Masked calibration & Independent trusted sample; known bounds and sparsity & Simultaneous coefficient-error radii & Bounded-design calibration\\\bottomrule
\end{tabular}
\end{table}

\subsection{Notation}\label{supp-sec:notation}
We use the model and attack assumptions of main-text Section~\ref{sec:model}: fixed honest and Byzantine sets $\cH,\mathcal B$, trusted master $1\in\cH$, budget $m_{\mathrm B}\le b<m/2$, IID honest observations, and arbitrary coordinated reports depending on honest data, current iterates and message history. Masks have independent Bernoulli$(q)$ coordinates, independent of $(X,\varepsilon)$, with known $q>0$; $Z=M_X\odot X$. The centered model satisfies $Y=X^\top\thstar+\varepsilon$ and $\E(X\varepsilon)=0$.

Write $\cS=\supp(\thstar)$, $|\cS|\le s$, $\norm{G}_{\max}=\max_{lk}|G_{lk}|$ and $G_{UU}$ for a principal submatrix; vector norms have their usual meanings. For fixed honest-node sizes $n_j>0$, use $N_{\mathrm H}=\sum_{j\in\cH}n_j$, $n_{\min}=\min_{j\in\cH}n_j$, $n_{\max}=\max_{j\in\cH}n_j$ and $N_{\mathrm{eff}}=m_{\mathrm H}^2/\sum_{j\in\cH}n_j^{-1}$. Reserve $M$ for the sparse eigenvalue upper bound. Concentration assumes $\sup_{\norm v_2=1}\norm{v^\top X}_{\psi_2}\le K_X$ and $\norm\varepsilon_{\psi_2}\le K_\varepsilon$. Noise independence is used only for the simplified Gaussian variance formula; a supplied valid $s$ is needed for bound-based fitting and calibration, not data-tuned fitting.

\section{Proofs of Main Results}\label{supp-contents-2}\label{supp-sec:mainproofs}
\subsection{Score Identities}\label{supp-sec:scoreidentities}
The corrected moments follow \citet[Example~2]{suppbib-loh2012}. For one observation, write $G=q^{-2}\{ZZ^\top-(1-q)\diag(ZZ^\top)\}$ and $d=ZY/q$.
For $l\ne k$, $G_{lk}=M_{X,l}M_{X,k}X_lX_k/q^2$; $G_{ll}=M_{X,l}X_l^2/q$. Averaging over masks gives $\E(G\mid X)=XX^\top$ and $\E(d\mid X,Y)=XY$. Expanding the score,
\begin{align*}
(G\thstar-d)_l
 &=\frac{M_{X,l}}qX_l\left\{X_l\thstar_l+
 \sum_{k\ne l}\frac{M_{X,k}}qX_k\thstar_k-Y\right\}\\
 &=\frac{M_{X,l}}qX_l\left\{\sum_{k\ne l}(M_{X,k}/q-1)X_k\thstar_k-\varepsilon\right\}.
\end{align*}
Conditionally on $(X,\varepsilon)$, the off-diagonal mask fluctuations are independent, centered, and have variance $(1-q)/q$. Independence of $M_{X,l}$ gives
\begin{equation*}
 \E(\phi_l^2\mid X,\varepsilon)=\frac{X_l^2\varepsilon^2}{q}
 +\frac{1-q}{q^2}\sum_{k\ne l}(\thstar_k)^2X_l^2X_k^2.
\end{equation*}
The score has mean $-\E X_l\varepsilon=0$, proving the main-text variance formula. For Gaussian $X$, $\E X_l^2X_k^2=\Sigma_{ll}\Sigma_{kk}+2\Sigma_{lk}^2$. Hessian variances follow from $\E G_{lk}^2=q^{-2}\E X_l^2X_k^2$ off-diagonal and $\E G_{ll}^2=q^{-1}\E X_l^4$ on-diagonal.

A numerical check enumerates all $2^4$ masks at five observation rates and verifies this conditional second-moment identity.

\subsection{Trimmed-Mean Bound}
\begin{lemma}\label{supp-lem:trim}
Let $m$ scalar messages contain $m_{\mathrm H}$ honest values and $m_{\mathrm B}$ arbitrary values with honest mean $\bar z_H$, where $m=m_{\mathrm H}+m_{\mathrm B}$, and suppose $m_{\mathrm B}\le b<m/2$. Let $T_b$ delete the smallest and largest $b$ values and average the remaining $m-2b$ values. Then
\begin{equation*}
 |T_b-\bar z_H|\le\frac{3b}{m-2b}\max_{j\in H}|z_j-\bar z_H|.
\end{equation*}
\end{lemma}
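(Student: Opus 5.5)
The argument is a direct comparison of sums. Set $r=\max_{j\in H}|z_j-\bar z_H|$ and write $y_j=z_j-\bar z_H$ for every message, honest or Byzantine. Subtracting $\bar z_H$ from all messages shifts $T_b$ by the same amount, so it suffices to show $|T_b'|\le 3br/(m-2b)$, where $T_b'$ is the trimmed mean of the centered values $y_j$. Honest centered values lie in $[-r,r]$ and sum to zero.

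\textbf{Step 1: kept values are bounded.} Every kept value lies in $[-r,r]$. Suppose a kept value exceeded $r$. Then it and the $b$ larger discarded values would all exceed $r$, giving at least $b+1$ values above $r$. Since no honest value exceeds $r$, all of these would be Byzantine, contradicting $m_{\mathrm B}\le b$. The lower side is symmetric.

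\textbf{Step 2: compare kept and honest sums.} Let $S$ be the kept index set, with $|S|=m-2b$. Since $\sum_{j\in H}y_j=0$,
\begin{equation*}
 \sum_{j\in S}y_j=\sum_{j\in S\setminus H}y_j-\sum_{j\in H\setminus S}y_j .
\end{equation*}
Each term on the right has absolute value at most $r$. This holds by Step 1 for kept Byzantine values and by definition of $r$ for honest values.

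\textbf{Step 3: count the indices.} The first sum has at most $m_{\mathrm B}\le b$ terms, since $S\setminus H$ contains only Byzantine indices. For the second, $|H\setminus S|=m_{\mathrm H}-|H\cap S|$. Also $|H\cap S|\ge |S|-m_{\mathrm B}=m-2b-m_{\mathrm B}$, so $|H\setminus S|\le m_{\mathrm H}-m+2b+m_{\mathrm B}=2b$. Hence
\begin{equation*}
 \Bigl|\sum_{j\in S}y_j\Bigr|\le 3br .
\end{equation*}
Dividing by $m-2b$ gives the claim.

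A sharper count using $m_{\mathrm B}$ is possible but is not needed. The main point to check is Step 1, which is the only place the condition $m_{\mathrm B}\le b$ enters for the Byzantine magnitudes. Ties in sorting do not matter, because the argument uses only the number of values strictly above $r$ or strictly below $-r$.
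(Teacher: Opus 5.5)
Your proof is correct and follows essentially the same route as the paper's: the condition $m_{\mathrm B}\le b$ forces every retained Byzantine value into the honest range, centering at $\bar z_H$ makes the retained honest sum the negative of the deleted honest sum, and counting gives at most $3b$ terms of size at most $\max_{j\in H}|z_j-\bar z_H|$. The only difference is that the paper counts exactly $2b-m_{\mathrm B}+2k\le 3b$ terms, where $k$ is the number of retained Byzantine values, whereas you bound the two index sets separately by $b$ and $2b$; both give the same constant.
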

\begin{proof}
Since $m_{\mathrm B}\le b$, retained malicious values lie in the honest range. If $r$ remain, $2b-m_{\mathrm B}+r$ honest values were deleted. After centering at $\bar z_H$, retained and deleted honest sums are opposites. The absolute retained total is therefore at most $2b-m_{\mathrm B}+2r\le3b$ times the largest honest deviation. Divide by $m-2b$; equality holds for $b=m_{\mathrm B}=0$.
\end{proof}
The coordinatewise bound holds deterministically for every adaptive message realization.

\subsection{Sparse Path}\label{supp-sec:sparseproof}
\begin{lemma}[Single-stage sparse path]\label{supp-lem:path}
Let $K$ be defined by main-text~\eqref{eq:K}, and assume every principal submatrix of $\Gamma_1$ of order at most $\min(4K,p)$ has eigenvalues in $[\kappa,M]$, with $0<\kappa\le M$. Fix $\lambda>0$ and a quadratic surrogate with score $v=\nabla\ell(\thstar)$ satisfying $\norm{v}_\infty\le\lambda/8$. Suppose $0\le w_l\le\lambda$, and an exceptional set $E\supseteq\cS$ satisfies $|E|\le2s$ and $w_l\ge\lambda/2$ for $l\notin E$. Define
\begin{equation*}
 \cD_\lambda=\{x:\norm{x}_0\le K,\ \norm{x-\thstar}_2\le4\sqrt s\lambda/\kappa\}.
\end{equation*}
Then the following hold.
\begin{enumerate}
\setlength{\itemsep}{0pt}\setlength{\parsep}{0pt}
\item[(i)] \textit{Sparse invariance.} The map $T_w$ in~\eqref{eq:update} sends $\cD_\lambda$ into itself. Every iterate started there has at most $K$ nonzero coordinates.
\item[(ii)] \textit{Contraction and accuracy.} On $\cD_\lambda$, $T_w$ contracts with modulus $a_0=1-\kappa/M$ to a unique fixed point $x_\lambda$, satisfying \mbox{$\norm{x_\lambda-\thstar}_2\le(9\sqrt2/8)\sqrt s\lambda/\kappa$}.
\item[(iii)] \textit{Finite stopping.} Iteration from $\cD_\lambda$ reaches any positive KKT tolerance in finitely many steps. Every point in $\cD_\lambda$ satisfying~\eqref{eq:tol} has error less than $2\sqrt s\lambda/\kappa$.
\end{enumerate}
\end{lemma}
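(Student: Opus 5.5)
The plan follows the proximal-gradient analysis of \citet{suppbib-liu2025} for quadratic surrogates. Write $x^+=T_w(x)$ and $z=x-\nabla\ell(x)/M$. Since $\ell$ is quadratic with Hessian $\Gamma_1$, $\nabla\ell(x)=v+\Gamma_1(x-\thstar)$. We therefore get $z=\thstar+(I-\Gamma_1/M)(x-\thstar)-v/M$. The argument has three steps: a support count for $x^+$, an error bound for $x^+$ restricted to a small set, and then contraction and KKT consequences.

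For (i), take $x\in\cD_\lambda$ and let $U=\supp(x)\cup\cS$, so $|U|\le K+s$. A coordinate $l\notin E$ with $x^+_l\ne0$ requires $|z_l|>w_l/M\ge\lambda/(2M)$. Off $\cS$ we have $z_l=[(I-\Gamma_1/M)(x-\thstar)]_l-v_l/M$ and $|v_l|\le\lambda/8$. Hence such coordinates satisfy
\[
|[(I-\Gamma_1/M)(x-\thstar)]_l|>3\lambda/(8M).
\]
Let $F$ be the set of these coordinates. Take any $F'\subseteq F$ with $|F'|\le K$ and apply the sparse eigenvalue bound on the principal block $U\cup F'$, of order at most $2K+s\le4K$. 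The operator $I-\Gamma_1/M$ restricted to that block has norm at most $1-\kappa/M$. This gives
\[
\sqrt{|F'|}\,3\lambda/(8M)\le (1-\kappa/M)\,4\sqrt s\lambda/\kappa,
\]
so $|F'|\le(32/3)^2(M/\kappa-1)^2s$. A standard maximality argument then shows $|F|$ obeys the same bound, so $\norm{x^+}_0\le|E|+|F|\le K$ by \eqref{eq:K}.

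The error bound uses the prox optimality condition. Put $\Delta=x^+-\thstar$, which is supported on $W=\supp(x^+)\cup\cS$ with $|W|\le K+s$. There is $\xi\in\partial\norm{x^+}_1$ with $M(x^+-x)+v+\Gamma_1(x-\thstar)+w\odot\xi=0$. Take the inner product with $\Delta$ and use the sparse eigenvalues on $W\cup\supp(x)$, whose order is at most $4K$. Also use $w_l\xi_l\Delta_l\ge w_l|\Delta_l|$ for $l\notin\cS$, and split $\Delta$ into $\Delta_\cS$ and $\Delta_{\cS^c}$. Since $|v_l|\le\lambda/8\le w_l/4$ off $E$, this yields a cone-type inequality in which the $\ell_1$ mass of $\Delta$ off $E$ is controlled by $\sqrt{2s}\,\norm{\Delta_E}_2$. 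The inequality also shows that the map is nonexpansive toward the ball of radius $4\sqrt s\lambda/\kappa$. I expect this to be the main obstacle, since the prox step mixes $x$ and $x^+$ and the constants must close at exactly $4\sqrt s\lambda/\kappa$. I would first try the one-step estimate
\[
\norm{\Delta}_2\le(1-\kappa/M)\norm{x-\thstar}_2+(\kappa/M)\cdot(\text{fixed-point radius}),
\]
using nonexpansiveness of the soft-threshold around the point $\thstar-\mathrm{sgn}\cdot w/M$.

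For (ii), soft thresholding is nonexpansive, so for $x,y\in\cD_\lambda$,
\[
\norm{T_wx-T_wy}_2\le\norm{(I-\Gamma_1/M)(x-y)}_2\le(1-\kappa/M)\norm{x-y}_2,
\]
where the bound holds on the block $\supp x\cup\supp y$ of order at most $2K$. The set $\cD_\lambda$ is closed, so Banach's theorem gives the unique fixed point $x_\lambda$. At $x_\lambda$ the KKT condition holds exactly. The usual restricted strong convexity argument gives $\kappa\norm{\Delta}_2^2\le(\lambda/8)\norm{\Delta}_1+\lambda\norm{\Delta_E}_1-(\lambda/2-\lambda/8)\norm{\Delta_{E^c}}_1$. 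Combined with $|E|\le2s$, this yields $\norm{\Delta}_2\le(9/8)\sqrt{2s}\,\lambda/\kappa$.

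For (iii), geometric convergence makes $\norm{x^{(k+1)}-x^{(k)}}_2\to0$. The prox identity bounds the KKT residual at $x^{(k+1)}$ by $\norm{(M I-\Gamma_1)(x^{(k)}-x^{(k+1)})}_\infty\le M\norm{x^{(k)}-x^{(k+1)}}_2$, so any positive tolerance is reached in finitely many steps. For a point $x\in\cD_\lambda$ whose residual is at most $\epsilon$, compare its KKT condition with that of $x_\lambda$. Monotonicity of the subdifferential plus sparse curvature on the union of supports, of order at most $2K$, give $\kappa\norm{x-x_\lambda}_2^2\le\epsilon\sqrt{2K}\norm{x-x_\lambda}_2$. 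By \eqref{eq:tol}, $\norm{x-x_\lambda}_2\le\sqrt s\lambda/(4\kappa)$. Adding $(9\sqrt2/8)\approx1.591$ and $1/4$ gives a total below $2\sqrt s\lambda/\kappa$.
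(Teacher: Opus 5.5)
Your support count in (i) is the paper's argument. Your restricted-strong-convexity bound at $x_\lambda$ is a valid alternative to the paper's route, which reads the same constant $9\sqrt2/8$ off the fixed point of a one-step radius recursion. The $\epsilon$-KKT comparison in (iii) is also the paper's.

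\textbf{First gap: the contraction step in (ii) is wrong as written.} The middle inequality $\norm{(I-\Gamma_1/M)(x-y)}_2\le(1-\kappa/M)\norm{x-y}_2$ concerns a dense $p$-vector. The sparse-block hypothesis controls rows inside a block containing $V=\supp x\cup\supp y$. It does not control the aggregate of the rows $\Gamma_{1,V^cV}(x-y)_V$ outside it.

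For instance, take $\Gamma_1=I+\epsilon(e_1u^\top+ue_1^\top)$ with $u_1=0$ and $u_l=c$ otherwise. Every principal block of order at most $4K$ has spectrum in $[1-\epsilon c\sqrt{4K},\,1+\epsilon c\sqrt{4K}]$. Yet $\norm{(I-\Gamma_1/M)e_1}_2\ge\epsilon c\sqrt{p-1}/M$, which exceeds $1-\kappa/M$ once $p>16K+1$.

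The missing observation is also why the lemma assumes blocks of order $4K$ rather than $2K$. Soft-thresholding acts coordinatewise and returns $0$ wherever both outputs vanish, so $\norm{T_wx-T_wy}_2\le\norm{[(I-\Gamma_1/M)(x-y)]_O}_2$ with $O=\supp(T_wx)\cup\supp(T_wy)$. The block $O\cup\supp x\cup\supp y$ then has order at most $4K$, but only once you know the outputs are $K$-sparse.

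Similarly, your bound $\norm{(MI-\Gamma_1)d}_\infty\le M\norm{d}_2$ in (iii) is true, but it must be justified row by row through the blocks $\{l\}\cup\supp d$, of order at most $2K+1$. $\Gamma_1$ has no global spectral bound to appeal to.

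\textbf{Second gap: the radius half of (i) is left open.} Without it, neither the $K$-sparsity of all iterates nor the contraction on $\cD_\lambda$ follows.

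The paper closes it by comparing with $T_w(\thstar)$. For $l\notin E$, $\thstar_l=0$ and $|v_l|\le\lambda/8<w_l$, so $T_w(\thstar)_l=0$. On $E$, the soft-threshold displacement is at most $(|v_l|+w_l)/M\le9\lambda/(8M)$. Hence $\norm{T_w(\thstar)-\thstar}_2\le c_0\sqrt s\lambda/M$ with $c_0=9\sqrt2/8$.

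Nonexpansiveness, restricted as above to the output supports of $T_w(x)$ and $T_w(\thstar)$ together with $\supp(x)\cup\cS$ (order at most $2K+3s\le4K$), then gives $\norm{T_w(x)-\thstar}_2\le(1-\kappa/M)\norm{x-\thstar}_2+c_0\sqrt s\lambda/M\le4\sqrt s\lambda/\kappa$, since $c_0<4$.

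The right center is therefore $z(\thstar)=\thstar-v/M$, whose image vanishes off $E$. A center that is $0$ off $\cS$ leaves $v/M$ on the whole output support. The resulting $\sqrt{K+s}\,\lambda/(8M)$ term does not close at radius $4\sqrt s\lambda/\kappa$ when $M/\kappa$ is large.

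Your prox-optimality route also closes if you carry it out. With $e=x-\thstar$ and $\Delta=T_w(x)-\thstar$, you have $M\Delta=(MI-\Gamma_1)e-v-w\odot\xi$. Pair this with $\Delta$, work on the block $\supp\Delta\cup\supp e$ (order at most $2K+s$), and apply your fixed-point cone estimate. This gives $M\norm{\Delta}_2^2\le(M-\kappa)\norm{\Delta}_2\norm{e}_2+(9/8)\lambda\sqrt{2s}\norm{\Delta}_2$, which is the same one-step bound.
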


For the proof, write $G=\Gamma_1$, $v=G\thstar-c$, and $C_K=(32/3)^2(M/\kappa-1)^2s$, so that $K=2s+\lceil C_K\rceil$. Denote the update map by $T$.

\subsubsection{Support Invariance}
Fix $x\in\cD_\lambda$ and $e=x-\thstar$. If $\norm{Tx}_0>K$, choose $J\subseteq\supp(Tx)\setminus E$ with $|J|=K-2s+1$. Since $\thstar_J=0$, activation implies
\begin{equation*}
 \frac{\lambda}{2M}\sqrt{|J|}
 <\norm{[(I-G/M)e]_J-v_J/M}_2.
\end{equation*}
Set $U=J\cup\supp(x)\cup\cS$. Its cardinality is at most
$K-2s+1+K+s=2K-s+1\le4K$. The sparse eigenvalue assumption implies
\begin{equation*}
\begin{aligned}
 \norm{[(I-G/M)e]_J}_2
 &\le\norm{I_U-G_{UU}/M}_{\mathrm{op}}\norm{e}_2\\
 &\le(1-\kappa/M)\norm{e}_2.
\end{aligned}
\end{equation*}
Thus $(3\lambda/8)\sqrt{|J|}<(M-\kappa)\norm{e}_2\le4(M-\kappa)\sqrt s\lambda/\kappa$, giving $|J|<C_K$, contrary to $|J|=\lceil C_K\rceil+1$. This proves output sparsity without assuming it. For $K\ge p$, sparsity is trivial, but the following arguments require full positive definiteness.

\subsubsection{Radius Invariance}
For $l\notin E$, $\thstar_l=0$ and $|v_l|\le\lambda/8<w_l$, so $T(\thstar)_l=0$. On $E$, soft-threshold displacement is bounded by input displacement plus threshold. Hence
\begin{equation*}
 \norm{T(\thstar)-\thstar}_2
 \le\sqrt{2s}\frac{\lambda+\lambda/8}{M}
 =c_0\frac{\sqrt s\lambda}{M},\qquad c_0=9\sqrt2/8.
\end{equation*}
The union of the output supports of $T(x),T(\thstar)$ and $\supp(x)\cup\cS$ has size at most $2K+3s\le4K$, since $K\ge2s$. Nonexpansiveness gives
\begin{equation*}
\begin{aligned}
 \norm{T(x)-\thstar}_2
 &\le(1-\kappa/M)\norm{x-\thstar}_2+c_0\sqrt s\lambda/M\\
 &\le4\sqrt s\lambda/\kappa.
\end{aligned}
\end{equation*}
The last step uses $c_0<4$; support invariance now gives $T(\cD_\lambda)\subseteq\cD_\lambda$.

\subsubsection{Contraction}
For $x,y\in\cD_\lambda$, the union $U$ of their input and output supports has $|U|\le4K$. Since $x-y$ is supported in $U$, nonexpansiveness and sparse curvature imply
\begin{equation*}
 \norm{T(x)-T(y)}_2\le(1-\kappa/M)\norm{x-y}_2.
\end{equation*}
The set $\cD_\lambda$, a ball intersected with finitely many coordinate subspaces, is closed. Contraction gives a unique fixed point, convergence throughout the set, and, by the radius recursion, $\norm{x_\lambda-\thstar}_2\le c_0\sqrt s\lambda/\kappa$.

Soft-threshold fixed points are equivalent to $Gx_\lambda-c+w\odot\xi_\lambda=0$ for some $\xi_\lambda\in\partial\norm{x_\lambda}_1$, establishing stationarity in the sparse neighborhood.

\subsubsection{Stopping Rule}
For an $\epsilon$-KKT point $x\in\cD_\lambda$, choose $r,\xi$ with $Gx-c+w\odot\xi=r$ and $\norm{r}_\infty\le\epsilon$. Subtract the fixed-point relation and take the inner product with $d=x-x_\lambda$. Convexity of the weighted absolute-value function gives subgradient monotonicity; $d$ has at most $2K$ nonzeros. Thus
\begin{equation*}
 \kappa\norm{d}_2^2\le d^\top r\le\norm{d}_1\epsilon\le\sqrt{2K}\norm{d}_2\epsilon.
\end{equation*}
Thus $\norm{d}_2\le\sqrt{2K}\epsilon/\kappa$. With the specified tolerance, the total error is at most $(c_0+1/4)\sqrt s\lambda/\kappa<2\sqrt s\lambda/\kappa$.

For $x^+=T(x)$, proximal optimality gives KKT residual $(G-MI)(x^+-x)$. This difference has at most $2K$ nonzeros. Adding any output coordinate to its support allows the sparse spectral bound to control that row, giving $\kktr_w(x^+)\le2M\norm{x^+-x}_2$. For $0<\kappa<M$, step differences decay at $a_0=1-\kappa/M$; with $r_0=\norm{x_0-x_\lambda}_2$, $\kktr_w(x_{k+1})\le2M(1+a_0)a_0^k r_0$.
Solving this inequality for $k$ gives finite stopping at any positive tolerance. If $\kappa=M$, one step reaches the fixed point.

\subsection{Outer Recursion}\label{supp-sec:outerproof}
For SCAD~\cite{suppbib-fan2001},
\begin{equation*}
 p'_{\Lambda}(z)=\begin{cases}\Lambda,&z\le\Lambda,\\
 (a\Lambda-z)/(a-1),&\Lambda<z<a\Lambda,\\0,&z\ge a\Lambda,
 \end{cases}
\end{equation*}
whereas MCP~\cite{suppbib-zhang2010} has $p'_{\Lambda}(z)=(\Lambda-z/a)_+$. In both cases $p'_{\Lambda}(z)\ge\Lambda/2$ for $z\le c_h\Lambda$, with the constants in the main text. At outer round $t$, define $E=\cS\cup\{l\notin\cS:|u_l|>c_h\Lambda_t\}$. Then
\begin{equation*}
 |E\setminus\cS|(c_h\Lambda_t)^2\le\norm{u-\thstar}_2^2\le R_{t-1}^2,
\end{equation*}
so $|E|\le2s$ by the base-level choice. All levels use the same shape $h$ and therefore the same exceptional set.

Initially, $\norm{u-\thstar}_2\le2\sqrt s\Lambda_t/\kappa$, and the target score bound holds at every larger level. Each stage starts in its invariant domain and ends within $2\sqrt s\lambda_{\rm old}/\kappa$ of truth. Since $\lambda_{\rm new}\ge\lambda_{\rm old}/2$, this is at most $4\sqrt s\lambda_{\rm new}/\kappa$. Induction covers all stages and inner iterates through $\lambda_t$.

For the outer induction, expand the local gradient around the truth:
\begin{equation*}
 g_j(u)-g_\cH(u)=g_j(\thstar)-g_\cH(\thstar)+(\Gamma_j-\Gamma_\cH)(u-\thstar).
\end{equation*}
The current error has support at most $K+s$. Lemma~\ref{supp-lem:trim} therefore bounds $\norm{E(u)}_\infty$ by $\omega(\tau+dR_{t-1})$. The surrogate identity in main-text~\eqref{eq:score} yields
\begin{equation*}
 \norm{\nabla\ell_u(\thstar)}_\infty\le a_H+dR_{t-1}+\omega(\tau+dR_{t-1})=A+BR_{t-1}.
\end{equation*}
With $\lambda_t=8(A+BR_{t-1})$, Lemma~\ref{supp-lem:path} bounds the error by $R_t=16\sqrt s(A+BR_{t-1})/\kappa$. Iterating proves the outer bound. Sparse-region uniformity covers data-dependent iterates and attacks without sample splitting.

\begin{corollary}[Support recovery]\label{supp-cor:support}
If the bound $R_t>0$ is valid and $\min_{l\in\cS}|\thstar_l|>2R_t$, then $\{l:|\theta_{t,l}|>R_t\}=\cS$.
\end{corollary}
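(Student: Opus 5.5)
The plan is to use only the $\ell_2$ bound $\norm{\theta_t-\thstar}_2\le R_t$ and the signal gap, and to treat the two inclusions separately. The bound $\norm{\theta_t-\thstar}_2\le R_t$ is what Theorem~\ref{thm:outer}(ii) provides on $\mathcal E$; here it is simply assumed valid. The key observation is coordinatewise: for every $l$,
\begin{equation*}
 |\theta_{t,l}-\thstar_l|\le\norm{\theta_t-\thstar}_\infty\le\norm{\theta_t-\thstar}_2\le R_t .
\end{equation*}

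\emph{Inclusion $\cS\subseteq\{l:|\theta_{t,l}|>R_t\}$.} Take $l\in\cS$. By the reverse triangle inequality and the minimum-signal condition,
\begin{equation*}
 |\theta_{t,l}|\ge|\thstar_l|-|\theta_{t,l}-\thstar_l|>2R_t-R_t=R_t .
\end{equation*}
So $l$ survives thresholding.

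\emph{Inclusion $\{l:|\theta_{t,l}|>R_t\}\subseteq\cS$.} Take $l\notin\cS$. Then $\thstar_l=0$, so $|\theta_{t,l}|=|\theta_{t,l}-\thstar_l|\le R_t$, and $l$ is excluded because the threshold inequality is strict.

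There is no substantive obstacle; the only care needed is with strict versus nonstrict inequalities. The strict hypothesis $\min_{l\in\cS}|\thstar_l|>2R_t$ gives strict exceedance on $\cS$. The nonstrict error bound gives $|\theta_{t,l}|\le R_t$ off $\cS$, which the strict threshold $>R_t$ correctly excludes. The assumption $R_t>0$ only rules out the degenerate case where the threshold set is defined with radius zero. In that case the $\ell_2$ bound would force $\theta_t=\thstar$, and the conclusion would already hold.
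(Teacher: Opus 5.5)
Your proposal is correct and follows the paper's proof: both pass from the valid $\ell_2$ bound to $\norm{\theta_t-\thstar}_\infty\le R_t$. Both then conclude that off-support coordinates satisfy $|\theta_{t,l}|\le R_t$, while on-support coordinates satisfy $|\theta_{t,l}|\ge|\thstar_l|-R_t>R_t$.
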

\begin{proof}
$\|\theta_t-\thstar\|_\infty\le R_t$. Hence $|\theta_{t,l}|\le R_t$ for $l\notin\cS$, whereas $|\theta_{t,l}|\ge|\thstar_l|-R_t>R_t$ for $l\in\cS$. Thresholding at $R_t$ therefore recovers $\cS$.
\end{proof}

\subsection{Concentration}\label{supp-sec:concentration}
\subsubsection{Statistical Inputs}\label{supp-sec:statbounds}
We specify the bounds used in main-text Corollary~\ref{cor:statistical}. The IID MCAR model and sub-Gaussian scales are as in main-text Section~\ref{sec:concentration}; node sample sizes are fixed. Use $n_{\min},n_{\max},N_{\mathrm{eff}}$ from Section~\ref{supp-sec:notation}.
Let $\nu_l^2$ be the score variance calculated in main-text Section~\ref{sec:missingness}. The Hessian-coordinate variances are
\begin{equation*}
 \nu_{G,lk}^2=q^{-2}\E(X_l^2X_k^2)-\Sigma_{lk}^2\ (l\ne k),\qquad
 \nu_{G,ll}^2=q^{-1}\E X_l^4-\Sigma_{ll}^2.
\end{equation*}
Set $\nu_V=\max_l\nu_l$ and $\nu_G=\max_{lk}\nu_{G,lk}$. The centered score and Hessian coordinates have exponential tail scales $B_V$ and $B_G$ satisfying
\begin{equation*}
 B_V\lesssim q^{-2}K_X(K_X\norm{\thstar}_2+K_\varepsilon),\qquad
 B_G\lesssim q^{-2}K_X^2.
\end{equation*}

For a failure level $\delta\in(0,1)$, set
\begin{equation*}
 t_\delta=\log(Cmn_{\max}p^2/\delta),\qquad
 \ell_H=\log(Cp/\delta),\qquad \ell_m=\log(Cmp^2/\delta),
\end{equation*}
and define
\begin{equation*}
 D(N,\ell;\nu,B_0)=C\left\{\nu\sqrt{\ell/N}+B_0t_\delta\ell/N\right\}
 +2B_0(t_\delta+1)e^{-t_\delta}.
\end{equation*}
With adjusted absolute constants, a valid simultaneous choice is
\begin{align*}
 a_H&=C\nu_V\sqrt{\ell_H/N_{\mathrm{eff}}}
       +\frac{CB_Vt_\delta\ell_H}{m_{\mathrm H}n_{\min}}
       +2B_V(t_\delta+1)e^{-t_\delta},\\
 \tau&=2D(n_{\min},\ell_m;\nu_V,B_V),\qquad
 \Delta=D(n_{\min},\ell_m;\nu_G,B_G).
\end{align*}
Set $d=2\sqrt{K+s}\Delta$, with $K$ given by main-text~\eqref{eq:K}. The proof below gives these bounds on the same event of probability at least $1-\delta$. Set $\omega=3b/(m-2b)$, $A=a_H+\omega\tau$, and $B=(1+\omega)d$.

The score contribution to the statistical residual is separated into
\begin{align*}
 V_\delta&=\nu_V\left\{\sqrt{\frac{\ell_H}{N_{\rm eff}}}
             +\omega\sqrt{\frac{\ell_m}{n_{\min}}}\right\},\\
 Q_\delta&=B_V\left\{\frac{t_\delta\ell_H}{m_{\rm H}n_{\min}}
              +\frac{\omega t_\delta\ell_m}{n_{\min}}+(1+\omega)(t_\delta+1)e^{-t_\delta}\right\}.
\end{align*}
The two terms of $V_\delta$ describe leading variance contributions from the honest average and trimming. In $Q_\delta$, the first two terms are finite-sample tail contributions and the last is the truncation-bias contribution. Adjusting absolute constants gives $A\le C(V_\delta+Q_\delta)$; no term of $Q_\delta$ is discarded in the main-text error bound.

\subsubsection{Concentration Proof}

\begin{lemma}[Proof-only truncation]\label{supp-lem:concentration}
Let $V_1,\ldots,V_N$ be independent centered scalars satisfying $\E V_i^2\le\nu^2$ and $\Pp(|V_i|>t)\le2e^{-t/B_0}$. Fix $t_\delta,\ell>0$ and define $T=B_0t_\delta$. Except on an event of probability at most $2Ne^{-t_\delta}+2e^{-\ell}$,
\begin{equation*}
\begin{aligned}
 \left|N^{-1}\sum_{i=1}^N V_i\right|
 &\le\nu\sqrt{2\ell/N}+\frac{4B_0t_\delta\ell}{3N}\\
 &\quad+2B_0(t_\delta+1)e^{-t_\delta}.
\end{aligned}
\end{equation*}
\end{lemma}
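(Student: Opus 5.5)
I would prove Lemma~\ref{supp-lem:concentration} by truncation at level $T=B_0t_\delta$, followed by Bernstein's inequality applied to the bounded truncated parts. For each $i$ write
\begin{equation*}
V_i=\bar V_i+\E(V_i\ind\{|V_i|\le T\})+V_i\ind\{|V_i|>T\},
\end{equation*}
where $\bar V_i=V_i\ind\{|V_i|\le T\}-\E(V_i\ind\{|V_i|\le T\})$. The average then splits into three pieces, which I bound separately.

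\emph{Tail event.} On the event $\bigcap_i\{|V_i|\le T\}$ the untruncated remainders $V_i\ind\{|V_i|>T\}$ all vanish. By the tail assumption and a union bound, the complement has probability at most $N\cdot2e^{-T/B_0}=2Ne^{-t_\delta}$. This accounts for the first failure term.

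\emph{Truncation bias.} Because $\E V_i=0$, we have $|\E(V_i\ind\{|V_i|\le T\})|=|\E(V_i\ind\{|V_i|>T\})|\le\E(|V_i|\ind\{|V_i|>T\})$. I would evaluate this with the layer-cake formula:
\begin{equation*}
\E|V_i|\ind\{|V_i|>T\}=T\,\Pp(|V_i|>T)+\int_T^\infty\Pp(|V_i|>u)\,du\le2Te^{-t_\delta}+2B_0e^{-t_\delta}=2B_0(t_\delta+1)e^{-t_\delta}.
\end{equation*}
This yields exactly the deterministic third term.

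\emph{Bernstein step.} The variables $\bar V_i$ are independent, centered, bounded by $2T$, and satisfy $\E\bar V_i^2\le\E V_i^2\ind\{|V_i|\le T\}\le\nu^2$. The two-sided Bernstein inequality with deviation parameter $\ell$ gives, off an event of probability $2e^{-\ell}$,
\begin{equation*}
|N^{-1}\textstyle\sum_i\bar V_i|\le\nu\sqrt{2\ell/N}+2T\ell/(3N).
\end{equation*}
Here the bound $2T$ on $|\bar V_i|$ produces the constant $2\cdot2T/3=4B_0t_\delta/3$, matching the stated $4B_0t_\delta\ell/(3N)$. To be safe, I would use the form with range bound $c=2T$, namely $\sqrt{2\nu^2\ell/N}+c\ell/(3N)$, and check that the constant matches; if it does not, I would instead use $|\bar V_i|\le 2T$ in the sharper form. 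Combining the three pieces on the intersection of the good events gives the claim.

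The only delicate point is this constant bookkeeping, and in particular confirming that the variance of the truncated variable is still bounded by $\nu^2$ after centering. That holds because centering never increases the variance.
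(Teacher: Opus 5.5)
Your proposal is correct and follows the paper's proof essentially step for step. Like the paper, you truncate at $T=B_0t_\delta$, bound the lost mean $|\E V_i\ind\{|V_i|>T\}|$ by the layer-cake tail integral to get $2B_0(t_\delta+1)e^{-t_\delta}$, apply two-sided Bernstein to the centered truncations (bounded by $2T$, variance at most $\nu^2$), and union-bound the no-truncation event at cost $2Ne^{-t_\delta}$.

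On the constant, your displayed $2T\ell/(3N)$ is the valid sub-gamma form of Bernstein with range $2T$, which is sharper than needed. The classical inverted form $\nu\sqrt{2\ell/N}+2c\ell/(3N)$ with $c=2T$ gives exactly the stated $4T\ell/(3N)$. Either suffices, so the hedging in your last paragraph is unnecessary.
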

\begin{proof}
Define $W_i=V_i\ind\{|V_i|\le T\}$. Then $\operatorname{Var}(W_i)\le\E W_i^2\le\nu^2$ and $|W_i-\E W_i|\le2T$. The lost expectation satisfies
\begin{align*}
 |\E W_i|&=|\E V_i\ind\{|V_i|>T\}|\\
 &\le T\Pp(|V_i|>T)+\int_T^\infty\Pp(|V_i|>t)\,dt
 \le2B_0(t_\delta+1)e^{-t_\delta}.
\end{align*}
Bernstein's inequality bounds the centered $W_i$ average by $\nu\sqrt{2\ell/N}+4T\ell/(3N)$ with failure probability $2e^{-\ell}$. All $V_i=W_i$ except on an event of probability $2Ne^{-t_\delta}$. Adding the bias proves the claim.
\end{proof}

Apply the lemma to at most $p+p^2$ centered coordinates per honest observation. With $t_\delta=\log(Cmn_{\max}p^2/\delta)$, any truncation consumes at most a fixed fraction of $\delta$. Use $\ell_m=\log(Cmp^2/\delta)$ for the union over nodes/coordinates, and $\ell_H=\log(Cp/\delta)$ over honest-mean score coordinates. The common no-truncation event covers both. Enlarging universal $C$ allocates the remaining failure probability. The resulting bounds in Section~\ref{supp-sec:statbounds} hold jointly with probability at least $1-\delta$.
For completeness, write one coordinate of the honest reference score as a weighted sum of independent centered observations:
\begin{equation*}
 (g_\cH(\thstar))_l=\sum_{j\in\cH}\sum_{i=1}^{n_j}
 \frac{\phi_{ji,l}}{m_{\mathrm H}n_j}.
\end{equation*}
Weights sum to one, have squared sum $1/N_{\mathrm{eff}}$ and maximum $1/(m_{\mathrm H}n_{\min})$. Truncating at $B_Vt_\delta$ gives weighted variance at most $\nu_V^2/N_{\mathrm{eff}}$, centered term bound $2B_Vt_\delta/(m_{\mathrm H}n_{\min})$, and bias at most $2B_V(t_\delta+1)e^{-t_\delta}$. Weighted Bernstein proves the score bound. The no-truncation event covers at most $mn_{\max}$ honest observations. Applying the scalar lemma at each $n_j$ and upper-bounding by $D(n_{\min},\ell_m;\nu,B_0)$ proves the remaining bounds. Sizes are fixed independently of observations; equal sizes give $N_{\mathrm{eff}}=m_{\mathrm H}n$.

The node-average score is also bounded by the maximum local bound, giving $\tau=2D(n_{\min},\ell_m;\nu_V,B_V)$.
The entrywise difference $\Gamma_j-\Gamma_\cH$ is at most $2\Delta$. For $r$-sparse $e$, its product with this matrix is bounded in sup norm by $2\Delta\norm{e}_1\le2\Delta\sqrt r\norm{e}_2$. Taking $r=K+s$ yields the specified $d$.

For a support $U$ of order $r\le4K$, $\norm{(\Gamma_1-\Sigma)_{UU}}_{\mathrm{op}}\le r\Delta$. Weyl's eigenvalue bound proves the sparse curvature event under the condition in the main text. To interpret a sample-size requirement, let $g_0=\min(\mu-\kappa,M-\nu_{\rm pop})>0$ and assume the exponentially small bias term is at most $g_0/(12K)$. Sufficient leading conditions have the form
\begin{equation*}
 n_{\min}\gtrsim \frac{K^2\nu_G^2\ell_m}{g_0^2}+\frac{K B_Gt_\delta\ell_m}{g_0}.
\end{equation*}
The logarithmic factor $t_\delta$ depends on $n_{\max}$. The bound uses the smallest honest-node sample size to control all local Hessians, including the master's. Outer contraction further requires $\rho=32\sqrt{s(K+s)}(1+\omega)\Delta/\kappa<1$. For bounded population fourth moments, $\nu_G^2$ is of order at most $q^{-2}$, while $B_G$ remains of order $q^{-2}$.

For the tail bounds, condition on masks: the off-diagonal signal's linear coefficient norm is at most a constant times $\norm{\thstar}_2/q$. Multiplication by $M_{X,l}X_l/q$ gives a conservative $q^{-2}$ score scale, since products of sub-Gaussian variables are sub-exponential even under dependence. Centering changes only constants. The noise term permits the sharper $q^{-1}K_XK_\varepsilon$ bound; the conservative scale suffices.

For main-text Corollary~\ref{cor:statistical}, the common event controls the honest score, node dispersion and master curvature. Thus $A=a_H+\omega\tau\le C(V_\delta+Q_\delta)$ after adjusting $C$. Substitute this and $d=2\sqrt{K+s}\Delta$ into~\eqref{eq:recursion} to obtain $\rho$ and the stated error bound.

\section{Additional Results}\label{supp-contents-3}\label{supp-sec:additional}
\subsection{Endpoint Error}\label{supp-sec:endpoint}\label{supp-sec:endpointproof}
Main-text Theorem~\ref{thm:outer} controls every inner iterate of the bound-based policy. The next result bounds the estimates returned by the data-tuned policy using curvature only on their supports together with the true support. Its coefficients depend on $\thstar$, $\cS$ and honest-node identities, which the simulations use for evaluation.
Let $x_0$ be the initialization and, for $t\ge1$, let $x_t$ be returned from the surrogate formed at $x_{t-1}$, with nonnegative weights $w_t$ and full KKT residual at most $\epsilon_t$. For $t\ge0$, write $U_t=\cS\cup\supp(x_t)$, $r_t=|U_t|$, and $e_t=x_t-\thstar$.
For $t\ge1$ with nonempty $U_t$, write $D_{jt}=(\Gamma_j-\Gamma_\cH)_{U_t,U_{t-1}}$ and define
\begin{align*}
 \kappa_t&=\lambda_{\min}(\Gamma_{1,U_tU_t}),\qquad a_t=\|g_\cH(\thstar)_{U_t}\|_\infty,\\
 \tau_t&=\max_{j\in\cH}\|(g_j(\thstar)-g_\cH(\thstar))_{U_t}\|_\infty,\qquad d_t=\max_{j\in\cH}\|D_{jt}\|_{2\to\infty},
\end{align*}
where $\|D\|_{2\to\infty}$ is the largest row Euclidean norm; a matrix with an empty index set has norm zero. With $\omega=3b/(m-2b)$, put
\begin{equation*}
 \alpha_t=\frac{\|D_{1t}\|_{\rm op}+\omega\sqrt{r_t}d_t}{\kappa_t},\qquad
 \zeta_t=\frac{\sqrt{r_t}(a_t+\omega\tau_t+\epsilon_t)+\|w_{t,U_t}\|_2}{\kappa_t}.
\end{equation*}
Here $\alpha_t$ propagates the previous round's error, while $\zeta_t$ accounts for score variation, Byzantine contamination, penalization and numerical error. If $U_t=\varnothing$, then $e_t=0$ and we set $\alpha_t=\zeta_t=0$.
\begin{theorem}[Endpoint recursion]\label{supp-thm:endpoint}
If $m_{\rm B}\le b<m/2$ and $\kappa_t>0$ at each returned endpoint with $U_t\ne\varnothing$, then
\begin{equation}\label{supp-eq:endpoint}
 \|e_t\|_2\le\alpha_t\|e_{t-1}\|_2+\zeta_t.
\end{equation}
Consequently, if $\alpha_t\le\bar\alpha<1$ and $\zeta_t\le\bar\zeta$ for $1\le t\le T$, then
\begin{equation*}
 \|e_T\|_2\le\bar\alpha^T\|e_0\|_2
       +\frac{\bar\zeta(1-\bar\alpha^T)}{1-\bar\alpha}.
\end{equation*}
The conclusion allows data-dependent supports, weights and adaptive Byzantine messages.
\end{theorem}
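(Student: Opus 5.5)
The plan is to combine the approximate KKT condition at the returned endpoint $x_t$ with a restricted strong-convexity argument on $U_t$, using identity~\eqref{eq:score} at $u=x_{t-1}$ to expose the previous error. By the KKT definition, there exist $\xi\in\partial\norm{x_t}_1$ and $r$ with $\norm{r}_\infty\le\epsilon_t$ such that
\begin{equation*}
 \Gamma_1(x_t-x_{t-1})+\ghat(x_{t-1})+w_t\odot\xi=r .
\end{equation*}
Writing $\nabla\ell_u(x_t)=\Gamma_1e_t+\nabla\ell_u(\thstar)$ with $u=x_{t-1}$, and substituting~\eqref{eq:score}, I obtain
\begin{equation*}
 \Gamma_1e_t=r-w_t\odot\xi-g_\cH(\thstar)-E(x_{t-1})-(\Gamma_\cH-\Gamma_1)e_{t-1}.
\end{equation*}
Since $e_t$ is supported on $U_t$, I take the inner product with $e_t$, which only involves the $U_t$ coordinates of the right-hand side. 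The left side is at least $\kappa_t\norm{e_t}_2^2$ by the definition of $\kappa_t$. After dividing by $\norm{e_t}_2$, the bound reduces to controlling the $\ell_2$ norm of each right-hand term restricted to $U_t$.

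I bound the terms as follows.
\begin{itemize}
\item For $r$, I use $\norm{r_{U_t}}_2\le\sqrt{r_t}\epsilon_t$.
\item For the penalty term, $|\xi_l|\le1$ gives $\norm{(w_t\odot\xi)_{U_t}}_2\le\norm{w_{t,U_t}}_2$.
\item For the honest score, $\norm{g_\cH(\thstar)_{U_t}}_2\le\sqrt{r_t}a_t$.
\item For the curvature term, $e_{t-1}$ is supported on $U_{t-1}$, so $[(\Gamma_\cH-\Gamma_1)e_{t-1}]_{U_t}=-D_{1t}e_{t-1}$. This has norm at most $\norm{D_{1t}}_{\rm op}\norm{e_{t-1}}_2$.
\end{itemize}

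The Byzantine term $E(x_{t-1})_{U_t}$ is where the trimmed-mean Lemma~\ref{supp-lem:trim} enters. I apply it coordinatewise for $l\in U_t$ to the honest values $g_j(x_{t-1})_l$, with honest mean $g_\cH(x_{t-1})_l$. The expansion $g_j(u)-g_\cH(u)=g_j(\thstar)-g_\cH(\thstar)+(\Gamma_j-\Gamma_\cH)(u-\thstar)$ then gives each honest deviation at row $l$ a bound of $\tau_t+\norm{D_{jt}}_{2\to\infty}\norm{e_{t-1}}_2$. The row bound holds because row $l$ of $(\Gamma_j-\Gamma_\cH)$ restricted to the columns in $U_{t-1}$ has Euclidean norm at most $d_t$. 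Hence $|E_l|\le\omega(\tau_t+d_t\norm{e_{t-1}}_2)$, and this holds deterministically for every adaptive message set. Summing over $U_t$ gives $\norm{E_{U_t}}_2\le\sqrt{r_t}\,\omega(\tau_t+d_t\norm{e_{t-1}}_2)$. Collecting all the terms and dividing by $\kappa_t$ yields exactly $\norm{e_t}_2\le\alpha_t\norm{e_{t-1}}_2+\zeta_t$.

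The degenerate case $U_t=\varnothing$ is trivial, since then $e_t=0$. When $U_{t-1}=\varnothing$, we have $e_{t-1}=0$, and the $D$ terms vanish by the empty-index convention.

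For the consequence, I unroll the recursion by induction: $\norm{e_T}_2\le\bar\alpha^T\norm{e_0}_2+\bar\zeta\sum_{k<T}\bar\alpha^k$, and the geometric sum gives the stated bound. Everything above is pathwise: no probability is used, and the supports, weights and reports may be arbitrary. This justifies the final claim that data-dependent quantities and adaptive attacks are allowed.

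The main obstacle is bookkeeping in the curvature and trimming terms. One must check that only the $U_t\times U_{t-1}$ block of $\Gamma_j-\Gamma_\cH$ is ever needed. This relies on $e_{t-1}$ being supported in $U_{t-1}$ and on restricting attention to rows in $U_t$. It also requires that the trimmed-mean lemma be applied with row-wise $2\to\infty$ norms rather than operator norms, which is why the factor $\sqrt{r_t}$ multiplies $d_t$ in $\alpha_t$.
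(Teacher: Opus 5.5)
Your proof is correct and follows the paper's argument essentially step for step. It uses the same rearrangement of the $\epsilon_t$-KKT relation via identity~\eqref{eq:score} at $u=x_{t-1}$, the same coordinatewise application of Lemma~\ref{supp-lem:trim} with row-wise $2\to\infty$ bounds on the $U_t\times U_{t-1}$ block, and the same geometric unrolling; the only cosmetic difference is that you get $\kappa_t\|e_t\|_2\le\|(\text{right-hand side})_{U_t}\|_2$ by pairing with $e_t$ rather than by inverting $\Gamma_{1,U_tU_t}$ directly, which is equivalent.
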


\begin{proof}
Condition on realized samples, messages, weights and endpoints. Set $u=x_{t-1}$ and use Theorem~\ref{supp-thm:endpoint}'s notation. The empty-$U_t$ case is immediate. Otherwise, $u-\thstar$ is supported on $U_{t-1}$; the honest-gradient identity and coordinatewise trimming give
\begin{align*}
 \max_{j\in\cH}\|(g_j(u)-g_\cH(u))_{U_t}\|_\infty
 &\le\tau_t+d_t\|e_{t-1}\|_2,\\
 \|E(u)_{U_t}\|_2
 &\le\sqrt{r_t}\omega(\tau_t+d_t\|e_{t-1}\|_2).
\end{align*}
Choose $\xi_t\in\partial\|x_t\|_1$ so that the full residual
$z_t=\Gamma_1x_t-c(u)+w_t\odot\xi_t$ has sup norm at most $\epsilon_t$. Combining this relation with the surrogate identity yields
\begin{align*}
 \Gamma_{1,U_tU_t}e_{t,U_t}
 &=z_{t,U_t}-(w_t\odot\xi_t)_{U_t}
   -g_\cH(\thstar)_{U_t}-E(u)_{U_t}\\
 &\quad+D_{1t}e_{t-1,U_{t-1}}.
\end{align*}
Here $D_{1t}=(\Gamma_1-\Gamma_\cH)_{U_t,U_{t-1}}$ fixes the last term's sign; its operator norm controls curvature mismatch directly. Positive definiteness on $U_t$ yields
\begin{align*}
 \kappa_t\|e_t\|_2
 &\le\{\|D_{1t}\|_{\rm op}+\omega\sqrt{r_t}d_t\}\|e_{t-1}\|_2\\
 &\quad+\sqrt{r_t}(a_t+\omega\tau_t+\epsilon_t)+\|w_{t,U_t}\|_2.
\end{align*}
Division proves \eqref{supp-eq:endpoint}, without independence between $U_t$ and the data. Induction from $R_0\ge\|e_0\|_2$ with $R_t=\alpha_tR_{t-1}+\zeta_t$ gives $\|e_t\|_2\le R_t$, even for $\alpha_t\ge1$. If all $\alpha_t\le\bar\alpha<1$ and $\zeta_t\le\bar\zeta$, sum the geometric series. Empty true and estimated supports give zero error without inversion.

\end{proof}

This curvature condition concerns the realized supports; uniform path theory additionally controls support growth and termination. Simultaneous entrywise concentration can bound the random coefficients. Simulations instead audit exact values using truth after fitting. Theorem~\ref{supp-thm:local} controls continuation of a fixed subproblem from an admitted endpoint; Theorem~\ref{supp-thm:calibration} gives observable radii from independent samples.

\subsection{Oracle Conditions}\label{supp-sec:oracle}
Fix symmetric $(G,c)$ and frozen weights $w_l=(\lambda/\Lambda)p'_\Lambda(|u_l|)\ge0$. Define $x^o_{\cS}=G_{\cS\cS}^{-1}c_{\cS}$ and $x^o_{\cS^c}=0$ when $\cS\ne\varnothing$ and $G_{\cS\cS}$ is positive definite; set $x^o=0$ when $\cS=\varnothing$. Use the KKT residual from main-text Section~\ref{sec:boundpolicy} with $(G,c)$.

\begin{proposition}[Fixed-surrogate oracle]\label{supp-prop:oracle}
Suppose $w_{\cS}=0$ and
\begin{equation*}
 \gamma=\min_{l\notin\cS}\{w_l-|(Gx^o-c)_l|\}>0,
\end{equation*}
with $\gamma=\infty$ when $\cS^c=\varnothing$. Let $x$ satisfy $\kktr_w(x)\le\epsilon$ and $\|x\|_0\le K$. On $V=\cS\cup\supp(x)\ne\varnothing$, assume $\kappa I\preceq G_{VV}\preceq MI$ with $\kappa>0$. Then $x^o$ satisfies the full weighted KKT system and
\begin{equation}\label{supp-eq:oracle-distance}
 \|x-x^o\|_2\le e_{\rm opt}:=\sqrt{K+s}\epsilon/\kappa.
\end{equation}
If also $\min_{l\in\cS}|x_l^o|>e_{\rm opt}$ and $Me_{\rm opt}+\epsilon<\gamma$, then $\supp(x)=\cS$. The signal condition is vacuous when $\cS=\varnothing$; the inactive condition is vacuous when $\cS^c=\varnothing$. If $V=\varnothing$, both conclusions hold directly.
\end{proposition}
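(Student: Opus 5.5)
First I will check that $x^o$ satisfies the full weighted KKT system. On $\cS$, $w_\cS=0$ and $G_{\cS\cS}x^o_\cS=c_\cS$, so $(Gx^o-c)_\cS=0$. Thus any subgradient $\xi_\cS$ gives zero residual there. Off $\cS$, $x^o_l=0$, and $\gamma>0$ gives $|(Gx^o-c)_l|<w_l$. So $\xi_l=-(Gx^o-c)_l/w_l\in[-1,1]$ makes the residual vanish, and $\kktr_w(x^o)=0$. When $\cS=\varnothing$, $x^o=0$ and the same inactive argument applies.

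For the distance bound~\eqref{supp-eq:oracle-distance} I will reuse the monotonicity argument from the stopping-rule step of Lemma~\ref{supp-lem:path}. Pick $\xi\in\partial\norm{x}_1$ with $r=Gx-c+w\odot\xi$ and $\norm{r}_\infty\le\epsilon$. Let $\xi^o$ be the oracle subgradient, so that $Gx^o-c+w\odot\xi^o=0$. Subtracting and pairing with $d=x-x^o$ gives
\[
d^\top Gd+d^\top\{w\odot(\xi-\xi^o)\}=d^\top r.
\]
The middle term is nonnegative, since each $w_l\ge0$ and each coordinate map $x_l\mapsto\partial|x_l|$ is monotone. The difference $d$ is supported on $V$, and $|V|\le K+s$ because $\supp(x^o)\subseteq\cS$. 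Using $G_{VV}\succeq\kappa I$,
\[
\kappa\norm{d}_2^2\le\norm{d}_1\epsilon\le\sqrt{K+s}\,\norm{d}_2\,\epsilon,
\]
which is~\eqref{supp-eq:oracle-distance}.

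For support recovery I will show two inclusions.
\begin{itemize}
\item[(a)] $\cS\subseteq\supp(x)$. For $l\in\cS$, $|x_l|\ge|x^o_l|-\norm{d}_\infty\ge|x^o_l|-e_{\rm opt}>0$.
\item[(b)] $\supp(x)\subseteq\cS$. Suppose $l\notin\cS$ with $x_l\ne0$. The KKT relation on that coordinate gives $|(Gx-c)_l|\ge w_l-\epsilon$. On the other hand,
\[
(Gx-c)_l=(Gx^o-c)_l+(Gd)_l.
\]
Since $d$ is supported on $V\ni l$, we have $|(Gd)_l|\le\norm{G_{VV}d_V}_\infty\le\norm{G_{VV}}_{\rm op}\norm{d}_2\le Me_{\rm opt}$. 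Combined with $|(Gx^o-c)_l|\le w_l-\gamma$, this yields $w_l-\epsilon\le w_l-\gamma+Me_{\rm opt}$. That contradicts $Me_{\rm opt}+\epsilon<\gamma$.
\end{itemize}

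The main point needing care is the row bound in (b). It works only because $l\in V$, so the row of $G$ acting on $d$ stays inside the block $G_{VV}$ where the upper bound $M$ is assumed. This is why the argument must use a coordinate with $x_l\ne0$ rather than an arbitrary inactive coordinate. If $V=\varnothing$, then $x=x^o=0$ and both conclusions hold directly.
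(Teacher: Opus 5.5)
Your proposal is correct and follows the paper's own proof essentially step for step. Both obtain oracle KKT from the active equation and the inactive margin $\gamma>0$, then the distance bound from weighted subgradient monotonicity with $\kappa I\preceq G_{VV}$ on $V\supseteq\supp(d)$. Support recovery likewise comes from the signal condition together with a row bound that uses $l\in V$ and $G_{VV}\preceq MI$.
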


\begin{proof}
The active gradient vanishes, and the inactive margin yields $Gx^o-c+w\odot\xi^o=0$ for some $\xi^o\in\partial\|x^o\|_1$. Choose $\xi\in\partial\|x\|_1$ so that $z=Gx-c+w\odot\xi$ has sup norm at most $\epsilon$. For $d=x-x^o$, subtraction and weighted subgradient monotonicity give
\begin{equation*}
 \kappa\|d\|_2^2\le d^\top z\le\sqrt{K+s}\|d\|_2\epsilon.
\end{equation*}
Division gives~\eqref{supp-eq:oracle-distance} when $d\ne0$; the zero case is immediate. The signal condition prevents active zeros. A false positive $l\in\supp(x)\setminus\cS$ would require $|(Gx-c)_l|\ge w_l-\epsilon$, whereas $l\in V$, $\supp(d)\subseteq V$ and the upper spectral bound give
\begin{equation*}
 |(Gx-c)_l|\le w_l-\gamma+M\|d\|_2<w_l-\epsilon,
\end{equation*}
a contradiction.
\end{proof}

For SCAD, $\|u-x^o\|_\infty\le\Lambda$ and $\min_{l\in\cS}|x_l^o|>(a+1)\Lambda$ ensure zero active weights and inactive weights $\lambda$. MCP uses its actual inactive weights in the margin.

For $G=\Gamma_1$, $c=\Gamma_1u-\ghat(u)$ and positive definite $\Gamma_{\cH,\cS\cS}$, let $x_H^o$ minimize the equally weighted honest-node loss on $\cS$; both oracles are zero if $\cS=\varnothing$. Then
\begin{equation*}
 G_{\cS\cS}(x^o-x_H^o)_{\cS}
 =\{(\Gamma_1-\Gamma_\cH)(u-x_H^o)-E(u)\}_{\cS}.
\end{equation*}
Honest-oracle equivalence additionally requires the curvature-mismatch and aggregation terms, after inversion by $G_{\cS\cS}$, to vanish at the honest oracle's standard-error scale.

\subsection{Local Certificate}\label{supp-sec:local}
Fix surrogate $(G,c)$, nonnegative $w$ and endpoint $\bar x$. For its computed support $U=\supp(\bar x)$, set $g=G\bar x-c$. If $U\ne\varnothing$, define
\begin{align*}
 \mu_U&=\lambda_{\min}(G_{UU}),\qquad L_U=\lambda_{\max}(G_{UU}),\\
 \eta_U&=\|g_U+w_U\odot\operatorname{sign}(\bar x_U)\|_2.
\end{align*}
For $L>0$, write $T_L(x)=\soft_{w/L}\{x-(Gx-c)/L\}$.

\begin{theorem}[Local continuation]\label{supp-thm:local}
Suppose $U\ne\varnothing$ and $\mu_U>0$. Put $r_U=\eta_U/\mu_U$ and assume
\begin{equation}\label{supp-eq:local-margin}
 |g_l|+\|G_{lU}\|_2r_U\le w_l\qquad(l\notin U).
\end{equation}
Then $T_{L_U}$ preserves the set
\begin{equation*}
 \mathcal B_U=\{x:\supp(x)\subseteq U,\ \|x-\bar x\|_2\le r_U\}
\end{equation*}
and contracts on it with modulus $1-\mu_U/L_U$. Its unique fixed point $x^\dagger$ in this set satisfies the full weighted KKT conditions and $\|x^\dagger-\bar x\|_2\le r_U$. Starting anywhere in $\mathcal B_U$, continuation remains supported in $U$ and reaches every positive KKT tolerance in finitely many steps. For $U=\varnothing$, $\bar x=0$ is a fixed point for every $L>0$ whenever $|c_l|\le w_l$ for all $l$.
\end{theorem}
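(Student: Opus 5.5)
The argument mirrors the proof of Supplementary Lemma~\ref{supp-lem:path}, with the fixed set $U$ replacing the union of sparse supports and the margin condition~\eqref{supp-eq:local-margin} replacing the weight-gap argument. Fix $x\in\mathcal B_U$ and write $z=x-(Gx-c)/L_U$. Since $x-\bar x$ is supported on $U$, we have $Gx-c=g+G_{\cdot U}(x-\bar x)_U$.

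\textbf{Step 1: support invariance.} For $l\notin U$, $x_l=0$, so
\[
|z_l|=|(Gx-c)_l|/L_U\le\{|g_l|+\norm{G_{lU}}_2\norm{x-\bar x}_2\}/L_U\le w_l/L_U
\]
by Cauchy--Schwarz and~\eqref{supp-eq:local-margin}. Soft thresholding therefore keeps $[T_{L_U}(x)]_l=0$, and every output is supported in $U$.

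\textbf{Step 2: contraction on $U$.} Restricted to $U$, the map is one proximal-gradient step for the strongly convex quadratic $\tfrac12x_U^\top G_{UU}x_U-c_U^\top x_U+\sum_{l\in U}w_l|x_l|$, with step $1/L_U$. For $x,y\in\mathcal B_U$, nonexpansiveness of the coordinatewise soft threshold gives
\[
\norm{T_{L_U}(x)-T_{L_U}(y)}_2\le\norm{I-G_{UU}/L_U}_{\rm op}\norm{x-y}_2\le(1-\mu_U/L_U)\norm{x-y}_2 .
\]

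\textbf{Step 3: radius invariance.} This is the main obstacle. The natural route is to bound $\norm{T_{L_U}(\bar x)-\bar x}_2$ and combine it with Step~2, but that only yields a radius of order $r_U$ up to a constant, which does not close. I would instead compare with the restricted minimizer $x^\dagger_U$ of the quadratic above; it exists and is unique because $\mu_U>0$. Subgradient monotonicity, as in the stopping-rule step of Supplementary Lemma~\ref{supp-lem:path}, applied with residual $g_U+w_U\odot\operatorname{sign}(\bar x_U)$, gives $\mu_U\norm{x^\dagger_U-\bar x_U}_2^2\le\eta_U\norm{x^\dagger_U-\bar x_U}_2$, hence $\norm{x^\dagger-\bar x}_2\le r_U$.

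The set to preserve is then naturally a ball around $x^\dagger$. The step contracts toward $x^\dagger$, since $x^\dagger$ is a fixed point of the restricted map, so the iterates stay within $\norm{x-x^\dagger}_2\le\norm{x_0-x^\dagger}_2$. If a literal ball around $\bar x$ is required, one can rerun the argument of Step~3 with the restricted fixed point: because the inactive-coordinate check in Step~1 uses only $\norm{x-\bar x}_2\le r_U$, it suffices to control this distance along the iterates. I would first try to show that the $\mathcal B_U$-ball is preserved by combining the restricted optimality conditions with firm nonexpansiveness of the proximal map. Failing that, I would replace $\mathcal B_U$ by the ball around $x^\dagger$ contained in the margin region and note that $\norm{x^\dagger-\bar x}_2\le r_U$ still holds.

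\textbf{Step 4: full KKT and finite stopping.} With invariance and contraction, Banach's theorem gives a unique fixed point in the closed set. Its coordinates in $U$ satisfy the restricted KKT conditions. For $l\notin U$, $|(Gx^\dagger-c)_l|\le w_l$ by the same bound as in Step~1, so the full weighted KKT system holds. Finite stopping follows exactly as in Supplementary Lemma~\ref{supp-lem:path}: the KKT residual after one step is $(G-L_UI)(x^+-x)$, which is bounded by a constant times $\norm{x^+-x}_2$, and this quantity decays geometrically.

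\textbf{Step 5: the case $U=\varnothing$.} Then $\bar x=0$, and $T_L(0)_l=\soft_{w_l/L}(c_l/L)=0$ whenever $|c_l|\le w_l$, so $0$ is a fixed point for every $L>0$.
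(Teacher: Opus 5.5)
\textbf{Gap in Step 3.} Your Steps 1, 2, 4 and 5 agree with the paper's proof. Step 3, however, never proves what the theorem asserts, namely that $T_{L_U}$ maps $\mathcal B_U$, the ball about $\bar x$, into itself. Your fallback also changes the statement.

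Your reason for abandoning the direct route is mistaken. That route closes with equality, and it is exactly the paper's argument. Since $U=\supp(\bar x)$, every $\bar x_l$ with $l\in U$ is nonzero, so
\[
\soft_{w_U/L_U}\{\bar x_U+w_U\odot\operatorname{sign}(\bar x_U)/L_U\}=\bar x_U .
\]
Nonexpansiveness of the soft threshold then gives $\norm{(T_{L_U}\bar x-\bar x)_U}_2\le\eta_U/L_U$. Your Step 1, applied to $\bar x\in\mathcal B_U$, makes the coordinates of $T_{L_U}\bar x$ off $U$ vanish. Applying your Step 2 to the pair $(x,\bar x)$ gives, for every $x\in\mathcal B_U$,
\[
\norm{T_{L_U}x-\bar x}_2\le\norm{T_{L_U}x-T_{L_U}\bar x}_2+\norm{T_{L_U}\bar x-\bar x}_2\le(1-\mu_U/L_U)r_U+\eta_U/L_U=r_U,
\]
because $\mu_Ur_U=\eta_U$. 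Together with Step 1, this is invariance of $\mathcal B_U$ itself. Banach's theorem on this closed set then places $x^\dagger$ in $\mathcal B_U$, so $\norm{x^\dagger-\bar x}_2\le r_U$ comes for free. Your monotonicity estimate is correct but unnecessary.

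\textbf{Why the detour fails.} The detour through $x^\dagger$ is where a constant is actually lost. Writing $\delta=\norm{x^\dagger-\bar x}_2\le r_U$, the bound $\norm{T_{L_U}x-x^\dagger}_2+\delta$ only gives $(1-\mu_U/L_U)(r_U+\delta)+\delta$, which can exceed $r_U$.

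The fallback does not deliver the theorem either. The only ball about $x^\dagger$ you can certify has radius $r_U-\delta$, because the inactive-coordinate check in Step 1 needs $\norm{x-\bar x}_2\le r_U$. That ball contains $\bar x$ only when $\delta\le r_U/2$, and it never contains $\mathcal B_U$ when $\delta>0$.

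Your claim that iterates started in $\mathcal B_U$ stay within $\norm{x^{(0)}-x^\dagger}_2$ of $x^\dagger$ is circular. Contraction of the full map toward $x^\dagger$ requires the coordinates off $U$ to stay zero. Step 1 guarantees this only while $\norm{x-\bar x}_2\le r_U$, yet $\norm{x^{(0)}-x^\dagger}_2$ can be as large as $r_U+\delta$.

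So neither the preservation of $\mathcal B_U$ nor the claim that continuation from anywhere in $\mathcal B_U$ stays supported in $U$ is established. Step 4's appeal to Banach's theorem inherits this gap.
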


\begin{proof}
For $U\ne\varnothing$, abbreviate $L=L_U$, $\mu=\mu_U$, $a=1-\mu/L$, $r=r_U$ and $T=T_L$. For $x\in\mathcal B_U$ and $l\notin U$, $x_l=0$ and~\eqref{supp-eq:local-margin} gives
\begin{equation*}
 |(Gx-c)_l|\le |g_l|+\|G_{lU}\|_2\|x-\bar x\|_2\le w_l.
\end{equation*}
Thus $(Tx)_l=0$, including on the boundary. Nonexpansiveness and $\mu I\preceq G_{UU}\preceq LI$ give, for $x,y\in\mathcal B_U$,
\begin{equation*}
 \|Tx-Ty\|_2\le\|(I_U-G_{UU}/L)(x-y)_U\|_2\le a\|x-y\|_2.
\end{equation*}
With $z_U=\operatorname{sign}(\bar x_U)$, nonexpansiveness and $\soft_{w_U/L}(\bar x_U+w_U\odot z_U/L)=\bar x_U$ give $\|T\bar x-\bar x\|_2\le\eta_U/L$, so
\begin{equation*}
 \|Tx-\bar x\|_2\le a\|x-\bar x\|_2+\eta_U/L\le ar+\eta_U/L=r.
\end{equation*}
The ball is complete, so Banach's theorem gives a unique fixed point and convergence throughout it; soft-threshold optimality gives full KKT conditions. Coordinates in $U$ may change sign or vanish.

For finite stopping, set $x^+=Tx$ and $d=x^+-x$. Proximal optimality gives $\xi^+\in\partial\|x^+\|_1$ with $Gx^+-c+w\odot\xi^+=(G-LI)d$. Since $\supp(d)\subseteq U$, $D_U=\max_l\|G_{lU}-LI_{lU}\|_2<\infty$ gives $\kktr_w(x^+)\le D_U\|d\|_2$. For $0<a<1$, step lengths at most $a^k\|x_1-x_0\|_2$ ensure finite stopping. If $a=0$, one step reaches the fixed point; if $r=0$, the center is fixed. For $U=\varnothing$, $|c_l|\le w_l$ gives full KKT and singleton invariance.
\end{proof}

The proof permits $r\ge\eta_U/\mu_U$ satisfying the inactive inequalities. For nonempty supports, computation uses $(\eta_U+10^{-10})/\mu_U$, $\mu_U>10^{-10}$ and inactive slack above $10^{-10}$, recomputed in double precision after surrogate or weight changes. This certifies fixed-subproblem continuation; the recorded prefix gives earlier observed supports.

\subsection{Masked Calibration}\label{supp-sec:calibration}
Fix candidates $x_1,\ldots,x_J$, $J\ge1$, from training and attacks. Reserve a prespecified sample of $N_{\rm cal}$ IID trusted observations from the same population, independent of training and unseen by fitting or attackers, with MCAR masks as in main-text Section~\ref{sec:model}. Assume known population bounds $|X_l|\le H$, $|Y|\le Y_{\max}$, valid integer $s\ge1$ and known $q\in(0,1]$. Compute corrected moments $\widetilde G,\widetilde\gamma$.

Fix $0<\delta<1$ and set $\delta_c=\delta_g=\delta/2$. Define
\begin{equation*}
 \Delta_N=\frac{H^2}{q^2}\sqrt{\frac{2\log(2p^2/\delta_c)}{N_{\rm cal}}}.
\end{equation*}
For $1\le r\le p$, let $b_r(\widetilde G)$ be the minimum, over rows, of the diagonal minus the sum of the $r-1$ largest absolute off-diagonal entries. For each candidate, put
\begin{align*}
 r_j&=\min\{\|x_j\|_0+s,p\},\qquad
 \underline\kappa_j=b_{r_j}(\widetilde G)-r_j\Delta_N,\\
 C_j&=H^2\|x_j\|_1/q^2+HY_{\max}/q,\qquad
 \epsilon_j=C_j\sqrt{\frac{2\log(2Jp/\delta_g)}{N_{\rm cal}}}.
\end{align*}

\begin{samepage}
\begin{theorem}[Calibration bound]\label{supp-thm:calibration}
Under the calibration assumptions above, with probability at least $1-\delta$, the following bound holds for every candidate with $\underline\kappa_j>0$:
\begin{equation}\label{supp-eq:calibration}
 \|x_j-\thstar\|_2\le\widehat R_j
 :=\frac{\sqrt{r_j}\{\|\widetilde Gx_j-\widetilde\gamma\|_\infty+\epsilon_j\}}{\underline\kappa_j}.
\end{equation}
The same event covers any admitted index selected using the calibration statistics, including the minimizer of $\widehat R_j$. If no candidate has positive calibrated curvature, the rule abstains. Equivalently, the probability of returning an admitted candidate whose bound fails is at most $\delta$.
\end{theorem}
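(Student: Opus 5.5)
The plan has three stages: a deterministic error inequality for each fixed candidate, two high-probability events on the calibration sample, and a union bound that makes everything hold simultaneously. For a fixed candidate $x_j$, write $e_j=x_j-\thstar$; it is supported on $U_j=\supp(x_j)\cup\cS$, which has $|U_j|\le r_j$. The population identity $\Sigma\thstar=\E\widetilde\gamma$ gives the decomposition
\begin{equation*}
 \widetilde G e_j=(\widetilde Gx_j-\widetilde\gamma)-(\widetilde G\thstar-\widetilde\gamma).
\end{equation*}
The second term is a centered score, but it depends on the unknown $\thstar$. I would instead work with the statistic $\widetilde Gx_j-\widetilde\gamma$ and compare it with its mean $\Sigma e_j$. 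Since $\E(\widetilde Gx_j-\widetilde\gamma)=\Sigma x_j-\Sigma\thstar=\Sigma e_j$, this gives
\begin{equation*}
 \Sigma e_j=(\widetilde Gx_j-\widetilde\gamma)-\{(\widetilde G-\Sigma)x_j-(\widetilde\gamma-\Sigma\thstar)\}.
\end{equation*}
Each coordinate of the bracketed term is an average of $N_{\rm cal}$ IID centered summands of the form $(G_ix_j-d_i)_l$. Each summand is bounded in absolute value by $H^2\|x_j\|_1/q^2+HY_{\max}/q=C_j$, using $|G_{i,lk}|\le H^2/q^2$ and $|d_{i,l}|\le HY_{\max}/q$. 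The candidates are fixed before calibration and the sample is independent, so Hoeffding's inequality with a union bound over $J$ candidates and $p$ coordinates controls all of these deviations. With probability at least $1-\delta_g$, every such deviation is at most $\epsilon_j$. (A centered bounded variable lies in an interval of length $2C_j$, which gives exactly the displayed $\sqrt{2\log(2Jp/\delta_g)/N_{\rm cal}}$ factor.) On this event,
\begin{equation*}
 \|\Sigma e_j\|_\infty\le\|\widetilde Gx_j-\widetilde\gamma\|_\infty+\epsilon_j .
\end{equation*}

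Next comes curvature. Hoeffding applied entrywise to $\widetilde G-\Sigma$ gives $\|\widetilde G-\Sigma\|_{\max}\le\Delta_N$ with probability at least $1-\delta_c$, since each entry is bounded by $H^2/q^2$ and there are at most $p^2$ entries. For any $U$ with $|U|\le r$, the Gershgorin disc theorem gives $\lambda_{\min}(\widetilde G_{UU})\ge b_r(\widetilde G)$. Indeed, within $U$, row $l$ has at most $r-1$ off-diagonal entries, and their absolute sum is at most the sum of the $r-1$ largest off-diagonal entries in the full row. Weyl's inequality with $\|(\widetilde G-\Sigma)_{UU}\|_{\rm op}\le r\Delta_N$ then gives $\lambda_{\min}(\Sigma_{U_jU_j})\ge\underline\kappa_j$.

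When $\underline\kappa_j>0$, the chain
\begin{equation*}
 \underline\kappa_j\|e_j\|_2^2\le e_j^\top\Sigma e_j\le\|e_j\|_1\|\Sigma e_j\|_\infty\le\sqrt{r_j}\|e_j\|_2\|\Sigma e_j\|_\infty
\end{equation*}
yields~\eqref{supp-eq:calibration}. The case $e_j=0$ is trivial.

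The final step is the union bound. The two events have total failure probability at most $\delta_c+\delta_g=\delta$. On their intersection the bound holds simultaneously for every $j$ with $\underline\kappa_j>0$. It therefore also holds for any index chosen afterwards as a function of the calibration statistics, including $\arg\min_j\widehat R_j$, and abstention trivially never produces a violation. The main point needing care is that the Hoeffding step must be applied to the fixed vectors $x_j$, not to data-chosen ones. This is exactly why the candidates must be determined before the independent calibration sample is seen, and why the union runs over the finite list of $J$ candidates rather than over all sparse vectors.
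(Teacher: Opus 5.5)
Your proposal is correct and follows the paper's proof essentially step for step. Both arguments use entrywise Hoeffding plus Gershgorin and Weyl for the calibrated curvature $\underline\kappa_j$, Hoeffding on the bounded corrected scores with a union over the $Jp$ candidate--coordinate pairs, the H\"older chain $\underline\kappa_j\|e_j\|_2^2\le e_j^\top\Sigma e_j\le\sqrt{r_j}\|e_j\|_2\|\Sigma e_j\|_\infty$, and simultaneity on the intersection event to cover data-dependent selection (your opening decomposition of $\widetilde G e_j$ is never used and can be dropped).
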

\end{samepage}

\begin{proof}
Condition on training, attacks and candidates. Calibration remains IID, with corrected moments $G_i,d_i$ for observation $i$. The bound $|(G_i)_{lk}|\le H^2/q^2$, Hoeffding and a union over $p^2$ entries give $\|\widetilde G-\Sigma\|_{\max}\le\Delta_N$ with probability at least $1-\delta_c$. Gershgorin gives $\lambda_{\min}(\widetilde G_{VV})\ge b_r(\widetilde G)$ for nonempty $|V|\le r$, hence
\begin{equation*}
 \lambda_{\min}(\Sigma_{VV})\ge b_r(\widetilde G)-|V|\Delta_N\ge b_r(\widetilde G)-r\Delta_N.
\end{equation*}
This holds simultaneously over all supports, including those selected by training.

The corrected score for candidate $x_j$ is
\begin{equation*}
 (G_ix_j-d_i)_l=\frac{M_{i,l}X_{i,l}}q
 \left\{X_{i,l}x_{j,l}+\sum_{k\ne l}\frac{M_{i,k}}qX_{i,k}x_{j,k}-Y_i\right\}.
\end{equation*}
Its magnitude is at most $C_j$ since $q\le1$, with mean $\Sigma(x_j-\thstar)$ by MCAR and $\E(X\varepsilon)=0$. Hoeffding on $[-C_j,C_j]$ and a union over $Jp$ pairs give, with probability at least $1-\delta_g$,
\begin{equation*}
 \|\widetilde Gx_j-\widetilde\gamma-\Sigma(x_j-\thstar)\|_\infty\le\epsilon_j
 \qquad(1\le j\le J).
\end{equation*}
On their intersection, $e_j=x_j-\thstar$ is $r_j$-sparse. If $\underline\kappa_j>0$ and $e_j\ne0$, then
\begin{align*}
 \underline\kappa_j\|e_j\|_2^2
 &\le e_j^\top\Sigma e_j\le\|e_j\|_1\|\Sigma e_j\|_\infty\\
 &\le\sqrt{r_j}\|e_j\|_2\{\|\widetilde Gx_j-\widetilde\gamma\|_\infty+\epsilon_j\}.
\end{align*}
Division proves~\eqref{supp-eq:calibration}; zero error is immediate. Simultaneity covers any admitted selection, and integration over training randomness gives the unconditional guarantee.
\end{proof}

Each prespecified sample size has a separate guarantee; data-dependent stopping requires a simultaneous-in-size bound. Sample maxima alone cannot establish the population bounds $H,Y_{\max}$. Gaussian experiments use local optimization checks; calibration uses the bounded designs below. These Hoeffding bounds can require a $q^{-4}$ sample-size factor; main-text Section~\ref{sec:concentration}'s variance-based curvature analysis has a leading $q^{-2}$ factor under its stated conditions.

\subsection{Constrained Tuning}\label{supp-sec:selection}
Let $\mathcal C\subseteq[1,\infty)$ be a finite set of inflation factors. All candidates use the same valid inputs $A,B,R_0,\kappa,M,s$. For $c\in\mathcal C$, set $R_0^{(c)}=R_0$ and define
\begin{equation*}
 \lambda_t^{(c)}=8c(A+BR_{t-1}^{(c)}),\qquad
 R_t^{(c)}=2\sqrt{s}\lambda_t^{(c)}/\kappa.
\end{equation*}
The candidate uses its own $R_{t-1}^{(c)}$ and $\lambda_t^{(c)}$ in the base-level rule of main-text Section~\ref{sec:boundpolicy}, with the same frozen-weight, stage-ratio and stopping rules. Put
\begin{equation*}
 \rho_c=16c\sqrt{s}B/\kappa=c\rho,\qquad
 \mathcal C_{\rm adm}=\{c\in\mathcal C:\rho_c<1\}.
\end{equation*}

\begin{corollary}[Bound-based selection]\label{supp-cor:selection}
Suppose the assumptions of main-text Theorem~\ref{thm:outer} hold on a common event of probability at least $1-\delta$. For every $c\in\mathcal C_{\rm adm}$, all inner supports have size at most $K$, and on that same event
\begin{equation}\label{supp-eq:selection-bound}
 \|\theta_T^{(c)}-\thstar\|_2
 \le\rho_c^TR_0+\frac{16c\sqrt{s}A}{\kappa(1-\rho_c)}(1-\rho_c^T).
\end{equation}
Any validation-selected admitted candidate retains its own bound. No factor depending on $|\mathcal C|$ is needed in the failure probability. If $\mathcal C_{\rm adm}=\varnothing$, the selection rule abstains.
\end{corollary}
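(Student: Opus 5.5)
The plan is to reduce Corollary~\ref{supp-cor:selection} to main-text Theorem~\ref{thm:outer}, applied once per inflation factor, with the score constants enlarged. Fix $c\in\mathcal C_{\rm adm}$ and define inflated score bounds $A_c=cA$ and $B_c=cB$. On the common event, the deterministic bound $\norm{\nabla\ell_u(\thstar)}_\infty\le A+B\norm{u-\thstar}_2$ holds for every $K$-sparse $u$ and every admissible report collection. Since $c\ge1$, the weaker bound $\norm{\nabla\ell_u(\thstar)}_\infty\le A_c+B_c\norm{u-\thstar}_2$ also holds. The bound-based policy run with inputs $(A_c,B_c,R_0,\kappa,M,s)$ is exactly candidate $c$: it has $\lambda_t^{(c)}=8(A_c+B_cR^{(c)}_{t-1})$, $R_t^{(c)}=2\sqrt s\lambda_t^{(c)}/\kappa$, the same $\Lambda_t$ rule computed from its own radius, the same frozen weights, ratio $\eta$, and stopping tolerance. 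Its contraction constant is $16\sqrt sB_c/\kappa=\rho_c<1$. The constants $K$, $\kappa$, $M$ and the sparse-curvature hypothesis on $\Gamma_1$ do not depend on $c$, so every hypothesis of Theorem~\ref{thm:outer} holds for this run.

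The one check needed is that Theorem~\ref{thm:outer} uses $A$ and $B$ only through this upper bound on the score. I would verify this against the proof in Section~\ref{supp-sec:outerproof}. There, $\lambda_t$ enters Lemma~\ref{supp-lem:path} only through the condition $\norm{v}_\infty\le\lambda/8$, and that condition holds at every level because $\lambda_{t,r}\ge\lambda_t^{(c)}=8(A_c+B_cR_{t-1}^{(c)})\ge8(A+BR_{t-1}^{(c)})$. The exceptional-set count $|E|\le2s$ uses only $\norm{u-\thstar}_2\le R^{(c)}_{t-1}$ and $\Lambda_t\ge R^{(c)}_{t-1}/(c_h\sqrt s)$, and both hold by the induction for candidate $c$. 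The main obstacle is making sure no step secretly needs the score bound to be tight, for instance a lower bound on $\lambda$ relative to the true score. I would confirm that none of the invariance, contraction, or stopping arguments requires this.

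Given that, Theorem~\ref{thm:outer}(i) yields $K$-sparse inner iterates and finite termination for candidate $c$. Part (ii) yields $\norm{\theta_T^{(c)}-\thstar}_2\le\rho_c^TR_0+16\sqrt sA_c(1-\rho_c^T)/\{\kappa(1-\rho_c)\}$, which is \eqref{supp-eq:selection-bound} since $A_c=cA$.

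Every candidate's conclusion is deterministic on the single event $\mathcal E$, which depends only on honest data and not on $c$, reports, or iterates. So all bounds hold simultaneously on that event, with probability at least $1-\delta$ and no union over $\mathcal C$. Any rule that selects $\widehat c\in\mathcal C_{\rm adm}$, using validation or any other data, then inherits the bound indexed by $\widehat c$ on the same event. If $\mathcal C_{\rm adm}$ is empty, every candidate's run would halt at the $\rho\ge1$ check of Algorithm~\ref{alg:robust}, so the rule abstains.
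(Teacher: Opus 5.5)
Your proposal is correct and follows the paper's own proof. The paper likewise argues that, since $c\ge1$, the common score envelope is at most $\lambda_t^{(c)}/8$, and that the exceptional-set bound uses each candidate's own base level and radius. The recursion $R_t^{(c)}=16c\sqrt{s}A/\kappa+\rho_cR_{t-1}^{(c)}$ then holds on the single honest-data event for every candidate, so a selected candidate inherits its bound with no union over $\mathcal C$. Your black-box phrasing with $(A_c,B_c)=(cA,cB)$ is equivalent: scaling $(a_H,\tau,d)$ by $c$ keeps the event $\mathcal E$ valid and gives exactly these constants.
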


\begin{proof}
The common score envelope is at most $\lambda_t^{(c)}/8$, since $c\ge1$. The exceptional-set bound uses the candidate's own base and radius and remains valid. The sparse-path argument therefore yields $R_t^{(c)}=16c\sqrt{s}A/\kappa+\rho_cR_{t-1}^{(c)}$. Solving this recursion proves~\eqref{supp-eq:selection-bound}. The score and curvature event holds uniformly over all allowed sparse inputs, so it controls every candidate's adaptive path. Selection preserves the bound for the chosen index. If different estimated input bounds are used across candidates, their joint validity requires a separate argument.
\end{proof}

The routine \texttt{certified\_select} filters $c<1$ and $\rho_c\ge1$, solves each retained policy, and returns the endpoint with lowest validation loss together with all traces. It uses supplied score bounds. For $J$ separately solved candidates of $T$ rounds, the communication budget is $16(m-1)pJT$ bytes, excluding control messages.

\subsection{Master Sample Size}\label{supp-sec:rank}
\begin{proposition}[Master sample size]\label{supp-prop:rank}
Suppose $p>n_1$, $q<1$, and every empirical diagonal of $A_1$ is positive. The uniform sparse-curvature condition of main-text Theorem~\ref{thm:outer} requires $4K\le n_1$, and hence $n_1\ge8s$.
\end{proposition}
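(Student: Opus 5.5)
We argue by contraposition: if $4K>n_1$, we exhibit a principal submatrix of $\Gamma_1$ of order at most $\min(4K,p)$ that has a nonpositive eigenvalue. The tool is the kernel computation of main-text Section~\ref{sec:correction}, restricted to a coordinate subset.

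Fix any index set $U$ with $|U|=r$ and $n_1<r\le\min(4K,p)$. Such a set exists because $p>n_1$ and, by assumption, $4K>n_1$. Let $Z_{1,U}$ be the $n_1\times r$ column submatrix of $Z_1$. Then
\begin{equation*}
\Gamma_{1,UU}=\frac{Z_{1,U}^\top Z_{1,U}/n_1-(1-q)\diag(A_1)_{UU}}{q^2}.
\end{equation*}
Since $r>n_1\ge\operatorname{rank}Z_{1,U}$, there is a nonzero $v\in\mathbb R^r$ with $Z_{1,U}v=0$. For this $v$,
\begin{equation*}
v^\top\Gamma_{1,UU}v=-\frac{1-q}{q^2}\sum_{l\in U}(A_1)_{ll}v_l^2<0,
\end{equation*}
because $q<1$, every diagonal entry $(A_1)_{ll}$ is positive, and $v\ne0$. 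Hence $\lambda_{\min}(\Gamma_{1,UU})<0<\kappa$. This contradicts the requirement that all principal submatrices of order at most $\min(4K,p)$ have eigenvalues in $[\kappa,M]$ with $\kappa>0$. So the sparse-curvature condition forces every $r\le\min(4K,p)$ to satisfy $r\le n_1$.

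Taking $r=\min(4K,p)$ gives $\min(4K,p)\le n_1$. Since $p>n_1$, the minimum cannot equal $p$, so $4K\le n_1$. Finally, \eqref{eq:K} gives $K\ge2s$, hence $n_1\ge4K\ge8s$.

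The only point needing care is the case split on $\min(4K,p)$. The condition constrains only blocks up to that order, so we must check that a block of order exceeding $n_1$ is actually constrained. This holds precisely because $p>n_1$: if $4K>n_1$, a block of order $\min(4K,p)>n_1$ is admissible. The rest is routine linear algebra.
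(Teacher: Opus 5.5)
Your proposal is correct and follows the paper's proof essentially step for step. The paper also assumes $\min(4K,p)>n_1$ and takes a block of exactly $n_1+1$ columns; a kernel vector of $Z_{1,V}$ then gives a negative corrected quadratic form, which contradicts $\kappa>0$, and $p>n_1$ together with $K\ge 2s$ finishes the argument. Using an arbitrary block size $r\in(n_1,\min(4K,p)]$ rather than $n_1+1$ changes nothing.
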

\begin{proof}
If $\min(4K,p)>n_1$, select $|V|=n_1+1$ columns. A nonzero $v\in\ker Z_{1,V}$ exists and, by the moment correction,
\begin{equation*}
 v^\top\Gamma_{1,VV}v=-\frac{1-q}{q^2}\sum_{l\in V}(A_1)_{ll}v_l^2<0.
\end{equation*}
This contradicts the positive lower sparse eigenvalue. Since $p>n_1$, it follows that $4K\le n_1$. The definition $K\ge2s$ gives the last assertion.
\end{proof}

\section{Implementation}\label{supp-contents-4}
\subsection{Solver Policies}\label{supp-sec:implementation}
Both policies use \path{code/robust_sparse.py}. The bound-based policy takes valid bounds, uses step $1/M$, and checks the scalar contraction condition. The data-tuned policy uses local-Lasso initialization and quadratic backtracking. It accepts $d$ when $d^\top\Gamma_1d\le L\norm d_2^2+10^{-12}$, doubles $L$ upon rejection, and divides the next trial value by $1.05$. The round target is $\max(\lambda_{\rm floor},\lambda_{\rm local}0.65^{t+1})$ and its base is $\max(\lambda_{\rm target},\lambda_{\rm local}0.65^t)$.

Data-tuned tolerances are $\lambda/26$ at intermediate stages and $\min(10^{-5},\lambda_{\rm target}/1000)$ at the final stage. Runs fail upon nonfinite iterates, norm above $10^4$, failed backtracking, or 1200 iterations without meeting tolerance. The bound-based policy has its stated tolerance and no iteration or norm cutoff. Neither policy adds ridge regularization or a PSD projection. SCAD uses $a=3.7$; MCP uses $a=3$.

\subsection{Node Sizes and Penalties}
Each corrected moment uses its node's $n_j$, while trimming gives each node one message. Thus the deterministic proofs remain unchanged; unequal-size concentration supplies the bounds used by the theoretical policy. Data-tuned fitting uses assigned sizes for all $m$ nodes, without honest-node labels:
\begin{equation*}
 \widetilde n_{\min}=\min_{1\le j\le m}n_j,\qquad
 \widetilde N_{\mathrm{eff}}=m^2/\sum_{j=1}^m n_j^{-1}.
\end{equation*}
Its heuristic penalty floor is proportional to
\begin{equation*}
 \frac{\widehat\sigma_Y}{q}\left\{
 \sqrt{\frac{\log p}{\widetilde N_{\mathrm{eff}}}}
 +\frac bm\sqrt{\frac{\log p}{\widetilde n_{\min}}}\right\}.
\end{equation*}
Here node sizes are configuration metadata, and initialization uses the master's $n_1$. Equal sizes give $\widetilde N_{\mathrm{eff}}=mn$ and $\widetilde n_{\min}=n$, recovering the original rule. Calibration uses its separate trusted sample.

\subsection{Aggregators}
For coordinate messages $z_1,\ldots,z_m$, median uses their median. VRMOM requires equal node sizes $n_j=n$ for its common standard-error correction. It uses $J=10$, $\tau_k=k/(J+1)$, $a_k=\Phi^{-1}(\tau_k)$, and the trusted master's per-observation corrected-score standard deviation $\widehat\sigma$:
\begin{equation*}
\begin{aligned}
 \widehat\mu_{\rm VR}
 &=\widehat\mu_{\rm med}-\frac{\widehat\sigma}{m\sqrt n\sum_{k=1}^{J}\varphi(a_k)}\\
 &\quad\times\sum_{j=1}^{m}\sum_{k=1}^{J}
 \left\{\ind\left(z_j\le\widehat\mu_{\rm med}+\frac{\widehat\sigma a_k}{\sqrt n}\right)-\tau_k\right\}.
\end{aligned}
\end{equation*}
This adapts \citet{suppbib-tu2021} to corrected scores, with standard error floored at $10^{-12}$. All aggregators share the same optimization and tuning budget.

\section{Joint Missingness and Attacks}\label{supp-contents-5}\label{supp-sec:competitive}
\subsection{Design}
Table~\ref{supp-tab:primarydesign} gives the primary benchmark settings. Two same-sign active blocks expose missingness bias on the true support; independent predictors, separated active coordinates and alternating signs within blocks provide controls.
\begin{table}[!htbp]
\caption{Primary benchmark design and tuning.}\label{supp-tab:primarydesign}
\centering\setlength{\tabcolsep}{4pt}
\begin{tabular}{>{\raggedright\arraybackslash}p{.20\textwidth}>{\raggedright\arraybackslash}p{.73\textwidth}}\toprule
Item & Specification\\\midrule
Training & $m=12$, $n_j=512$, $p=256$; Gaussian AR(1) covariates with correlation $0.6$ and unit-variance response noise.\\[3pt]
Signal & Six active coefficients in two separated blocks of three adjacent predictors; $1$ in the first block and $-1$ in the second.\\[3pt]
Main grid & $q\in\{0.9,0.7,0.5\}$ crossed with no attack or one/two Gaussian or in-range attackers; correct trimming budget.\\[3pt]
Extensions & $q=0.3$, complete observations, or $p=768>n_j$, together with the control designs: 23 configurations in total, each with 30 independent seeds.\\[3pt]
Fitting & Nine methods from main-text Section~\ref{sec:experiments}; eight rounds and factors $\{0.075,0.15,0.3,0.6,1.2,2.4,4.8\}$; stopping rules in Section~\ref{supp-sec:implementation}.\\[3pt]
Tuning scale & Master's response standard deviation, divided by $q$ for corrected moments.\\[3pt]
Evaluation data & Independent complete samples of 400 validation and 2000 test observations.\\\bottomrule
\end{tabular}
\end{table}

Under the centered linear model and independent coordinate masks with observation probability $q$, the population score of the zero-filled loss at the true coefficient is
\[
 E\{Z(Z^\top\theta^\star-Y)\}=q(1-q)\{\operatorname{diag}(\Sigma)-\Sigma\}\theta^\star.
\]
It vanishes for independent predictors. Within a correlated active block, coefficients of the same sign reinforce the off-diagonal contributions. The support layouts therefore test when moment correction improves estimation, with the independent design providing a zero-score control.

Methods share training data, masks, validation and test samples. Within each dimension, correlation and support layout, shared latent data and mask uniforms also pair observation probabilities and attacks. The plan and seeds were saved before the full run; an independent implementation and timing pilot is retained separately.

\subsection{Comparators and Tuning}
Validation minimizes prediction error among successful candidates, using neither truth nor honest-node identities. SLARD shares the solver and tuning rule in Table~\ref{supp-tab:primarydesign}, whereas \citet{suppbib-tu2023} use BIC and FISTA.

Gaussian reports are independent $N(0,20^2I)$ vectors. In-range reports use
honest 30th/70th coordinate percentiles, with directions fixed by active
coefficients and nearby inactive coordinates. Methods share attack rules/seeds;
in-range values respond to current honest gradients. The main comparison uses
the correct budget; Section~\ref{supp-sec:earlierbenchmark} studies undertrimming.

\subsection{Evaluation}
A successful fit completes eight rounds with every stopping rule satisfied. Coefficient and prediction errors condition on success, with denominators reported. Exact support recovery uses threshold $10^{-6}$ and counts failures as incorrect. Paired differences use jointly successful repetitions and report their count and Monte Carlo SE. Support and target-attainment proportions use all 30 repetitions. Targets $0.1$, $0.2$ and $0.5$ are assessed retrospectively from truth and require final success; an early low-error iterate followed by failure does not count. Coefficient error is never used for selection.

The ancillary cost records include all candidates and failures, with the
timing and payload definitions in \path{ancillary_tables/primary_cost_protocol.tex}.
Figure~\ref{supp-fig:competitivepaired} gives paired reference differences and exact
support counts; all displayed fits succeed in all 30 repetitions.
\begin{figure}[!htbp]
\centering\includegraphics[width=\textwidth,height=.72\textheight,keepaspectratio]{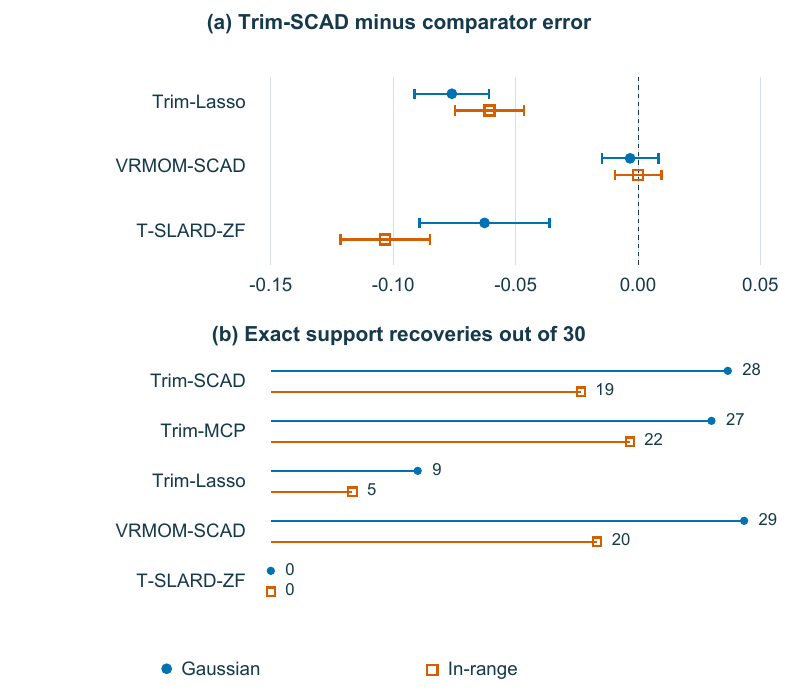}
\caption{Paired comparisons with $q=0.7$, $m=12$, $m_{\rm B}=b=2$, $n_j=512$ and $p=256$. (a) Mean coefficient-error differences, Trim-SCAD minus each comparator, using 30 shared datasets; bars show two paired Monte Carlo standard errors. Negative values favor Trim-SCAD. (b) Exact support recoveries among all 30 repetitions, using the fixed $10^{-6}$ threshold. Circles and squares denote Gaussian and in-range attacks, respectively.}
\label{supp-fig:competitivepaired}
\end{figure}

\begin{table}[!htbp]\centering\begingroup\renewcommand{\baselinestretch}{1.2}\small\selectfont\setlength{\tabcolsep}{3pt}
\renewcommand{\arraystretch}{1.05}
\caption{Configuration IDs for the joint missingness and Byzantine benchmark. Unless changed below, correlation is 0.6, $p=256$, and signals are clustered.}\label{supp-tab:competitive-settings}
\begin{tabular}{lccccc}
\toprule $q$ & \shortstack{No attack\\$m_{\rm B}=0$} & \shortstack{Gaussian\\$m_{\rm B}=1$} & \shortstack{In-range\\$m_{\rm B}=1$} & \shortstack{Gaussian\\$m_{\rm B}=2$} & \shortstack{In-range\\$m_{\rm B}=2$}\\\midrule
0.9 & C01 & C02 & C03 & C04 & C05 \\
0.7 & C06 & C07 & C08 & C09 & C10 \\
0.5 & C11 & C12 & C13 & C14 & C15 \\
\midrule
\multicolumn{6}{l}{Additional configurations: changes from C09}\\
C16 & \multicolumn{5}{l}{Correlation 0.0}\\
C17 & \multicolumn{5}{l}{$q=0.3$}\\
C18 & \multicolumn{5}{l}{$q=0.3$; in-range attack}\\
C19 & \multicolumn{5}{l}{$q=1.0$}\\
C20 & \multicolumn{5}{l}{$p=768$}\\
C21 & \multicolumn{5}{l}{$p=768$; in-range attack}\\
C22 & \multicolumn{5}{l}{Spread signals}\\
C23 & \multicolumn{5}{l}{Alternating signals}\\
\bottomrule
\end{tabular}\endgroup\end{table}

The matrices retain every method and setting. Coefficient errors are conditional
means $\pm$ Monte Carlo SE; a dash denotes no successful fit. Method codes are TS (Trim-SCAD),
TM (Trim-MCP), TL (Trim-Lasso), VR (VRMOM-SCAD), MD (Median-SCAD),
MN (Mean-SCAD), UC (uncorrected SCAD), MS (M-SLARD-ZF), and TT (T-SLARD-ZF).
The count matrix gives successful fits/exact
support recoveries, both out of 30; failures count as incorrect support.
F1 and test MSE, with their original denominators, are retained with these
entries in \path{ancillary_tables/competitive_accuracy.csv}.

\par\begingroup\compacttable\setlength{\tabcolsep}{3pt}
\begin{longtable}{lccccc}
\caption{Coefficient error in all planned configurations: trimmed and median-based corrected methods.}\label{supp-tab:competitive-accuracy}\\
\toprule ID & TS & TM & TL & VR & MD\\\midrule\endfirsthead
\toprule ID & TS & TM & TL & VR & MD\\\midrule\endhead
\bottomrule\endlastfoot
C01 & $0.063\pm0.005$ & $0.061\pm0.004$ & $0.086\pm0.004$ & $0.065\pm0.004$ & $0.081\pm0.005$ \\
C02 & $0.082\pm0.006$ & $0.085\pm0.008$ & $0.110\pm0.006$ & $0.078\pm0.004$ & $0.084\pm0.005$ \\
C03 & $0.064\pm0.005$ & $0.074\pm0.009$ & $0.097\pm0.005$ & $0.067\pm0.004$ & $0.077\pm0.005$ \\
C04 & $0.091\pm0.006$ & $0.091\pm0.006$ & $0.128\pm0.007$ & $0.090\pm0.006$ & $0.093\pm0.007$ \\
C05 & $0.068\pm0.004$ & $0.068\pm0.005$ & $0.110\pm0.004$ & $0.066\pm0.005$ & $0.079\pm0.006$ \\
C06 & $0.125\pm0.009$ & $0.126\pm0.009$ & $0.180\pm0.010$ & $0.131\pm0.010$ & $0.148\pm0.009$ \\
C07 & $0.161\pm0.012$ & $0.165\pm0.011$ & $0.230\pm0.012$ & $0.162\pm0.011$ & $0.174\pm0.012$ \\
C08 & $0.132\pm0.010$ & $0.133\pm0.010$ & $0.184\pm0.009$ & $0.132\pm0.009$ & $0.158\pm0.011$ \\
C09 & $0.190\pm0.014$ & $0.192\pm0.014$ & $0.266\pm0.013$ & $0.193\pm0.014$ & $0.195\pm0.015$ \\
C10 & $0.139\pm0.009$ & $0.139\pm0.009$ & $0.200\pm0.010$ & $0.139\pm0.009$ & $0.156\pm0.011$ \\
C11 & $0.417\pm0.074$ & $0.331\pm0.063$ & $0.710\pm0.112$ & $0.446\pm0.074$ & $0.490\pm0.056$ \\
C12 & $0.714\pm0.097$ & $0.439\pm0.058$ & $0.761\pm0.090$ & $0.706\pm0.095$ & $0.672\pm0.087$ \\
C13 & $0.501\pm0.084$ & $0.308\pm0.035$ & $0.701\pm0.093$ & $0.549\pm0.089$ & $0.693\pm0.099$ \\
C14 & $0.800\pm0.106$ & $0.629\pm0.082$ & $0.792\pm0.083$ & $0.858\pm0.110$ & $0.805\pm0.101$ \\
C15 & $0.482\pm0.079$ & $0.520\pm0.082$ & $0.686\pm0.085$ & $0.523\pm0.087$ & $0.720\pm0.108$ \\
C16 & $0.095\pm0.005$ & $0.099\pm0.006$ & $0.267\pm0.008$ & $0.096\pm0.005$ & $0.096\pm0.005$ \\
C17 & $1.891\pm0.050$ & $2.089\pm0.055$ & $1.850\pm0.048$ & $1.898\pm0.047$ & $1.860\pm0.064$ \\
C18 & $1.835\pm0.062$ & $2.018\pm0.061$ & $1.766\pm0.066$ & $1.820\pm0.066$ & $1.861\pm0.051$ \\
C19 & $0.072\pm0.005$ & $0.074\pm0.005$ & $0.101\pm0.005$ & $0.073\pm0.005$ & $0.072\pm0.007$ \\
C20 & $0.186\pm0.013$ & $0.191\pm0.013$ & $0.274\pm0.010$ & $0.189\pm0.013$ & $0.190\pm0.014$ \\
C21 & $0.137\pm0.011$ & $0.139\pm0.011$ & $0.224\pm0.010$ & $0.132\pm0.010$ & $0.166\pm0.013$ \\
C22 & $0.086\pm0.005$ & $0.086\pm0.005$ & $0.294\pm0.008$ & $0.087\pm0.005$ & $0.087\pm0.004$ \\
C23 & $0.159\pm0.010$ & $0.159\pm0.010$ & $0.669\pm0.038$ & $0.156\pm0.011$ & $0.160\pm0.012$ \\
\end{longtable}\endgroup

\par\begingroup\compacttable\setlength{\tabcolsep}{3pt}
\begin{longtable}{lcccc}
\caption{Coefficient error: mean aggregation and zero-filled comparators.}\\
\toprule ID & MN & UC & MS & TT\\\midrule\endfirsthead
\toprule ID & MN & UC & MS & TT\\\midrule\endhead
\bottomrule\endlastfoot
C01 & $0.063\pm0.005$ & $0.144\pm0.003$ & $0.134\pm0.007$ & $0.120\pm0.005$ \\
C02 & -- & $0.174\pm0.016$ & $0.141\pm0.007$ & $0.135\pm0.006$ \\
C03 & $0.061\pm0.004$ & $0.165\pm0.017$ & $0.147\pm0.007$ & $0.126\pm0.006$ \\
C04 & -- & $0.157\pm0.004$ & $0.141\pm0.009$ & $0.136\pm0.007$ \\
C05 & $0.063\pm0.004$ & $0.137\pm0.004$ & $0.156\pm0.006$ & $0.132\pm0.005$ \\
C06 & $0.125\pm0.009$ & $0.452\pm0.004$ & $0.257\pm0.010$ & $0.237\pm0.008$ \\
C07 & -- & $0.587\pm0.016$ & $0.291\pm0.009$ & $0.283\pm0.008$ \\
C08 & $0.127\pm0.009$ & $0.581\pm0.017$ & $0.293\pm0.010$ & $0.269\pm0.008$ \\
C09 & -- & $0.709\pm0.054$ & $0.258\pm0.010$ & $0.253\pm0.009$ \\
C10 & $0.130\pm0.008$ & $0.648\pm0.052$ & $0.267\pm0.011$ & $0.242\pm0.009$ \\
C11 & $0.417\pm0.074$ & $1.435\pm0.078$ & $0.311\pm0.013$ & $0.290\pm0.009$ \\
C12 & -- & $0.851\pm0.011$ & $0.420\pm0.013$ & $0.400\pm0.009$ \\
C13 & $0.460\pm0.085$ & $0.836\pm0.011$ & $0.430\pm0.012$ & $0.402\pm0.010$ \\
C14 & -- & $1.393\pm0.035$ & $0.409\pm0.010$ & $0.397\pm0.009$ \\
C15 & $0.555\pm0.096$ & $1.387\pm0.044$ & $0.422\pm0.011$ & $0.395\pm0.010$ \\
C16 & -- & $0.090\pm0.006$ & $0.215\pm0.006$ & $0.200\pm0.006$ \\
C17 & -- & $2.214\pm0.044$ & $0.436\pm0.016$ & $0.430\pm0.015$ \\
C18 & $1.874\pm0.060$ & $2.280\pm0.034$ & $0.452\pm0.017$ & $0.418\pm0.014$ \\
C19 & -- & $0.072\pm0.005$ & $0.101\pm0.006$ & $0.101\pm0.005$ \\
C20 & -- & $0.589\pm0.044$ & $0.250\pm0.011$ & $0.241\pm0.010$ \\
C21 & $0.134\pm0.011$ & $0.517\pm0.037$ & $0.270\pm0.011$ & $0.241\pm0.009$ \\
C22 & -- & $0.094\pm0.006$ & $0.702\pm0.011$ & $0.692\pm0.009$ \\
C23 & -- & $0.971\pm0.011$ & $1.192\pm0.013$ & $1.184\pm0.010$ \\
\end{longtable}\endgroup

\par\begingroup\compacttable\setlength{\tabcolsep}{3pt}
\begin{longtable}{lrrrrrrrrr}
\caption{Successful fits/exact recoveries out of 30 for each method and configuration.}\\
\toprule ID & TS & TM & TL & VR & MD & MN & UC & MS & TT\\\midrule\endfirsthead
\toprule ID & TS & TM & TL & VR & MD & MN & UC & MS & TT\\\midrule\endhead
\bottomrule\endlastfoot
C01 & 30/19 & 30/19 & 30/4 & 30/16 & 30/20 & 30/19 & 30/20 & 30/4 & 30/5 \\
C02 & 30/16 & 30/18 & 30/1 & 30/19 & 30/20 & 0/0 & 30/23 & 30/2 & 30/3 \\
C03 & 30/16 & 30/18 & 30/1 & 30/17 & 30/20 & 30/19 & 30/25 & 30/1 & 30/2 \\
C04 & 30/21 & 30/21 & 30/0 & 30/22 & 30/21 & 0/0 & 30/26 & 30/0 & 30/1 \\
C05 & 30/23 & 30/24 & 30/6 & 30/25 & 30/13 & 30/21 & 30/24 & 30/0 & 30/0 \\
C06 & 30/26 & 30/26 & 30/9 & 30/23 & 30/19 & 30/26 & 30/29 & 30/0 & 30/0 \\
C07 & 30/25 & 30/25 & 30/13 & 30/25 & 30/18 & 0/0 & 30/30 & 30/3 & 30/3 \\
C08 & 30/25 & 30/24 & 30/7 & 30/26 & 30/22 & 30/25 & 30/30 & 30/2 & 30/2 \\
C09 & 30/28 & 30/27 & 30/9 & 30/29 & 30/28 & 0/0 & 30/30 & 30/0 & 30/0 \\
C10 & 30/19 & 30/22 & 30/5 & 30/20 & 30/23 & 30/22 & 30/30 & 30/0 & 30/0 \\
C11 & 30/23 & 30/26 & 30/13 & 30/21 & 30/21 & 30/23 & 30/30 & 30/0 & 30/0 \\
C12 & 30/13 & 30/21 & 30/13 & 30/17 & 30/13 & 0/0 & 30/30 & 30/7 & 30/8 \\
C13 & 30/25 & 30/25 & 30/14 & 30/25 & 30/17 & 30/23 & 30/30 & 30/5 & 30/8 \\
C14 & 30/16 & 30/17 & 30/18 & 30/15 & 30/15 & 0/0 & 30/30 & 30/0 & 30/0 \\
C15 & 30/26 & 30/26 & 30/20 & 30/29 & 30/23 & 30/29 & 30/29 & 30/0 & 30/0 \\
C16 & 30/22 & 30/20 & 30/13 & 30/20 & 30/19 & 0/0 & 30/23 & 30/1 & 30/1 \\
C17 & 30/3 & 30/0 & 30/3 & 30/3 & 30/4 & 0/0 & 30/22 & 30/0 & 30/0 \\
C18 & 30/3 & 30/0 & 30/4 & 30/3 & 30/3 & 30/2 & 30/21 & 30/0 & 30/0 \\
C19 & 30/19 & 30/19 & 30/6 & 30/21 & 30/21 & 0/0 & 30/19 & 30/3 & 30/6 \\
C20 & 30/27 & 30/28 & 30/20 & 30/30 & 30/29 & 0/0 & 30/30 & 30/0 & 30/0 \\
C21 & 30/23 & 30/27 & 30/11 & 30/23 & 30/26 & 30/22 & 30/30 & 30/0 & 30/0 \\
C22 & 30/24 & 30/23 & 30/2 & 30/23 & 30/19 & 0/0 & 30/5 & 30/0 & 30/0 \\
C23 & 30/25 & 30/25 & 30/3 & 30/22 & 30/24 & 0/0 & 30/8 & 30/0 & 30/0 \\
\end{longtable}\endgroup

Complete selection diagnostics, paired differences and computation records are in
\path{ancillary_tables/competitive_diagnostics.csv},
\path{ancillary_tables/competitive_paired.csv}, and
\path{ancillary_tables/competitive_costs.csv}, with column definitions and reporting
conventions in their notes. Figure~\ref{supp-fig:competitivepaired} shows the paired reference comparison.

\subsection{Prediction and Support Sensitivity}\label{supp-sec:multimetric38}
Table~\ref{supp-tab:prediction38} supplements coefficient errors with paired test-MSE differences. The saved validation-selected estimates are unchanged. All 23 primary settings and all six sample-size follow-up settings are re-evaluated at ten common absolute thresholds: $10^{-6},10^{-4},10^{-3},0.01,0.025,0.05,0.1,0.2,0.4,0.6$. Thresholds change only the support evaluation; no threshold is selected using the true coefficients, and neither refitting nor prediction evaluation follows thresholding. Failed fits count as zero in unconditional support summaries; conditional summaries report their denominators separately.

\begin{table}[H]
\caption{Paired test-MSE differences, Trim-SCAD minus each comparator. Entries are means with one Monte Carlo standard error in parentheses; each uses 30 jointly successful repetitions. Negative differences favor Trim-SCAD. The follow-up uses new seeds.}\label{supp-tab:prediction38}
\centering\compacttable\setlength{\tabcolsep}{3pt}
\begin{tabular}{rllrrr}\toprule
$q$ & $n_j$ & Attack & Trim-Lasso & VRMOM-SCAD & T-SLARD-ZF\\\midrule
\multicolumn{6}{l}{\textit{Fixed local sample size}}\\
0.7 & 512 & Gaussian & $-0.0437$ (0.0054) & $-0.0000$ (0.0013) & $-0.0079$ (0.0027)\\
0.7 & 512 & In-range & $-0.0329$ (0.0049) & $-0.0004$ (0.0009) & $-0.0114$ (0.0019)\\
0.5 & 512 & Gaussian & $-0.5694$ (0.1615) & $-0.0465$ (0.0402) & $0.2941$ (0.1197)\\
0.5 & 512 & In-range & $-0.6109$ (0.1586) & $-0.0307$ (0.0601) & $0.0897$ (0.1048)\\
\multicolumn{6}{l}{\textit{Sample-size follow-up}}\\
0.7 & 512 & Gaussian & $-0.0450$ (0.0043) & $-0.0002$ (0.0009) & $-0.0188$ (0.0024)\\
0.7 & 512 & In-range & $-0.0285$ (0.0038) & $-0.0002$ (0.0005) & $-0.0173$ (0.0021)\\
0.5 & 1004 & Gaussian & $-0.0956$ (0.0175) & $-0.0013$ (0.0015) & $-0.0272$ (0.0040)\\
0.5 & 1004 & In-range & $-0.1084$ (0.0260) & $0.0025$ (0.0031) & $-0.0373$ (0.0052)\\
0.3 & 2788 & Gaussian & $-0.3873$ (0.1112) & $-0.0074$ (0.0100) & $-0.2686$ (0.0201)\\
0.3 & 2788 & In-range & $-0.4072$ (0.1155) & $-0.0010$ (0.0025) & $-0.2743$ (0.0141)\\
\bottomrule\end{tabular}
\end{table}

Figure~\ref{supp-fig:thresholds38} shows the primary two-attacker settings at $q=0.7,0.5$. T-SLARD-ZF returns additional nonzero coefficients under the original support definition: its exact-recovery count changes from zero to 30 at threshold $0.2$ for $q=0.7$ and at $0.4$ for $q=0.5$, under both attacks. Exact-support rankings therefore depend on the threshold used to define selected variables. The ancillary tables retain all methods, settings and thresholds, together with paired error differences, Monte Carlo standard errors and favorable-pair counts.

\begin{figure}[H]
\centering\includegraphics[width=\textwidth,height=.72\textheight,keepaspectratio]{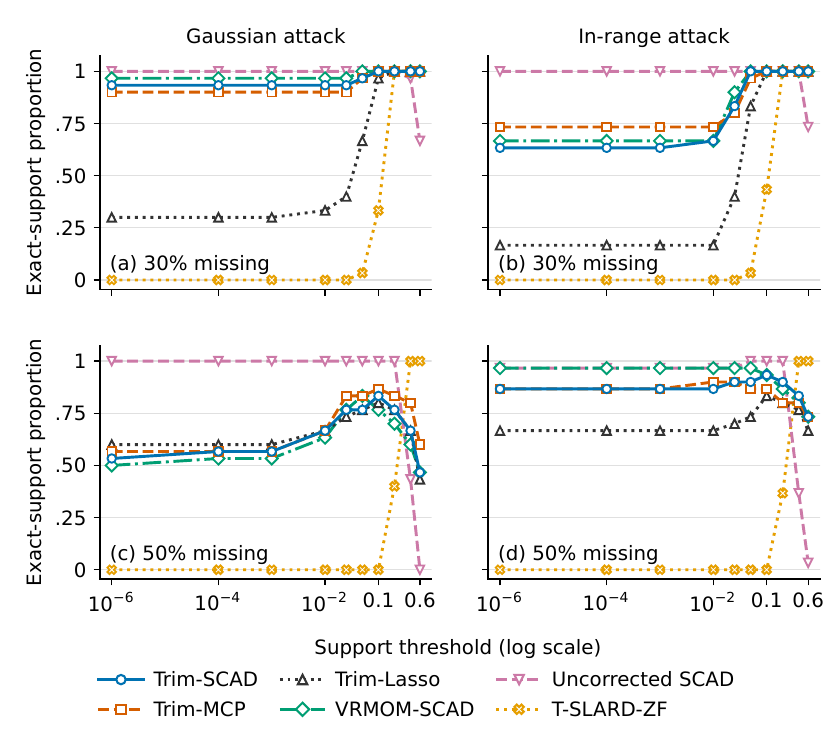}
\caption{Sensitivity of exact support recovery to a common absolute coefficient threshold. Curves use all 30 repetitions, counting failures as incorrect. Columns show Gaussian and in-range attacks; rows show 30\% and 50\% missingness at $n_j=512$. All ten thresholds are evaluation settings; none is used to select or refit an estimator.}
\label{supp-fig:thresholds38}
\end{figure}

\subsection{Sample Size and Missingness}\label{supp-sec:competitive_sample_size}
To examine the error--support tradeoff at greater missingness, a follow-up planned after the fixed-sample results uses new seeds and $n_j=\lceil512(0.7/q)^2\rceil$: sizes $512$, $1004$ and $2788$ for $q=0.7,0.5,0.3$. Each of six observation-probability--attack combinations has 30 repetitions, $m=12$, $p=256$ and $m_{\rm B}=b=2$. The covariance, signals, noise, nine methods, grid and stopping rules are unchanged.

Methods and attacks share data within $(q,n_j)$, but the three cohorts are
independent rather than nested. Increasing samples according to $q^{-2}$ does
not imply equal errors, since variance also depends on signal and correlation.
The tables retain all nine methods from both studies, including failures.

Errors are conditional means $\pm$ Monte Carlo SE; $S/E$ counts successful
fits/exact recoveries, each out of 30. The independent new cohorts use
$n_j=512,1004,2788$ at $q=0.7,0.5,0.3$, respectively. F01/F02, F03/F04 and
F05/F06 denote Gaussian/in-range attacks at those three rates.
Their fixed-$n_j=512$ counterparts are C09/C10, C14/C15 and C17/C18 in
Table~\ref{supp-tab:competitive-accuracy}. Full side-by-side values are in
\path{ancillary_tables/competitive_sample_size.csv}; cohorts are independent,
so no paired SE is computed between sample sizes.
\begin{center}\compacttable\setlength{\tabcolsep}{3pt}
\begin{tabular}{lccc}\toprule
Method & F01 & F02 & F03\\\midrule
Trim-SCAD & \shortstack{$0.179\pm0.012$\\30/28} & \shortstack{$0.144\pm0.009$\\30/19} & \shortstack{$0.214\pm0.015$\\30/20} \\
Trim-MCP & \shortstack{$0.179\pm0.012$\\30/29} & \shortstack{$0.143\pm0.010$\\30/21} & \shortstack{$0.208\pm0.013$\\30/26} \\
Trim-Lasso & \shortstack{$0.272\pm0.012$\\30/9} & \shortstack{$0.206\pm0.009$\\30/4} & \shortstack{$0.333\pm0.026$\\30/17} \\
VRMOM-SCAD & \shortstack{$0.177\pm0.011$\\30/26} & \shortstack{$0.142\pm0.010$\\30/18} & \shortstack{$0.219\pm0.014$\\30/19} \\
Median-SCAD & \shortstack{$0.182\pm0.012$\\30/23} & \shortstack{$0.173\pm0.011$\\30/23} & \shortstack{$0.198\pm0.013$\\30/12} \\
Mean-SCAD & \shortstack{--\\0/0} & \shortstack{$0.131\pm0.008$\\30/15} & \shortstack{--\\0/0} \\
Uncorrected SCAD & \shortstack{$0.696\pm0.055$\\30/30} & \shortstack{$0.615\pm0.050$\\30/28} & \shortstack{$1.064\pm0.058$\\30/30} \\
M-SLARD-ZF & \shortstack{$0.285\pm0.013$\\30/0} & \shortstack{$0.285\pm0.011$\\30/0} & \shortstack{$0.313\pm0.009$\\30/0} \\
T-SLARD-ZF & \shortstack{$0.279\pm0.011$\\30/0} & \shortstack{$0.260\pm0.009$\\30/0} & \shortstack{$0.310\pm0.008$\\30/0} \\
\bottomrule\end{tabular}\end{center}
\begin{center}\compacttable\setlength{\tabcolsep}{3pt}
\begin{tabular}{lccc}\toprule
Method & F04 & F05 & F06\\\midrule
Trim-SCAD & \shortstack{$0.174\pm0.017$\\30/26} & \shortstack{$0.369\pm0.051$\\30/14} & \shortstack{$0.249\pm0.032$\\30/22} \\
Trim-MCP & \shortstack{$0.155\pm0.012$\\30/30} & \shortstack{$0.311\pm0.065$\\30/22} & \shortstack{$0.192\pm0.012$\\30/24} \\
Trim-Lasso & \shortstack{$0.310\pm0.038$\\30/17} & \shortstack{$0.608\pm0.103$\\30/7} & \shortstack{$0.567\pm0.102$\\30/14} \\
VRMOM-SCAD & \shortstack{$0.164\pm0.014$\\30/28} & \shortstack{$0.381\pm0.054$\\30/12} & \shortstack{$0.250\pm0.036$\\30/22} \\
Median-SCAD & \shortstack{$0.195\pm0.020$\\30/23} & \shortstack{$0.414\pm0.054$\\30/10} & \shortstack{$0.317\pm0.041$\\30/16} \\
Mean-SCAD & \shortstack{$0.159\pm0.013$\\30/28} & \shortstack{--\\0/0} & \shortstack{$0.235\pm0.034$\\30/24} \\
Uncorrected SCAD & \shortstack{$0.930\pm0.047$\\30/30} & \shortstack{$1.521\pm0.034$\\30/30} & \shortstack{$1.539\pm0.032$\\30/30} \\
M-SLARD-ZF & \shortstack{$0.331\pm0.008$\\30/0} & \shortstack{$0.485\pm0.006$\\30/0} & \shortstack{$0.473\pm0.005$\\30/0} \\
T-SLARD-ZF & \shortstack{$0.303\pm0.008$\\30/0} & \shortstack{$0.491\pm0.006$\\30/0} & \shortstack{$0.474\pm0.005$\\30/0} \\
\bottomrule\end{tabular}\end{center}

\subsection{Selected-Path Diagnostics}\label{supp-sec:competitive_endpoints}
The audit evaluates Theorem~\ref{supp-thm:endpoint} on all 1740 selected Trim-SCAD/MCP paths and 13920 endpoints (1380 fixed-sample and 360 follow-up paths), sharing 360 datasets. Seeds and data hashes verify reconstruction of unchanged fits, tuning and paths. Truth and honest-node identities enter only this post-fit evaluation.

Each round uses $U_t=\cS\cup\supp(x_t)$ with all nonzero coefficients, rather than the $10^{-6}$ reporting threshold. We compute $\kappa_t,\alpha_t,\zeta_t$ and full KKT residuals using frozen weights and the surrogate at $x_{t-1}$. Residuals match saved values within $1.6\times10^{-15}$ and meet tolerance (at most $10^{-5}$). All positive-curvature endpoints satisfy the one-step inequality. Geometric recursion also requires $\alpha_t$ uniformly below one and bounded $\zeta_t$; positive curvature alone controls neither contraction nor inner iterates.

Positive curvature holds in every round for 1329/1380 fixed-sample and 360/360 follow-up paths; all-round $\alpha_t<1$ holds for 97 fixed-sample paths, confined to clean/structural controls, and no follow-up path. Table~\ref{supp-tab:competitive-endpoints} retains all configurations, including negative curvature. Full extrema, inner supports and roundwise bounds are in \path{ancillary_tables/competitive_endpoints.csv} and \path{results/competitive_endpoint_audit.jsonl}.

At the main reference ($q=0.7$, $m_{\rm B}=b=2$), all 120 method--attack paths have positive endpoint curvature. Across their endpoints, $\kappa_t\ge0.168$ and $2.25\le\alpha_t\le14.09$, so this sufficient condition does not certify geometric contraction. The 30 shared master samples also fail the uniform sparse-curvature condition: if $r_D=\max_l\Gamma_{1,ll}/\min_l\Gamma_{1,ll}$, any admissible curvature bounds must satisfy $M/\kappa\ge r_D$. Hence
\[
 K\ge 2s+\left\lceil(32/3)^2(r_D-1)^2s\right\rceil.
\]
The lower bound is 178--422, so $4K\ge p=256$ would require full positive definiteness, whereas master minimum eigenvalues are $-0.356$ to $-0.329$. Thus the reference lies outside the certified geometric regime; separate conditional-theory experiments assess contraction under valid conditions.

\par\begingroup\compacttable\setlength{\tabcolsep}{3pt}
\begin{longtable}{lrrrrll}
\caption{Endpoint conditions in every configuration; S/M denotes SCAD/MCP. Pos. and Contr. count paths meeting each condition at all eight endpoints, out of 30. Curvature minima use all endpoints; $\alpha_t$ ranges use positive-curvature endpoints.}\label{supp-tab:competitive-endpoints}\\
\toprule ID & $\min\kappa$ S & $\min\kappa$ M & Pos. S/M & Contr. S/M & $\alpha$ S & $\alpha$ M\\\midrule\endfirsthead
\toprule ID & $\min\kappa$ S & $\min\kappa$ M & Pos. S/M & Contr. S/M & $\alpha$ S & $\alpha$ M\\\midrule\endhead
\bottomrule\endlastfoot
C01 & 0.148 & 0.223 & 30/30 & 16/16 & 0.45--6.47 & 0.48--1.88 \\
C02 & 0.161 & 0.220 & 30/30 & 0/0 & 1.04--13.58 & 1.04--3.52 \\
C03 & 0.189 & 0.192 & 30/30 & 0/0 & 0.90--9.62 & 0.90--9.13 \\
C04 & 0.236 & 0.236 & 30/30 & 0/0 & 1.87--4.81 & 1.87--4.81 \\
C05 & 0.211 & 0.208 & 30/30 & 0/0 & 1.54--12.93 & 1.54--13.26 \\
C06 & 0.170 & 0.172 & 30/30 & 15/11 & 0.66--3.96 & 0.66--4.57 \\
C07 & 0.112 & 0.112 & 30/30 & 0/0 & 1.37--14.23 & 1.37--13.91 \\
C08 & 0.183 & 0.183 & 30/30 & 0/0 & 1.37--4.48 & 1.37--4.21 \\
C09 & 0.173 & 0.168 & 30/30 & 0/0 & 2.25--13.45 & 2.25--14.09 \\
C10 & 0.183 & 0.183 & 30/30 & 0/0 & 2.25--8.01 & 2.25--10.71 \\
C11 & -0.012 & 0.073 & 29/30 & 0/0 & 1.13--12.94 & 1.13--6.12 \\
C12 & -0.012 & 0.073 & 29/30 & 0/0 & 2.18--13.60 & 2.18--10.35 \\
C13 & -0.012 & 0.073 & 29/30 & 0/0 & 2.18--10.49 & 2.18--10.35 \\
C14 & 0.064 & 0.073 & 30/30 & 0/0 & 3.60--25.90 & 3.60--20.63 \\
C15 & 0.073 & 0.073 & 30/30 & 0/0 & 3.60--16.92 & 3.60--16.92 \\
C16 & 0.620 & 0.561 & 30/30 & 12/11 & 0.77--1.56 & 0.80--1.90 \\
C17 & -0.370 & -0.196 & 18/17 & 0/0 & 7.06--240.97 & 7.06--301.61 \\
C18 & -0.442 & -0.442 & 20/17 & 0/0 & 7.16--240.97 & 7.16--240.97 \\
C19 & 0.215 & 0.225 & 30/30 & 0/0 & 1.43--11.80 & 1.43--11.52 \\
C20 & 0.096 & 0.093 & 30/30 & 0/0 & 2.65--31.47 & 2.65--27.08 \\
C21 & 0.124 & 0.137 & 30/30 & 0/0 & 2.71--24.19 & 2.65--21.24 \\
C22 & 0.266 & 0.253 & 30/30 & 8/8 & 0.74--5.38 & 0.74--5.92 \\
C23 & 0.166 & 0.163 & 30/30 & 0/0 & 2.47--11.93 & 2.47--12.19 \\
F01 & 0.133 & 0.138 & 30/30 & 0/0 & 2.60--19.10 & 2.60--20.39 \\
F02 & 0.169 & 0.144 & 30/30 & 0/0 & 2.60--12.51 & 2.60--13.76 \\
F03 & 0.125 & 0.125 & 30/30 & 0/0 & 2.11--9.67 & 2.11--9.67 \\
F04 & 0.126 & 0.142 & 30/30 & 0/0 & 2.11--12.46 & 2.11--6.85 \\
F05 & 0.076 & 0.076 & 30/30 & 0/0 & 1.82--16.21 & 1.59--11.36 \\
F06 & 0.099 & 0.099 & 30/30 & 0/0 & 1.59--9.82 & 1.59--9.16 \\
\end{longtable}\endgroup

\section{Sensitivity Benchmarks}\label{supp-contents-6}
\subsection{Experimental Protocol}
Sensitivity studies use equal sizes $n_j=n$ and seeds $20268008+10000c+r$ ($r=0,\ldots,29$) for synthetic condition $c$, or $20269008+r$ ($r=0,\ldots,9$) for real data. Methods share data and attack randomness. Complete validation selects the first grid minimizer; failed candidates receive infinite loss, and a repetition fails if all candidates fail. Error means condition on success; failure and support rates use all repetitions.

\subsection{Synthetic Stress Tests}\label{supp-sec:earlierbenchmark}
This benchmark uses $m=20$, $n_j=120$, $p=240$, four equally spaced coefficients of alternating sign and magnitude $1.2$, Gaussian AR(1) predictors and standard Gaussian noise. Its reference has $q=0.7$, correlation $0.3$ and $m_{\rm B}=b=2$. Other settings use $q=0.3$, correlation $0.9$, an in-range attack with $m_{\rm B}=b=4$, undertrimming with $m_{\rm B}=4,b=2$, or no attack with $m_{\rm B}=b=0$. Each setting has 30 repetitions and shared complete validation and test samples of sizes 400 and 2000. Seven methods use the four-point grid $\{0.3,0.6,1.2,2.4\}$. The errors below are conditional on at least one candidate reaching the specified KKT tolerance. Support recovery is over all 30 repetitions. The reported uncertainty is the standard error of the conditional mean.
\par\begingroup
\renewcommand{\baselinestretch}{1}\fontsize{10}{11.5}\selectfont
\renewcommand{\arraystretch}{1.08}\setlength{\tabcolsep}{2pt}
\begin{longtable}{lclcclc}
\toprule
 & \multicolumn{3}{c}{Reference} & \multicolumn{3}{c}{High missingness} \\
\cmidrule(lr){2-4}\cmidrule(lr){5-7}
Method & Success & $\ell_2$ error $\pm$ SE & Exact & Success & $\ell_2$ error $\pm$ SE & Exact \\
\midrule\endhead
\bottomrule\endlastfoot
Mean-SCAD & 0/30 & -- & 0\% & 0/30 & -- & 0\% \\
Trim-Lasso & 30/30 & 0.418 $\pm$ 0.035 & 57\% & 30/30 & 2.197 $\pm$ 0.066 & 20\% \\
Trim-SCAD & 30/30 & 0.141 $\pm$ 0.028 & 87\% & 30/30 & 2.197 $\pm$ 0.067 & 20\% \\
Trim-MCP & 30/30 & 0.124 $\pm$ 0.012 & 90\% & 30/30 & 2.018 $\pm$ 0.107 & 30\% \\
Median-SCAD & 30/30 & 0.150 $\pm$ 0.026 & 60\% & 30/30 & 2.232 $\pm$ 0.060 & 17\% \\
VRMOM-Lasso & 30/30 & 0.395 $\pm$ 0.033 & 50\% & 30/30 & 2.208 $\pm$ 0.063 & 17\% \\
VRMOM-SCAD & 30/30 & 0.139 $\pm$ 0.027 & 77\% & 30/30 & 2.172 $\pm$ 0.069 & 23\% \\
\end{longtable}
\begin{longtable}{lclcclc}
\toprule
 & \multicolumn{3}{c}{High correlation} & \multicolumn{3}{c}{In-range attack} \\
\cmidrule(lr){2-4}\cmidrule(lr){5-7}
Method & Success & $\ell_2$ error $\pm$ SE & Exact & Success & $\ell_2$ error $\pm$ SE & Exact \\
\midrule\endhead
\bottomrule\endlastfoot
Mean-SCAD & 0/30 & -- & 0\% & 30/30 & 0.163 $\pm$ 0.047 & 100\% \\
Trim-Lasso & 30/30 & 1.522 $\pm$ 0.098 & 17\% & 30/30 & 0.506 $\pm$ 0.030 & 97\% \\
Trim-SCAD & 29/30 & 0.643 $\pm$ 0.126 & 63\% & 30/30 & 0.184 $\pm$ 0.048 & 90\% \\
Trim-MCP & 29/30 & 0.528 $\pm$ 0.117 & 70\% & 30/30 & 0.165 $\pm$ 0.038 & 93\% \\
Median-SCAD & 29/30 & 0.825 $\pm$ 0.124 & 43\% & 30/30 & 0.181 $\pm$ 0.027 & 83\% \\
VRMOM-Lasso & 30/30 & 1.582 $\pm$ 0.095 & 13\% & 30/30 & 0.491 $\pm$ 0.030 & 100\% \\
VRMOM-SCAD & 29/30 & 0.813 $\pm$ 0.128 & 57\% & 30/30 & 0.172 $\pm$ 0.048 & 93\% \\
\end{longtable}
\begin{longtable}{lclcclc}
\toprule
 & \multicolumn{3}{c}{Undertrimming} & \multicolumn{3}{c}{No attack} \\
\cmidrule(lr){2-4}\cmidrule(lr){5-7}
Method & Success & $\ell_2$ error $\pm$ SE & Exact & Success & $\ell_2$ error $\pm$ SE & Exact \\
\midrule\endhead
\bottomrule\endlastfoot
Mean-SCAD & 0/30 & -- & 0\% & 30/30 & 0.095 $\pm$ 0.014 & 80\% \\
Trim-Lasso & 0/30 & -- & 0\% & 30/30 & 0.388 $\pm$ 0.036 & 57\% \\
Trim-SCAD & 0/30 & -- & 0\% & 30/30 & 0.095 $\pm$ 0.014 & 80\% \\
Trim-MCP & 0/30 & -- & 0\% & 30/30 & 0.087 $\pm$ 0.009 & 87\% \\
Median-SCAD & 30/30 & 0.130 $\pm$ 0.011 & 90\% & 30/30 & 0.135 $\pm$ 0.020 & 87\% \\
VRMOM-Lasso & 30/30 & 0.485 $\pm$ 0.030 & 67\% & 30/30 & 0.404 $\pm$ 0.038 & 53\% \\
VRMOM-SCAD & 30/30 & 0.124 $\pm$ 0.008 & 83\% & 30/30 & 0.096 $\pm$ 0.012 & 80\% \\
\end{longtable}
\endgroup

\subsection{No-Attack Trimming Budget}
The clean-data comparison reuses the 30 synthetic datasets with $c=5$ and $m_{\rm B}=0$. Trim-SCAD at $b=0,2,4$ shares initialization, the four-point grid and the budget-zero penalty schedule, varying only aggregation. All 90 selections succeed; \texttt{results/no\_attack\_budget.json} retains 360 candidate outcomes. Budget-zero selections reproduce the clean benchmark.

Table~\ref{supp-tab:cleanbudget} reports errors, paired standard errors and support
counts. Relative to $b=0$, error increases in 20/30 datasets at $b=2$ and
18/30 at $b=4$. Additional score, prediction and fixed-penalty comparisons
are retained in \path{ancillary_tables/no_attack_extra_notes.tex}.

\begin{table}[H]
\caption{No-attack trimming comparison on 30 paired datasets. Errors are means $\pm$ SE; $\Delta$ is the paired increase in $\ell_2$ error relative to $b=0$.}
\label{supp-tab:cleanbudget}\centering\small
\setlength{\tabcolsep}{12pt}
\begin{tabular}{cccc}\toprule
$b$ & $\ell_2$ error & Paired $\Delta$ & Exact support\\\midrule
0 & $0.0948\pm0.0144$ & --- & 24/30\\
2 & $0.1007\pm0.0153$ & $0.0059\pm0.0026$ & 21/30\\
4 & $0.1011\pm0.0138$ & $0.0063\pm0.0043$ & 24/30\\\bottomrule
\end{tabular}
\end{table}

\subsection{Real Data}\label{supp-sec:earlierreal}
The superconductivity response is critical temperature. Complete training data determine scaling before masking. Real-data SEs describe variability across overlapping random splits. All four candidate outcomes are retained.
All conditions have $q=0.7$. Gaussian attack uses $m_{\mathrm B}=b=2$, stealth uses $m_{\mathrm B}=b=4$, and clean data use $m_{\mathrm B}=b=0$. Without a true coefficient vector for the real dataset, the stealth direction follows the sign of the honest median gradient. The training-mean baseline gives a mean RMSE of $34.275$ on the same splits.
\begin{center}\compacttable\setlength{\tabcolsep}{3pt}
\begin{tabular}{lccc}\toprule
Method & Clean & Gaussian & In-range\\\midrule
Mean-SCAD & \shortstack{21.125 $\pm$ 0.251\\10/10} & \shortstack{--\\0/10} & \shortstack{21.438 $\pm$ 0.049\\10/10} \\
Trim-Lasso & \shortstack{21.246 $\pm$ 0.046\\9/10} & \shortstack{21.542 $\pm$ 0.154\\9/10} & \shortstack{21.563 $\pm$ 0.196\\10/10} \\
Trim-SCAD & \shortstack{21.125 $\pm$ 0.251\\10/10} & \shortstack{21.166 $\pm$ 0.220\\10/10} & \shortstack{21.369 $\pm$ 0.062\\10/10} \\
Trim-MCP & \shortstack{21.214 $\pm$ 0.129\\10/10} & \shortstack{21.032 $\pm$ 0.277\\10/10} & \shortstack{21.413 $\pm$ 0.265\\10/10} \\
Median-SCAD & \shortstack{21.274 $\pm$ 0.224\\10/10} & \shortstack{21.420 $\pm$ 0.269\\10/10} & \shortstack{21.580 $\pm$ 0.218\\10/10} \\
VRMOM-Lasso & \shortstack{21.257 $\pm$ 0.047\\9/10} & \shortstack{21.684 $\pm$ 0.130\\10/10} & \shortstack{21.448 $\pm$ 0.168\\10/10} \\
VRMOM-SCAD & \shortstack{21.278 $\pm$ 0.217\\10/10} & \shortstack{21.003 $\pm$ 0.196\\10/10} & \shortstack{21.368 $\pm$ 0.071\\10/10} \\
\bottomrule\end{tabular}\end{center}
Entries give RMSE (mean $\pm$ SE) and successful fits out of ten.

\section{Theory Validation}\label{supp-contents-7}\label{supp-sec:theoryvalidation}
\subsection{MCAR Samples}\label{supp-sec:theorymcar}
The MCAR audit uses $m=8,p=24$, independent standard Gaussian covariates,
Gaussian noise SD $0.5$, and signals $\theta^\star_1=1.2$,
$\theta^\star_{13}=-1.2$. The reference is $n_j=320000,q=0.7,b=m_{\rm B}=1$;
Table~\ref{supp-tab:theorymcar} lists the variants. The last $m_{\rm B}$ nodes are Byzantine. Adaptive reports select
the honest 30th/70th percentile maximizing the coordinatewise absolute
difference between trimmed and honest-mean gradients. Gaussian reports are
$N(0,20^2I)$; clean and two-attack settings have $b=m_{\rm B}=0,2$.

Set $\kappa=\lambda_{\min}(\Gamma_1)$, $M=\lambda_{\max}(\Gamma_1)$ and
compute $K$. All sampled masters are positive definite, so these bounds cover
every sparse block. For $D_j=\Gamma_j-\Gamma_{\cH}$, sort each row by decreasing
absolute magnitude and take
\begin{equation*}
 d=\max_{j\in\cH,l}\left\{\sum_{k=1}^{r}D_{j,l,(k)}^2\right\}^{1/2},\qquad r=\min(K+s,p).
\end{equation*}
Cauchy--Schwarz bounds $\|D_je\|_\infty\le d\|e\|_2$ for every $r$-sparse
$e$. Together with exact $a_H=\|g_{\cH}(\theta^\star)\|_\infty$ and
$\tau=\max_{j\in\cH}\|g_j(\theta^\star)-g_{\cH}(\theta^\star)\|_\infty$,
these determine $A,B,\rho$ for a truth-assisted check of conditional theory,
without estimating a high-probability event.

Settings share 20 prespecified seeds, with nested samples across sizes and shared latent data, noise and mask uniforms across $q$. Two separate pilot seeds are excluded. SCAD starts at zero with $R_0=\sqrt{2}(1.2)$, $\eta=12/13$, and 12 rounds, without tuning. Samples with $\rho\ge1$ remain inadmissible. Solver-equivalence checks are retained with the audit records.

\begin{table}[H]
\caption{MCAR conditional-theorem checks, with 20 repetitions per setting and departures from the reference in the first column. Valid counts $\rho<1$; $\rho$ is the median over all 20 samples. Among valid fits, Error is mean $\pm$ SE, $R_{12}$ is the median bound, $I$ counts $R_\infty<R_0$ and $S$ counts $1.2>2R_{12}$.}
\label{supp-tab:theorymcar}
\centering\small\setlength{\tabcolsep}{4pt}
\begin{tabular}{lccccc}\toprule
Setting & Valid & $\rho$ & $\ell_2$ error & $R_{12}$ & $I,S$\\\midrule
$n_j=20000$ & 0/20 & 2.358 & --- & --- & $0,0$\\
$n_j=80000$ & 4/20 & 1.063 & $1.5841\pm0.1044$ & 3.644 & $0,0$\\
Reference & 20/20 & 0.483 & $0.0011\pm0.0001$ & 0.247 & $20,20$\\
$q=0.5$ & 20/20 & 0.692 & $0.0218\pm0.0095$ & 0.741 & $20,2$\\
$q=0.9$ & 20/20 & 0.387 & $0.0007\pm0.0001$ & 0.115 & $20,20$\\
Clean & 20/20 & 0.333 & $0.0010\pm0.0001$ & 0.069 & $20,20$\\
Two attacks & 20/20 & 0.792 & $0.4408\pm0.0510$ & 1.304 & $18,0$\\
Gaussian & 20/20 & 0.483 & $0.0009\pm0.0001$ & 0.247 & $20,20$\\\bottomrule
\end{tabular}
\end{table}

Of 160 cases, 124 satisfy $\rho<1$ and complete all rounds; 36 are inadmissible. All 3257 updates meet support/neighborhood bounds with at most two nonzeros, and all endpoints meet KKT tolerances and outer bounds. Bounds improve on $R_0$ in 118 valid cases; all 82 satisfying Corollary~\ref{supp-cor:support} recover support after thresholding. At $n_j=80000$, all four valid fits have $R_\infty>R_0$: contraction alone need not give an informative radius. Larger samples and observation probabilities reduce radii, which remain above observed errors.

\path{results/theory_mcar_plan.json}, \path{results/theory_mcar.json} and \path{results/theory_mcar_moments_*.npz} retain plans, complete traces, corrected moments and inadmissible cases, permitting independent recomputation of every supplied constant.

\subsection{Indefinite Quadratics}\label{supp-sec:theoryindefinite}
For the globally indefinite construction, set $p=4000$, $m=10$, $s=2$, and $\theta^\star=(1.2,-1.2,0,\ldots,0)^\top$. Let
\begin{equation*}
 \Gamma_j=(1+\delta_j)I-1.2vv^\top,\qquad v=p^{-1/2}\mathbf1,
\end{equation*}
where $\delta_j$ is uniform on $[-0.004,0.004]$. For honest nodes, set $r_j=\mu+\xi_j-\overline\xi_{\cH}$ and $\gamma_j=\Gamma_j\theta^\star-r_j$, with independent coordinatewise uniform $\mu_l\in[-0.0005,0.0005]$ and $\xi_{jl}\in[-0.001,0.001]$. The honest average score is $\mu$, its maximum node deviation is at most $0.002$, and $\Gamma_j-\Gamma_{\cH}=(\delta_j-\overline\delta_{\cH})I$. Thus fixed inputs $a_H=0.0005$, $\tau=0.002$, and $d=0.008$ are valid for every generated instance.

Choose $\kappa=0.98$ and $M=1.01$, giving $K=5$. Every principal block with $|U|\le4K=20$ satisfies
\begin{equation*}
 \begin{aligned}
 \lambda_{\min}(\Gamma_{j,UU})&\ge1-0.004-1.2(20/4000)=0.990,\\
 \lambda_{\max}(\Gamma_{j,UU})&\le1.004.
 \end{aligned}
\end{equation*}
Since $v^\top\Gamma_jv=1+\delta_j-1.2\le-0.196$, these losses are globally indefinite while their required sparse blocks have positive curvature, isolating the conditions of main-text Theorem~\ref{thm:outer}.

Seven settings share 20 prespecified seeds, zero initialization, $R_0=\sqrt{2}(1.2)$, frozen within-round SCAD shapes, $\eta=12/13$ and 12 rounds. Clean settings share data and recompute penalties as trimming budgets change. Gaussian attacks replace the last $m_{\rm B}$ reports by $N(0,20^2I)$. Each round, adaptive attacks choose just inside either honest coordinate-range endpoint to maximize the absolute surrogate score at truth. Attacks may use truth; fitting uses the fixed analytical bounds above. Separate pilots check solver equivalence.

\begin{table}[H]
\caption{Controlled indefinite quadratics, with 20 repetitions per setting. Error is mean $\pm$ SE in units of $10^{-4}$; theoretical constants, including $\rho$ and $R_{12}$, are fixed within each row.}
\label{supp-tab:theoryindefinite}
\centering\small\setlength{\tabcolsep}{4pt}
\begin{tabular}{lcccc}\toprule
Attack & $(b,m_{\rm B})$ & $\rho$ & $R_{12}$ & Error $(10^{-4})$\\\midrule
None & $(0,0)$ & 0.185 & 0.01416 & $3.800\pm0.309$\\
None & $(1,0)$ & 0.254 & 0.03869 & $3.849\pm0.328$\\
None & $(2,0)$ & 0.369 & 0.09155 & $3.928\pm0.334$\\
Gaussian & $(1,1)$ & 0.254 & 0.03869 & $3.887\pm0.308$\\
Gaussian & $(2,2)$ & 0.369 & 0.09155 & $3.928\pm0.336$\\
Adaptive & $(1,1)$ & 0.254 & 0.03869 & $5.155\pm0.309$\\
Adaptive & $(2,2)$ & 0.369 & 0.09155 & $6.893\pm0.327$\\\bottomrule
\end{tabular}
\end{table}

All 3265 updates stay in their prescribed neighborhoods with at most two nonzeros ($K=5$). At each level, active-block equations and full KKT checks give an exact fixed point; every recorded distance contracts by at most $1-\kappa/M$. All 140 runs satisfy outer bounds, final signal conditions and thresholded support recovery. Increasing trimming enlarges the guaranteed radius, without proportional changes in observed error. Construction, seeds, paths and solver comparisons are retained in \path{results/theory_indefinite_plan.json}, \path{results/theory_indefinite.json} and \path{results/theory_indefinite_core_equivalence.json}.

\subsection{Score Variance}\label{supp-sec:theoryvariance}
We check the variance identity using $p=3$, $\theta^\star=(0.7,1/\sqrt2,-1/\sqrt2)^\top$, independent standard Gaussian noise, and Gaussian predictors with covariance $\Sigma_{lk}=c^{|l-k|}$, where $c\in\{0,0.5\}$. For the first score coordinate, the Gaussian fourth-moment identity gives
\begin{equation*}
 \nu_1^2=\frac1q+\frac{1-q}{q^2}(1+c^2+c^4).
\end{equation*}
Use $q\in\{0.2,0.3,0.5,0.7,0.9,1\}$ and 30 independent repetitions of 50000 observations, sharing Gaussian draws, noise and mask uniforms across settings within repetitions. All 360 estimates are retained. Direct corrected scores and their separate noise/missing-predictor components agree pointwise within $10^{-10}$.

Mean empirical variances differ from the formula by at most $0.58\%$ across 12 settings. At $q=0.2$, empirical/theoretical values are $24.855/25$ for independent predictors and $31.209/31.25$ for $c=0.5$; at $q=1$, the missing-predictor component is zero. \path{results/theory_variance.json} retains repetitions and components; \path{results/theory_variance_plan.json} gives the fixed design and seeds.

\section{Additional Checks}\label{supp-contents-8}\label{supp-sec:additionalchecks}
The earlier checks below use distinct designs. Full records remain in the reproducibility archive; the last column identifies their protocols and summaries.
\begin{table}[!htbp]
\caption{Earlier checks and their retained outcomes.}
\centering\setlength{\tabcolsep}{4pt}
\begin{tabular}{>{\raggedright\arraybackslash}p{.18\textwidth}>{\raggedright\arraybackslash}p{.61\textwidth}>{\raggedright\arraybackslash}p{.12\textwidth}}\toprule
Study & Outcomes and interpretation & Record\\\midrule
Fixed tuning & Zero estimates at 70\% missingness; success in 23/50 highly correlated samples. Different tuning, seeds and caps prevent direct comparison with the validation-tuned benchmark. All 11700 records and earlier communication-kernel measurements are retained. & [A]\\[3pt]
Bound-based selection & All 30 indefinite quadratic selections pass error and support checks. For MCAR data ($p=12,m=10,s=1$), all ten repetitions across two observation rates admit candidates at $n_j=200000$; smaller samples include empty admissions and uninformative radii. Inputs use truth. & [B]\\[3pt]
Reference curvature & The earlier $p=240,n_1=120$ reference fails Proposition~\ref{supp-prop:rank}'s condition in all 30 samples. The primary reference is audited in Section~\ref{supp-sec:competitive_endpoints}. & [C]\\[3pt]
Local check and calibration & All 30 selected Trim-SCAD fits reproduce to $10^{-10}$. All final endpoints and 976/997 stage endpoints pass the local check; maximum support is 4--12. Exact recovery is 26/30 at thresholds $10^{-8},10^{-6},10^{-4},10^{-3}$. Independent masked calibration uses three prespecified sizes, including empty admissions. & [D]\\\bottomrule
\end{tabular}
\par\smallskip
\noindent Records: [A]~\path{ancillary_tables/exploratory_and_kernel.tex}; [B]~\path{ancillary_tables/earlier_bound_audits.tex} and \path{results/bridge_table.tex}; [C]~\path{ancillary_tables/earlier_reference_curvature.tex}; [D]~\path{ancillary_tables/additional_certificate_checks.tex} (also the earlier node-count study) and \path{results/calibration_table.tex}. The current theory and trusted-only comparisons are in Sections~\ref{supp-sec:theoryvalidation} and~\ref{supp-sec:trustedcomparison}.
\end{table}
\section{Additional Real Data}\label{supp-contents-9}
Abalone has 4177 records~\cite{suppbib-abalone}. Seven continuous measurements predict ring count, excluding sex. Each split uses 2400 training, 800 validation and 977 test records, with $m=10,n=240,m_{\rm B}=b=1$.

For superconductivity~\cite{suppbib-uci}, matching response rows align \texttt{train.csv} and \texttt{unique\_m.csv}. Element counts are normalized by row sums and rounded to ten decimals to define composition groups. Random group-level splits allocate 70\%, 15\% and the remainder to training, validation and testing. Training is shuffled and truncated by fewer than 20 rows for equal blocks, with $m=20,m_{\rm B}=b=2$. Records retain partition sizes and group counts.

Both tasks use $q=0.7$, Gaussian replacement attacks and the same four-candidate grid. Centering and scaling use complete training data before masking; validation and test covariates remain complete. This models artificial missingness after preprocessing. Ten shared splits give 60 selected fits and 240 candidate fits. The table uses conditional means and sample standard deviations (SD); failure counts retain all splits.
\begin{center}\small
\begin{tabular}{llrrrr}\toprule
Task & Method & Success & RMSE & SD & Mean baseline\\\midrule
Abalone & Trim-SCAD & 8/10 & 2.444 & 0.109 & 3.221\\
Abalone & Trim-Lasso & 10/10 & 2.591 & 0.153 & 3.221\\
Abalone & VRMOM-SCAD & 9/10 & 2.433 & 0.136 & 3.221\\
Grouped SC & Trim-SCAD & 10/10 & 21.407 & 0.926 & 34.307\\
Grouped SC & Trim-Lasso & 9/10 & 21.830 & 0.521 & 34.307\\
Grouped SC & VRMOM-SCAD & 10/10 & 21.414 & 0.911 & 34.307\\
\bottomrule

\end{tabular}
\end{center}
Paired differences use only splits where both compared methods succeed. Negative differences favor Trim-SCAD. Since random splits overlap, the SD describes split variability; it is not converted into a confidence interval assuming independent datasets.
\begin{center}\small
\begin{tabular}{llrrr}\toprule
Task & Contrast & Pairs & Mean difference & SD\\\midrule
Abalone & Trim-SCAD minus Trim-Lasso & 8 & -0.153 & 0.154\\
Abalone & Trim-SCAD minus VRMOM-SCAD & 8 & -0.001 & 0.092\\
Grouped SC & Trim-SCAD minus Trim-Lasso & 9 & -0.526 & 0.599\\
Grouped SC & Trim-SCAD minus VRMOM-SCAD & 10 & -0.006 & 0.064\\
\bottomrule

\end{tabular}
\end{center}

\section{Computational Cost}\label{supp-contents-10}
\subsection{Full Validation Tuning}\label{supp-sec:fulltuning38}
The primary design with $m=12,n_j=512,p=256$ is evaluated at $q=0.7,0.5$ under two Gaussian or in-range attackers. Five methods share 30 new seeds per setting, the seven-factor grid, validation and test samples, and eight-round limit. Method order is randomized within datasets. An independent 20-run pilot exactly reproduces the serial benchmark's selected coefficients, validation values and stopping records. The main study retains all 600 runs and 4200 candidates.

Windows processes exchange coefficient and gradient vectors through pipes, using one BLAS thread per process. Timing includes rebuilding each method's moments and every candidate's initialization, gradient exchange, aggregation, optimization, validation and selection. Workers compute gradients before attacks replace received reports at the coordinator. Shared dataset preparation and worker startup are recorded separately; testing occurs after timing. The pool is reused within a dataset, with moments rebuilt for each method.

Table~\ref{supp-tab:processfullcost} reports all five methods and four settings. Time and payload medians use all 30 repetitions; coefficient errors condition on success. Vector payload sums all candidates, including failures, at $16(m-1)p$ bytes per completed round, with MB denoting $10^6$ bytes. Control messages, operating-system framing and startup traffic are excluded from this payload; control messages are counted separately. Final error targets $0.1$, $0.2$ and $0.5$ evaluate the selected estimate after tuning and do not guide stopping or selection. Ancillary CSVs provide prediction, support, targets, failures and paired differences; complete records and setup costs accompany the code.

\begin{table}[H]
\caption{Full validation-tuning cost and final estimation error in the process study. Each setting uses 30 paired repetitions. Success counts successful selected fits out of 30; L2 gives their mean $\pm$ Monte Carlo SE. Time and MB are medians over all repetitions, including every candidate failure. MB denotes $10^6$ actual coefficient/gradient bytes. Shared data preparation and process startup are separate.}
\label{supp-tab:processfullcost}
\centering\small\setlength{\tabcolsep}{3pt}
\begin{tabular}{llcrrr}\toprule
$q$/Attack & Method & Success & L2 & Time (s) & MB\\\midrule
0.7/G & Trim-SCAD & 30/30 & $0.166\pm 0.011$ & 0.138 & 1.802 \\
0.7/G & Trim-MCP & 30/30 & $0.162\pm 0.010$ & 0.132 & 1.802 \\
0.7/G & Trim-Lasso & 30/30 & $0.257\pm 0.021$ & 0.140 & 1.802 \\
0.7/G & VRMOM-SCAD & 30/30 & $0.168\pm 0.012$ & 0.236 & 1.802 \\
0.7/G & T-SLARD-ZF & 30/30 & $0.255\pm 0.009$ & 0.204 & 2.478 \\
\midrule
0.7/I & Trim-SCAD & 30/30 & $0.143\pm 0.010$ & 0.133 & 1.802 \\
0.7/I & Trim-MCP & 30/30 & $0.143\pm 0.009$ & 0.128 & 1.802 \\
0.7/I & Trim-Lasso & 30/30 & $0.204\pm 0.011$ & 0.128 & 1.802 \\
0.7/I & VRMOM-SCAD & 30/30 & $0.137\pm 0.009$ & 0.235 & 1.802 \\
0.7/I & T-SLARD-ZF & 30/30 & $0.247\pm 0.009$ & 0.190 & 2.478 \\
\midrule
0.5/G & Trim-SCAD & 30/30 & $0.897\pm 0.144$ & 0.103 & 1.442 \\
0.5/G & Trim-MCP & 30/30 & $0.638\pm 0.110$ & 0.099 & 1.442 \\
0.5/G & Trim-Lasso & 30/30 & $0.846\pm 0.103$ & 0.104 & 1.194 \\
0.5/G & VRMOM-SCAD & 30/30 & $0.942\pm 0.142$ & 0.181 & 1.442 \\
0.5/G & T-SLARD-ZF & 30/30 & $0.406\pm 0.010$ & 0.166 & 2.433 \\
\midrule
0.5/I & Trim-SCAD & 30/30 & $0.408\pm 0.088$ & 0.107 & 1.442 \\
0.5/I & Trim-MCP & 30/30 & $0.413\pm 0.098$ & 0.102 & 1.442 \\
0.5/I & Trim-Lasso & 30/30 & $0.729\pm 0.100$ & 0.105 & 1.442 \\
0.5/I & VRMOM-SCAD & 30/30 & $0.465\pm 0.103$ & 0.189 & 1.442 \\
0.5/I & T-SLARD-ZF & 30/30 & $0.391\pm 0.010$ & 0.163 & 2.433 \\
\bottomrule\end{tabular}
\par\smallskip\raggedright\footnotesize G: Gaussian reports; I: in-range reports. Every method searches the same seven factors, allowing up to eight rounds per candidate. Time includes method-specific moment construction, all candidate initializations, gradient transport, local solves, validation and selection; truth and test evaluation occur afterwards.
\end{table}

All 600 runs select a successful candidate; 1482 of the 4200 candidates fail and remain included in cost. At $q=0.7$, Trim-SCAD and VRMOM-SCAD have the same median vector payload. T-SLARD-ZF's larger payload partly reflects more candidates completing all eight rounds: it has no initialization failures, whereas each corrected method has 60 among 210 candidates in each reference setting. Payload differences across these methods therefore also reflect candidate stopping patterns.

\subsection{Fixed-Penalty Accuracy Targets}

The cost benchmark uses $m=8,n=1024$, $p\in\{1000,4000,10000\}$, $q\in\{0.5,0.7,0.9\}$, independent standard Gaussian features and noise, and the first four true coefficients alternating $\pm1.2$. Two workers send Gaussian replacements; trimming removes two values per tail. Ten seeds, nine configurations and three methods give 270 runs. All use multiplier $0.6$, eight rounds and one BLAS thread per process; method order is randomized within seeds. Spawned Windows processes exchange double-precision vectors through pipes.

Cost is measured to the first successful stage endpoint reaching the coefficient-error target. Initialization is eligible with zero payload. Truth is used retrospectively. Nonattainment includes solver failure and exhausted budgets; an early hit can precede later failure, so final success is reported separately. Cost medians condition on attainment and must be read with its counts.

\par\begingroup
\renewcommand{\baselinestretch}{1}\fontsize{10}{11.5}\selectfont
\renewcommand{\arraystretch}{1.08}\setlength{\tabcolsep}{3pt}
\begin{longtable}{rrlrrrr@{\hspace{8pt}}rrrr}
\toprule
 & & & \multicolumn{4}{c}{Target $0.1$} & \multicolumn{4}{c}{Target $0.2$}\\
\cmidrule(lr){4-7}\cmidrule(lr){8-11}
$q$ & $p$ & Method & Hits & Time (s) & MB & Round & Hits & Time (s) & MB & Round\\
\midrule\endhead
\bottomrule\endlastfoot
0.5 & 1000 & TS & 2/10 & 0.065 & 0.448 & 4.0 & 10/10 & 0.035 & 0.112 & 1.0 \\
0.5 & 1000 & TL & 0/10 & -- & -- & -- & 1/10 & 0.085 & 0.672 & 6.0 \\
0.5 & 1000 & VR & 0/10 & -- & -- & -- & 10/10 & 0.074 & 0.168 & 1.5 \\
0.5 & 4000 & TS & 1/10 & 0.444 & 2.240 & 5.0 & 6/10 & 0.292 & 0.896 & 2.0 \\
0.5 & 4000 & TL & 0/10 & -- & -- & -- & 0/10 & -- & -- & -- \\
0.5 & 4000 & VR & 0/10 & -- & -- & -- & 5/10 & 0.452 & 0.896 & 2.0 \\
0.5 & 10000 & TS & 0/10 & -- & -- & -- & 1/10 & 0.562 & 1.120 & 1.0 \\
0.5 & 10000 & TL & 0/10 & -- & -- & -- & 1/10 & 0.909 & 2.240 & 2.0 \\
0.5 & 10000 & VR & 0/10 & -- & -- & -- & 1/10 & 0.794 & 1.120 & 1.0 \\
0.7 & 1000 & TS & 10/10 & 0.030 & 0.112 & 1.0 & 10/10 & 0.029 & 0.112 & 1.0 \\
0.7 & 1000 & TL & 0/10 & -- & -- & -- & 7/10 & 0.041 & 0.224 & 2.0 \\
0.7 & 1000 & VR & 10/10 & 0.052 & 0.112 & 1.0 & 10/10 & 0.051 & 0.112 & 1.0 \\
0.7 & 4000 & TS & 10/10 & 0.198 & 0.448 & 1.0 & 10/10 & 0.189 & 0.448 & 1.0 \\
0.7 & 4000 & TL & 0/10 & -- & -- & -- & 2/10 & 0.383 & 2.016 & 4.5 \\
0.7 & 4000 & VR & 10/10 & 0.277 & 0.448 & 1.0 & 10/10 & 0.277 & 0.448 & 1.0 \\
0.7 & 10000 & TS & 10/10 & 0.503 & 1.120 & 1.0 & 10/10 & 0.503 & 1.120 & 1.0 \\
0.7 & 10000 & TL & 0/10 & -- & -- & -- & 5/10 & 0.688 & 3.360 & 3.0 \\
0.7 & 10000 & VR & 10/10 & 0.754 & 1.120 & 1.0 & 10/10 & 0.728 & 1.120 & 1.0 \\
0.9 & 1000 & TS & 10/10 & 0.027 & 0.112 & 1.0 & 10/10 & 0.027 & 0.112 & 1.0 \\
0.9 & 1000 & TL & 0/10 & -- & -- & -- & 10/10 & 0.028 & 0.112 & 1.0 \\
0.9 & 1000 & VR & 10/10 & 0.047 & 0.112 & 1.0 & 10/10 & 0.047 & 0.112 & 1.0 \\
0.9 & 4000 & TS & 10/10 & 0.157 & 0.448 & 1.0 & 10/10 & 0.157 & 0.448 & 1.0 \\
0.9 & 4000 & TL & 0/10 & -- & -- & -- & 9/10 & 0.211 & 0.896 & 2.0 \\
0.9 & 4000 & VR & 10/10 & 0.245 & 0.448 & 1.0 & 10/10 & 0.245 & 0.448 & 1.0 \\
0.9 & 10000 & TS & 10/10 & 0.467 & 1.120 & 1.0 & 10/10 & 0.459 & 1.120 & 1.0 \\
0.9 & 10000 & TL & 0/10 & -- & -- & -- & 9/10 & 0.562 & 2.240 & 2.0 \\
0.9 & 10000 & VR & 10/10 & 0.668 & 1.120 & 1.0 & 10/10 & 0.666 & 1.120 & 1.0 \\
\end{longtable}
\noindent TS = Trim-SCAD; TL = Trim-Lasso; VR = VRMOM-SCAD.\par
\endgroup

Timing includes initialization, gradients, pipe transport, aggregation and inner solves; setup and validation search are excluded. Attacked workers still compute gradients because the master replaces received messages. Payload is \mbox{$16(m-1)p$} bytes per round, excluding framing/control traffic; MB is $10^6$ bytes.

Every 20 ms, memory sampling sums coordinator and worker Windows working sets, excluding calibration. Its overhead is timed. Ancillary records give median/maximum peaks in MiB ($2^{20}$ bytes) and median solve time over all runs; sampling can miss brief peaks or count shared pages repeatedly.
\begin{center}\compacttable\setlength{\tabcolsep}{3pt}
\begin{tabular}{rccc}\toprule
$p$ & Trim-SCAD & Trim-Lasso & VRMOM-SCAD\\\midrule
1000 & 10/10 & 10/10 & 10/10 \\
4000 & 6/10 & 10/10 & 4/10 \\
10000 & 1/10 & 10/10 & 0/10 \\
\bottomrule\end{tabular}\end{center}
Entries are final successes at $q=0.5$ out of ten. At $q=0.7,0.9$, all methods succeed in every repetition. Complete timings and memory values for all 27 cells are in \path{ancillary_tables/process_memory.csv}.

At $q\ge0.7$, all 180 solves succeed. Final success measures optimization stability separately from reaching the error target. All measurements use one host.

\section{Path and Component Experiments}\label{supp-contents-11}\label{supp-sec:sampledpaths}
Six configurations use $m=8,p=768,s=4$, $b=m_{\rm B}=1$, ten independent seeds and eight rounds. Reference: $n_j=256,q=0.7$, Gaussian AR(1) correlation $0.3$ and Gaussian replacement. Variants use $n_j=512$, $q=0.5$, correlation $0.7$, or reports $-20$ times the honest mean gradient. These five designs use active positions $0,192,384,576$. A subsequently added clustered design uses positions $0,1,2,3$ and correlation $0.7$, otherwise retaining the reference; all original outcomes remain. Coefficients alternate $\pm1.2$; noise SD is $0.5$.

Variants share data, attack seeds, a complete validation sample of 400, and factors $0.3,0.6,1.2,2.4$. Full denotes practical Trim-SCAD. No correction uses zero-filled moments with $q=1$ and their natural penalty scale. Direct target retains each round's base and frozen weight shape but skips intermediate levels. Stage refresh recomputes the SCAD shape at each level, retaining the base and targets. Lasso uses constant weights; VRMOM changes only aggregation. Corrected variants share initial local Lasso.

Each stage permits 1200 updates with the original tolerances. Candidate and round budgets match, but continuation can use more stages; update counts include all four candidates and failures. Traced and core solvers agree to $10^{-10}$ for all four pilot candidates.

\small
\begin{center}\compacttable\setlength{\tabcolsep}{5pt}
\begin{tabular}{lccc}\toprule
Variant & Reference & Larger $n$ & More missing\\\midrule
Trim-SCAD & $0.085\pm0.009$ & $0.074\pm0.009$ & $0.670\pm0.121$ \\
No correction & $0.067\pm0.005$ & $0.061\pm0.006$ & $0.107\pm0.009$ \\
Direct target & $0.085\pm0.009$ & $0.074\pm0.009$ & $0.670\pm0.121$ \\
Stage refresh & $0.192\pm0.105$ & $0.074\pm0.009$ & $0.669\pm0.122$ \\
Trim-Lasso & $0.313\pm0.012$ & $0.224\pm0.014$ & $0.805\pm0.114$ \\
VRMOM-SCAD & $0.089\pm0.010$ & $0.076\pm0.008$ & $0.648\pm0.100$ \\
\bottomrule\end{tabular}\end{center}
\begin{center}\compacttable\setlength{\tabcolsep}{5pt}
\begin{tabular}{lccc}\toprule
Variant & Correlated & Sign flip & Clustered\\\midrule
Trim-SCAD & $0.089\pm0.009$ & $0.089\pm0.008$ & $1.098\pm0.285$ \\
No correction & $0.079\pm0.007$ & $0.084\pm0.007$ & $1.283\pm0.030$ \\
Direct target & $0.089\pm0.009$ & $0.089\pm0.008$ & $1.098\pm0.285$ \\
Stage refresh & $0.089\pm0.009$ & $0.086\pm0.009$ & $1.028\pm0.296$ \\
Trim-Lasso & $0.348\pm0.036$ & $0.345\pm0.010$ & $1.776\pm0.062$ \\
VRMOM-SCAD & $0.089\pm0.009$ & $0.085\pm0.008$ & $1.309\pm0.286$ \\
\bottomrule\end{tabular}\end{center}

Entries are coefficient error (mean $\pm$ SE); all cells have 10/10 successful
selected fits. \path{ancillary_tables/path_components.csv} retains candidate
failures, exact support, update counts and selections at the lower (Low) and
upper (High) grid endpoints.
\normalsize
The accompanying records retain all candidate outcomes and iteration paths; conditional accuracy excludes failed candidates.

The post-fit audit uses true support and honest moments to check each one-step inequality and propagated bound, reporting positive curvature and contraction separately. Each round's final weighted subproblem also undergoes the local optimization check.
\begin{center}\small\setlength{\tabcolsep}{6pt}\begin{tabular}{lrrrrr}\toprule
Setting & Positive & Contract & Local & Largest $\alpha_t$ & Final bound\\\midrule
Reference & 10/10 & 5/10 & 10/10 & 2.06 & 3.07 \\
Larger n & 10/10 & 10/10 & 10/10 & 0.96 & 0.75 \\
More missing & 10/10 & 6/10 & 10/10 & 2.38 & 12.31 \\
Correlated & 10/10 & 6/10 & 10/10 & 4.20 & 4.26 \\
Sign flip & 10/10 & 10/10 & 10/10 & 0.93 & 2.00 \\
Clustered & 10/10 & 0/10 & 10/10 & 23.68 & 395122.72 \\
\bottomrule\end{tabular}\end{center}
Positive, Contract and Local count runs meeting each condition in every round. Largest $\alpha_t$ maximizes over rounds and seeds; Final bound is the median propagated radius, including uninformative values. All 60 selections satisfy every one-step and propagated inequality across eight rounds, conditional on the simulated data and true support.

For all 60 master samples, a null vector of an $n_1\times(n_1+1)$ observed-design block has residual below $10^{-8}$ and a negative corrected quadratic form. The diagonal ratio lower-bounds $M/\kappa$, implying $4K_{\min}>n_1$ and excluding uniform curvature by Proposition~\ref{supp-prop:rank}, despite the positive endpoint curvature reported above.

The zero-filled population score is derived in Section~\ref{supp-sec:competitive}. This score is nearly zero for widely separated AR(1) active predictors, but not adjacent ones. Moment correction removes this population bias at increased variance. The clustered comparison remains uncertain: its ten-seed paired SE exceeds the mean improvement.

Paired differences are Full minus comparator on shared data and validation samples (mean $\pm$ Monte Carlo SE across ten independent seeds). Negative values favor Full.
\small
\par\begingroup
\renewcommand{\baselinestretch}{1}\fontsize{10}{11.5}\selectfont
\renewcommand{\arraystretch}{1.08}\setlength{\tabcolsep}{2pt}
\setlength{\tabcolsep}{1pt}
\begin{longtable}{lllll}
\toprule
Setting & No correction & Direct target & Stage refresh & VRMOM-SCAD\\
\midrule\endhead
\bottomrule\endlastfoot
Reference & $0.0183\pm0.0063$ & $0.0000\pm0.0000$ & $-0.1068\pm0.1068$ & $-0.0039\pm0.0040$ \\
Larger n & $0.0136\pm0.0058$ & $0.0000\pm0.0000$ & $-0.0000\pm0.0000$ & $-0.0014\pm0.0021$ \\
More missing & $0.5628\pm0.1235$ & $0.0000\pm0.0000$ & $0.0002\pm0.0001$ & $0.0214\pm0.0248$ \\
Correlated & $0.0101\pm0.0090$ & $0.0000\pm0.0000$ & $-0.0000\pm0.0000$ & $0.0001\pm0.0031$ \\
Sign flip & $0.0057\pm0.0121$ & $0.0000\pm0.0000$ & $0.0033\pm0.0078$ & $0.0045\pm0.0041$ \\
Clustered & $-0.1847\pm0.2817$ & $0.0000\pm0.0000$ & $0.0699\pm0.2836$ & $-0.2109\pm0.1818$ \\
\end{longtable}
\endgroup
\normalsize
Gaussian samples and masks are IID within and across honest nodes. All methods use dense empirical Gram matrices in this statistical audit; the separate matrix-free benchmark reports timing on one host.

\section{Node-Size Imbalance}\label{supp-contents-12}\label{supp-sec:imbalance}
\begingroup\renewcommand{\arraystretch}{0.93}
Set $m=8,p=240,q=0.7$ and 2400 assigned observations, with Gaussian AR(1) correlation $0.3$, four equally spaced alternating $\pm1.2$ coefficients and noise SD $0.5$. Geometric allocations target maximum/minimum ratios 1, 4 and 16; the table gives rounded sizes. The master receives the smallest or largest allocation, and the two smallest or largest remaining workers send Gaussian replacements ($m_{\rm B}=b=2$). Honest observations remain IID; honest totals vary with attack placement under the fixed assigned budget.

Ten seeds per allocation, three methods and four tuning factors give 270 selections and 1080 candidates. Complete validation ($n=400$) selects the converged fit with lowest prediction error, without truth or honest labels. Fits use eight rounds and a final local check. Methods share seeds within configurations; changed sizes alter the data stream.

\begin{center}\small
\begin{tabular}{rllrrrr}\toprule
Ratio & Master & Attacked & $n_1$ & $N_H$ & $N_{\rm eff}$ & Size range\\\midrule
1 & equal & equal & 300 & 1800 & 1800.0 & 300--300 \\
4 & small & small & 135 & 2034 & 1662.3 & 135--546 \\
4 & small & large & 135 & 1410 & 1258.3 & 135--546 \\
4 & large & small & 546 & 2100 & 1872.6 & 135--546 \\
4 & large & large & 546 & 1591 & 1299.6 & 135--546 \\
16 & small & small & 51 & 2211 & 1019.7 & 51--821 \\
16 & small & large & 51 & 1028 & 663.2 & 51--821 \\
16 & large & small & 821 & 2273 & 1466.6 & 51--821 \\
16 & large & large & 821 & 1478 & 681.8 & 51--821 \\
\bottomrule\end{tabular}\end{center}

\par\begingroup
\renewcommand{\baselinestretch}{1}\fontsize{10}{11.5}\selectfont
\renewcommand{\arraystretch}{1.08}\setlength{\tabcolsep}{2pt}
\begin{longtable}{rlllll}
\toprule
Ratio & Master & Attacked & Trim-SCAD & Trim-Lasso & Median-SCAD\\
\midrule\endhead
\bottomrule\endlastfoot
1 & equal & equal & $0.110\pm0.011$ & $0.314\pm0.021$ & $0.122\pm0.013$ \\
4 & small & small & $0.177\pm0.046$ & $0.644\pm0.156$ & $0.269\pm0.132$ \\
4 & small & large & $0.205\pm0.037$ & $0.606\pm0.141$ & $0.251\pm0.043$ \\
4 & large & small & $0.083\pm0.011$ & $0.274\pm0.023$ & $0.081\pm0.011$ \\
4 & large & large & $0.116\pm0.015$ & $0.361\pm0.025$ & $0.124\pm0.017$ \\
16 & small & small & $0.988\pm0.207$ & $1.230\pm0.141$ & $0.964\pm0.192$ \\
16 & small & large & $1.395\pm0.187$ & $1.239\pm0.149$ & $1.367\pm0.177$ \\
16 & large & small & $0.107\pm0.007$ & $0.307\pm0.015$ & $0.111\pm0.009$ \\
16 & large & large & $0.173\pm0.023$ & $0.450\pm0.020$ & $0.163\pm0.028$ \\
\end{longtable}
\noindent Entries are $\ell_2$ error means $\pm$ Monte Carlo SE.
For every method and allocation, successful fits and passing local checks both number $10/10$.
\par\endgroup

Each selection transmits $16(8-1)240\times8=215040$ bytes, excluding the four-candidate search and control traffic. Despite successful local checks, error increases sharply for the smallest master at ratio 16; local convergence need not imply accurate estimation.

\endgroup

\begin{figure}[!htbp]
\centering\includegraphics[width=\textwidth,height=.72\textheight,keepaspectratio]{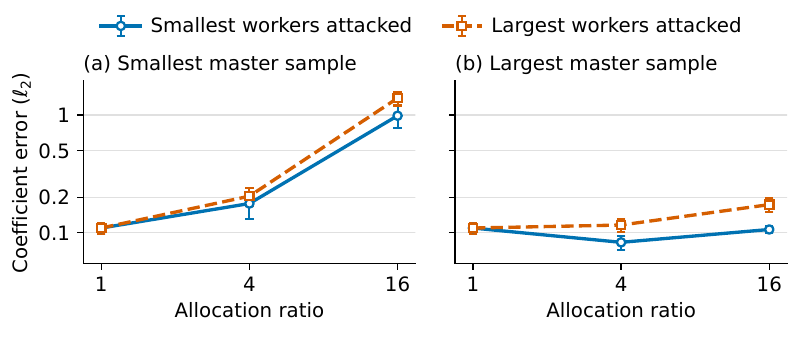}
\caption{Trim-SCAD coefficient error under unequal node sizes, with a total allocation of 2400 observations. The horizontal axis is the target largest-to-smallest allocation ratio; the error axis is logarithmic. Points show means and bars one standard error over ten repetitions. The balanced reference is shared, and all selected fits succeed; honest sample totals vary with attack placement.}
\label{supp-fig:mainimbalance}
\end{figure}

\section{Trusted-Sample Comparison}\label{supp-contents-13}\label{supp-sec:trustedcomparison}
Trusted-resource comparisons use $p=240,q=0.7$, 20 nodes with 120 observations each, two Gaussian attackers and $b=2$. Designs are independent Rademacher features with $s=1,4$, or signs of Gaussian AR(1) features with latent correlation $0.3$ and $s=4$. The first $s$ coefficients alternate $\pm1.2$; uniform $[-0.2,0.2]$ noise gives $H=1,Y_{\max}=1.2s+0.2$.

For ten seeds per design, four distributed scaled-weight SCAD candidates are fixed before drawing trusted samples. Prespecified samples of sizes $N_{\rm cal}=10000,50000$ are nested within seeds and independent of fitting and attacks, with $\delta=0.05$ and no data-dependent stopping. Selection minimizes the radius among successful fits with passing final local checks and positive calibrated curvature.

Trusted-only corrected Lasso and scaled-weight SCAD use the same trusted budget: $N_{\rm cal}-1000$ fitting rows and 1000 masked validation rows. Four shared multipliers minimize $x^\top\widetilde G_{\rm val}x-2\widetilde\gamma_{\rm val}^\top x$, whose population objective equals prediction error up to a constant. SCAD uses eight single-node practical rounds; truth evaluates errors only after selection. Distributed fitting additionally uses 2400 assigned observations.

\begin{center}\small
\begin{tabular}{rrrrrrr}\toprule
$s$ & Latent AR & $N_{\rm cal}$ & Admitted & Covered & Mean radius & Distributed L2\\\midrule
1 & 0.0 & 10000 & 10/10 & 10/10 & 0.479 & 0.0042 \\
1 & 0.0 & 50000 & 10/10 & 10/10 & 0.175 & 0.0042 \\
4 & 0.0 & 10000 & 0/10 & -- & -- & -- \\
4 & 0.0 & 50000 & 10/10 & 10/10 & 2.603 & 0.0698 \\
4 & 0.3 & 10000 & 0/10 & -- & -- & -- \\
4 & 0.3 & 50000 & 0/10 & -- & -- & -- \\
\bottomrule\end{tabular}\end{center}
\begin{center}\small
\begin{tabular}{rrrll}\toprule
$s$ & Latent AR & $N_{\rm cal}$ & Trusted Lasso L2 $\pm$ SE & Trusted SCAD L2 $\pm$ SE\\\midrule
1 & 0.0 & 10000 & $0.0317\pm0.0027$ & $0.0106\pm0.0063$ \\
1 & 0.0 & 50000 & $0.0133\pm0.0012$ & $0.0031\pm0.0020$ \\
4 & 0.0 & 10000 & $0.1020\pm0.0042$ & $0.0591\pm0.0101$ \\
4 & 0.0 & 50000 & $0.0447\pm0.0022$ & $0.0241\pm0.0039$ \\
4 & 0.3 & 10000 & $0.1059\pm0.0051$ & $0.0658\pm0.0146$ \\
4 & 0.3 & 50000 & $0.0503\pm0.0020$ & $0.0182\pm0.0030$ \\
\bottomrule\end{tabular}\end{center}
All trusted-only selections succeed (10/10 per cell). Distributed radii/errors condition on admission; empty cells indicate no certified selection, and every admitted candidate is covered. Four independent signals require 50000 calibration observations; the correlated design admits none at either size.

\begin{center}\small
\begin{tabular}{rrrrrr}\toprule
$s$ & Latent AR & $N_{\rm cal}$ & Moments (s) & Certificate (s) & Two baseline grids (s)\\\midrule
1 & 0.0 & 10000 & 0.018 & 0.001 & 0.029 \\
1 & 0.0 & 50000 & 0.092 & 0.001 & 0.034 \\
4 & 0.0 & 10000 & 0.018 & 0.001 & 0.026 \\
4 & 0.0 & 50000 & 0.094 & 0.001 & 0.030 \\
4 & 0.3 & 10000 & 0.019 & 0.001 & 0.031 \\
4 & 0.3 & 50000 & 0.092 & 0.001 & 0.035 \\
\bottomrule\end{tabular}\end{center}
Times are ten-seed medians on one host with one BLAS thread. Streamed moments use batches of at most 5000, excluding data generation. Certificate timing covers moment correction and four bounds, excluding accumulation; baseline timing covers both fitting/selection grids, excluding shared moments. Calibration uses a dense $240\times240$ double-precision Gram matrix: 460800 bytes (0.439 MiB), plus other arrays and process overhead. Distributed fitting is matrix-free.

\section{Model Extensions}\label{supp-contents-14}
For coordinate-dependent, independent MCAR masks with known $q_l$, the unbiased corrections of \citet[Example~3]{suppbib-loh2012} are $G_{ll}=Z_l^2/q_l$ and $G_{lk}=Z_lZ_k/(q_lq_k)$ for $l\ne k$, and $d_l=Z_lY/q_l$. The exact score variance becomes
\begin{equation*}
 \E\phi_l^2=\frac{\E X_l^2\varepsilon^2}{q_l}
 +\sum_{k\ne l}\frac{1-q_k}{q_lq_k}(\thstar_k)^2\E X_l^2X_k^2.
\end{equation*}
The deterministic path theorems transfer to this quadratic loss once appropriate concentration and curvature events are supplied. 

For mean-zero measurement error $Z=X+W$ independent of $X$, subtracting its known covariance $\Sigma_W$ from the empirical Gram gives an unbiased estimate of $\Sigma$~\cite[Example~1]{suppbib-loh2012}. With $\E W\varepsilon=0$, the empirical response moment is also unbiased. The remaining proofs require a new score-variance and tail analysis; the present MCAR variance formula is not reused. For complete-data GLMs, a nonlinear Hessian can vary along the segment between two sparse iterates. A uniform restricted Hessian condition can potentially replace the fixed quadratic condition, but controlling the surrogate score and whole homotopy domain requires additional analysis. For missing GLMs, inverse-probability substitution inside a nonlinear mean function need not be unbiased. For example, in a scalar Poisson model with mean $\exp(x\theta)$ and $Z=M_Xx$, conditioning on $X=x$ gives $\E\{\exp(Z\theta/q)\mid X=x\}=(1-q)+q\exp(x\theta/q)$.
For $0<q<1$ and $x\theta\ne0$, strict convexity makes this larger than $\exp(x\theta)$. Thus the linear-model correction cannot simply be inserted into this mean function.

\section{Reproducibility}\label{supp-contents-15}
The accompanying reproducibility archive contains editable manuscripts, analysis code, frozen result records, seeds, design records, environment metadata, and a SHA-256 manifest. The public data are redistributed with CC BY 4.0 attribution and a download script. The supplied scripts regenerate the result tables and figures. The ancillary CSV tables accompany this supplement; additional plotted views of the tabulated sensitivity, path and cost results remain in the reproducibility archive.

OpenAI Codex assisted with text, proof drafts, code, and figures.

\setlength{\bibsep}{2pt}

\vskip .65cm
\noindent Xiaofei Wu. Yunnan Key Laboratory of Statistical Modeling and Data Analysis, School of Mathematics and Statistics, Yunnan University, Kunming, China.
\par\noindent E-mail: \href{mailto:xfwu1016@ynu.edu.cn}{xfwu1016@ynu.edu.cn}
\par\medskip
\noindent Rongmei Liang. Department of Statistics and Data Science, Southern University of Science and Technology, Shenzhen 518055, China.
\par\noindent E-mail: \href{mailto:liang_r_m@163.com}{liang\_r\_m@163.com}
\par\medskip
\noindent Jian Qing Shi (corresponding author). Guangdong Provincial/Zhuhai Key Laboratory of Interdisciplinary Research and Application for Data Science, Beijing Normal-Hong Kong Baptist University, Zhuhai, 519087, China.
\par\noindent E-mail: \href{mailto:jianqingshi@bnbu.edu.cn}{jianqingshi@bnbu.edu.cn}

\end{document}